\newtheorem{lemma}{Lemma}
\newtheorem{theorem}{Theorem}
\newtheorem{proposition}{Proposition}
\newtheorem{definition}{Definition}
\newtheorem{corollary}{Corollary}
\def\leq{\leqslant}\def\le{\leq}
\def\geq{\geqslant}\def\ge{\geq}
\def\emptyset{\varnothing}
\NewDocumentCommand\set{sm}{\IfBooleanTF#1{\{{#2}\}}{\left\{{#2}\right\}}}
\NewDocumentCommand\ceil{sm}{\IfBooleanTF#1{\lceil{#2}\rceil}{\left\lceil{#2}\right\rceil}}
\NewDocumentCommand\floor{sm}{\IfBooleanTF#1{\lfloor{#2}\rfloor}{\left\lfloor{#2}\right\rfloor}}
\NewDocumentCommand\pare{sm}{\IfBooleanTF#1{({#2})}{\left({#2}\right)}}
\NewDocumentCommand\range{smm}{\IfBooleanTF#1{\set*{{#2},\dots,{#3}}}{\set{{#2},\dots,{#3}}}}
\NewDocumentCommand\card{sm}{\IfBooleanTF#1{|{#2}|}{\left|{#2}\right|}}
\title{Distributed Freeze Tag: a Sustainable Solution to Discover and Wake-up a Robot Swarm}
\author{Cyril Gavoille, Nicolas Hanusse, Gabriel Le Bouder, Taïssir Marcé}
\date{\vspace{-5ex}}
\begin{document}

\maketitle

\begin{abstract}
    The Freeze Tag Problem consists in waking up a swarm of robots starting with one initially awake robot. Whereas there is a wide literature of the centralized setting, where the location of the robots is known in advance, we focus in the distributed version where the location of the robots $\P$ are unknown, and where awake robots only detect other robots up to distance~$1$. Assuming that moving at distance $\delta$ takes a time $\delta$, we show that waking up of the whole swarm takes $O(\rho+\ell^2\log( \rho/\ell))$, where $\rho$ stands for the largest distance from the initial robot to any point of $\P$, and the $\ell$ is the connectivity threshold of $\P$. Moreover, the result is complemented by a matching lower bound in both parameters $\rho$ and $\ell$. We also provide other distributed algorithms, complemented with lower bounds, whenever each robot has a bounded amount of energy.
\end{abstract}

\section{Introduction}
In order to save energy in distributed systems, the paradigm of sleeping models and algorithms has received a recent attention. Nodes or robots are, by default, inactive or on standby: the energy consumption is negligible and these periods can be used to harvest energy. A robot becomes active only if it is required.

The Freeze Tag Problem (FTP) consists in waking up a swarm of $n$ inactive (or sleeping) robots as fast as possible assuming that one robot is initially active. 
To be woke up, a sleeping robot has to be reached by an awake robot, that are able to move in the plane.
Once a robot becomes active, it can help wake up other robots.

FTP has been introduced in a centralized setting, where the $n$ locations of the sleeping robots are known by the initial awake robot $s$. In this article, we propose a distributed version of the FTP: (1) the locations of the sleeping robots are not known in advance; (2) using a local snapshot, active robots only have a distance-1 visibility; and (3) robots need to be co-located to communicate. Note that due to the visibility constraint, it may be required to explore further than radius~1 to locate sleeping robots.

In order to get the most sustainable solution in a long-life perspective, we aim at minimizing the energy consumption. In particular, since moving is identified as an energy intensive task, the goal is to minimize \emph{the makespan}, that is the time to wake up every robot, assuming unitary speed of the robots, i.e., moving at distance $\delta$ takes $\delta$ unit of time. It is also assumed that robots use discrete snapshots only, since continuous snapshots may by not energy friendly.
We observe that since in our distributed model, robots do not know in general the locations of other robots, robot $s$ has to move at least $\Omega(D^2)$ to discover and reach the closest other robot if it is located at distance $D$ from $s$. Obviously, it can reach it with time $O(D^2)$ by following the trajectory of a spiral starting from $s$ for instance.

\subsection{State of the art}

Any solution to the FTP can be seen as a rooted tree spanning the a set of $n+1$ robot's positions, called the \emph{wake-up tree}. The root node corresponds to the position of~$s$, has one child, and the $n$ node are other robots have at most two children each. Each edge has a \emph{length},
representing the distance in the metric space between its endpoints. The makespan of the solution is nothing else than the (weighted) depth is the wake-up tree, and in particular a solution with optimal makespan has a wake-up tree with minimum depth.

\paragraph{Freeze Tag.}

Even for simple case, an optimal solution of the FTP can not be computed in polynomial time. Arkin \textit{et al.}~\cite{ABFMS06} showed that, even on star metrics, FTP is NP-hard. Moreover, they proved that getting an $5/3$-approximation is NP-hard for general metrics on weighted graphs. However, in~\cite{ABGHM03}, the authors give a polynomial time algorithm to get an $O(1)$ for general graphs, assuming one sleeping robot per node.

In this paper, we focus on the \emph{geometric} setting, where robot movements have no restrictions and where the position set $\P$ lie on the Euclidean plane. Even in this setting, the problem remains NP-hard~\cite{AAJ17}. It has been shown by Yazdi \textit{et al.}~\cite{YBMK15} that a wake-up tree of makespan of at most~$10.07\rho$ can be (sequentially) computed in time $O(n)$, where $\rho$ is the largest distance from $s$ to any point of $\P$. The constant $10.1$, aka the \emph{wake-up constant} of the Euclidean plane, has been later on improved by Bonichon \textit{et al.}~\cite{BCGH24} to~$7.07$. More generally, they proved that the wake-up constant for \emph{any} norm is no more than~$9.48$, and that a corresponding wake-up tree can be computed in time~$O(n)$. Very recently, the upper bound dropped to independently to $5.06$ by~\cite{ABMS25} and to $4.63$~\cite{BGHO24}. It is known that $1+2\sqrt{2} \approx 3.83$ is a lower bound on the wake-up constant of the plane~\cite{BCGH24}.

A first step toward the computation of a wake-up tree without a global knowledge of the robot's positions is the \emph{on-line setting}~\cite{HNP06,BW20}. In this case, each robot only appears at a specified time that is not known in advance.
In~\cite{BW20}, the authors propose a solution with a competitive ratio of $1+\sqrt{2}$ w.r.t. to the optimal partial wake-up tree. 

\paragraph{Collaborative Exploration.}

Obviously, any collaborative exploration problem requires to have a team of active robots. Conversely, the \emph{distributed FTP} (dFTP for short), requires to explore some area to discover sleeping robots and thus is naturally connected to exploration of the plane with one or more robots. The survey of Das~\cite{Das19} contains many references such exploring problems in unknown graphs. In dFTP, we start with one active robot, and after few steps, we can have $k$ active robots. So, the task of discovering new robots can indeed be seen as a collaborative exploration task. The question of improving exploration with the use of $k>1$ robots is challenging and widely open.
For instance, it has been shown by Fraigniaud \textit{et al.}~\cite{FGKP06} that unweighted trees of diameter $D$, distributed exploration can be done in $O(D + n/ \log{k})$ unitary moves, even if robots are allowed to let some information at the nodes, whereas we could hope a speed-up of $k$ with $O(D + n/k)$ unitary moves. However, if the underlying graph is a two dimensional sub-grid of $n$ vertices, a grid with rectangular holes, Ortlof and Schindelhauer~\cite{Ortolf12} show how to get an optimal speedup of factor~$k$.

Discovering a robot at distance $D$ with $k$ co-located robots in the plane can be done within $\Theta(D+D^2/k)$ unitary moves per robot using either parallel spiral trajectories~\cite{Fricke16}, or by partitionning into a square of width $D$ into $k$ rectangles of width $D/k$ and height $D$. This problem, aka Treasure Hunt Problem or Cow-Path Problem, has been widely studied for $k = 1$ or with imprecise geometry~\cite{Bouchard20}. Interestingly, the authors of~\cite{FKLS12} have showed that the knowledge of an approximation of $k$ is required to get a time the bound $\Theta(D+D^2/k)$. The question of the knowledge of $k$ arises whenever the $k$ robots do not start the exploration together (as in the dFTP), or whenever the communication ability of the robots is limited. 

\paragraph{Energy Consumption.}

Some recent works deal with the problem of distributed tasks with some energy constraints or minimizing the energy consumption. In the sleeping model~\cite{Chatterjee20}, nodes or robots are either sleeping or are active. If a robot is sleeping, its consumption is assumed to be negligible. The state of each robot in synchronized rounds is given by a centralized schedule. Distributed tasks have been considered in the sleeping model like coloring or MIS computations. 

Energy Constrained Exploration Problems are perhaps more related to our problem. For instance, in the \emph{Piece-Meal Graph Exploration}, robots have a given budget for the energy and needs to refuel at a home base before exploring unknown parts of the graph. Note that a solution of the treasure hunt in a grid graph can be used for the plane. In~\cite{Duncan06}, the authors show that $n$-node and $m$-edge unweighted graphs of radius $R$ can be explored by $k = 1$ agent with an energy budget $B = (1+\alpha)\cdot R$ in $O(m + n/\alpha)$ unitary moves. For $k>1$,~\cite{Dynia06,Das24} deal with the Energy Constrained Depth First Search while minimizing the number $k$ of robots with an energy budget $O(R)$ per robot to explore a tree of radius $R$. Other distributed algorithms for energy constrained agents has been considered in~\cite{Bartschi20}. They show how to provide a feasible movement schedule for mobile agents for the Delivery Problem, where each agent has limited energy which constrains the distance it can move. Hence multiple agents need to collaborate to move and deliver the package, each agent handing over the package to the next agent to carry it forward. However, the positions of the agents are assumed to be known and the computation is centralized.

However, all these results related to collaborative exploration and energy consumption are not directly related to our setting, and essentially because we are face to the fact that, by definition of the problem, the number of active robots collaborating keeps on evolving, from~$1$ to~$n+1$.
\subsection{The Model}

\paragraph{Computational Model.}

We consider a swarm of robots in the Euclidian plane. Robots are all initially asleep, except one which we call the \emph{source} and denote $s$, initially located at position $p_0 = (0,0)$. The set of all robots is denoted by $\R = \{s, \rob_1, \dots, \rob_n\}$, robots $r_i$ being the initially asleep robots. We denote by $p_i$ the initial position of robot $\rob_i$, and by $\P$ the set of all initial positions of initially asleep robots, i.e., $\P = \range{p_1}{p_n}$. Robots are endowed with a visible light indicating their status (sleeping or awake), which can be observed by any active robot close enough (in its distance-1 vicinity). Sleeping robots are computationally inactive. They can neither move, observe, nor do any type of computation. Awake robots are aware of the absolute coordinate system, a same global clock and are able to locate and distinguish sleeping and awake robots in their vicinity, by using a function \look. They can also share variables of their memory with co-localated robots, and can operate computations based on the information they gathered previously. Finally, they can move in the plane, based on the computation they operated. Robots move at speed~$1$, which means it takes a time $\delta$ for a robot to move between any two points at Euclidean distance $\delta$. Thus the behaviour of a robot can be described in the standard Look-Compute-Move Model (see \cite{Flocchini19}).For synchronization purpose, robots can also wait for any duration of time at a fixed position. When an awake robot and a sleeping robot are co-located, the awake one can wake the other one up, and possibly share with it some information as previously said.

An algorithm $\A$ aiming to solve the dFTP is executed in parallel by all the awake robots.
The execution of $\A$ terminates when all active robots have terminated their computation and moves. The execution is valid if, when it terminates, all the initially asleep robots have been awakened. The makespan of an execution is the duration between the beginning of the algorithm and its termination, which basically counts the duration of the moving and waiting actions of robots. 

Robots have a local unlimited memory. Typically, they can store the positions of some robots and their status (sleeping/awake) at the time they see them in their vicinity. Note that robots can give themselves a globally unique identifier as soon as they are awakened, by storing their initial position.
We will also consider the variant of the model where the robots are not free to move as long as they want, but are rather limited by some \emph{energy budget} $B$. In this variant, a robot can move for a total distance at most $B$.

\paragraph{Spread of $\P$.}

Our results are highly linked to the distribution of $\P$. Typically, if the distance between every pair of robots is much larger than~$1$, a robot may have the inaccurate belief that it is alone\footnote{The co-located robot that activated it excepted.}
which complicates a lot the resolution of the problem. To formally present our results as in Table~\ref{tab:contributions}, let us introduce some parameters related to $\P \subset \mathbb{R}^2$.

Given a real $\delta\ge 0$, and $\X \subset \mathbb{R}^2$, the \emph{$\delta$-disk graph} of $\X$ is the edge-weighted geometric graph whose vertex set is $\X$, two points $u,v \in \X$ being connected by an edge if and only if $u$ and $v$ are at (Euclidean) distance at most~$\delta$, and the weight of the edge corresponds to the distance between their endpoints.

Let $(\P,s)$ be an $n$-point set $\P\subset \mathbb{R}^2$ with a source $s \notin \P$.
The \emph{radius} of $(\P,s)$, denoted by $\rhostar$, is the largest distance from $s$ to any point of $\P$. 
The \emph{connectivity-threshold} of $(\P,s)$, denoted by $\ellstar$, is the least radius~$\delta$ such that the $\delta$-disk graph of $\P\cup\set{s}$ is connected. 
Given $\ell>0$, the \emph{$\ell$-eccentricity} of $(\P,s)$, denoted by $\ecc_\ell$, is the -- finite or infinite -- minimum weighted-depth of a spanning tree of the $\ell$-disk graph of $\P\cup\set{s}$ rooted at $s$. 

\paragraph{Problem Definition.}

Note that if $\rhostar \leq 1$, every robot is seen by the source $s$ and can be waken up in time $O(1)$ with energy budget $O(1)$ by solving the centralized version in $s$, e.g., as done in~\cite{BCGH24}. 
We shall suppose that $s$ starts with some information about the connectivity-threshold, the radius, and the number of asleep robots in $\P$ that it is supposed to wake up. More precisely, a tuple of values $(\ell, \rho, n)$ is given to $s$ at its start. Indeed, without any information, it is not difficult to see that $s$ cannot terminate (and thus cannot solve the dFTP), being unable to distinguished (for instance) the case where $n=0$ ($s$ is alone) from $n>0$, without moving for eternity. 
An algorithm with input $(\ell, \rho, n)$ and solving dFTP should terminate on any $n$-point set $(\P,s)$ such that $\ellstar \leq \ell$ and $\rhostar \leq \rho$.
Also note that we always have $\ellstar \leq \rhostar \leq n \ellstar$, as proven in Proposition~\ref{prop:admissible}. And, so a tuple $(\ell,\rho,n)$ is said \emph{admissible} if $\ell \le \rho \le n \ell$.
Also, for the sake of simplicity, we suppose that parameters $\ell$ and $\rho$ are positive integers.
This hypothesis does not actually change the problem (in term of asymptotic complexity of the makespan), since $(\ell, \rho, n)$ is admissible if and only if $(\ceil{\ell}, \ceil{\rho}, n)$ is.

The dFTP is formally defined as follows:
\begin{definition}[dFTP]
    A distributed algorithm $\mathcal A$ solves the dFTP if, for any admissible tuple $(\ell, \rho, n)$, and for any $n$-point set $\P$ with source $s$ such that $\rhostar \leq \rho$ and $\ellstar \leq \ell$, the execution of $\mathcal{A}$ on $\P$ at source $s$, given $(\ell, \rho, n)$, eventually wakes up all the robots and terminates. Moreover, it solves the dFTP with energy budget $B$ if the previous holds, assuming that the total movement lengths of each robot does not exceed $B$.
\end{definition}

\subsection{Contributions}

Our contributions are summarizes in~\Cref{tab:contributions}. 
We present three algorithms, called \Aseparator, \Agrid and \Awave.

The first algorithm \Aseparator solves dFTP, with no limits on the energy budget, and has makespan $O(\rho + \ell^2 \log(\rho/\ell))$. This result is complemented by a matching lower bound.

The two other algorithms consider the dFTP with energy budget $B$. We first show that no algorithm can solve the limited energy budget variant if $B < c\ell^2$, for some constant $c>0$. Then, for $B \in \Theta(\ell^2)$, \ie as little energy as possible to solve the dFTP, \Agrid achieves a makespan of $O(\ecc_\ell\cdot\ell)$. Using slightly more energy, namely $B \in \Theta(\ell^2\log{\ell})$, \Awave has a significantly lower makespan, which matches a second lower bound we introduce.

\begin{table}[htbp!]
    \centering
    \renewcommand{\arraystretch}{1.1}
    \begin{tabular}{c|c|c||c}
        Energy & Algorithm &  Makespan & Lower Bound\\
        \hline\hline
        unconstrained & \Aseparator & $O(\rho + \ell^2 \clog{(\rho/\ell)})$ - Th. \ref{th:upper} & $\Omega(\rho + \ell^2 \clog{(\rho/\ell)})$ - Th.~\ref{th:lower} \\\hline
        $< \pi(\ell^2-1)/2$ & - & - & unfeasible - Th.~\ref{th:impossibility}\\\hline
        $\Theta(\ell^2)$ & \Agrid & $O(\ecc_\ell \cdot \ell)$ - Th.~\ref{th:Agridmakespan} & \multirow{2}{*}{ $\Omega(\ecc_\ell + \ell^2 \log{(\ecc_\ell/\ell)})$ - Th.~\ref{th:lower-nrj}}\\\cline{1-3}
        $\Theta(\ell^2\log \ell)$ & \Awave & $O(\ecc_\ell + \ell^2 \log({\ecc_\ell / \ell}))$  - Th.~\ref{th:Awavemakespan} & \\\hline
    \end{tabular}\\[.25ex]
    \caption{Complexity of the makespan for the dFTP given $(\ell,\rho,n)$.}
    \label{tab:contributions}
\end{table}

\paragraph{Roadmap.}
In Section~\ref{sec:building-blocks} we present the main building blocks of our algorithm.
These are high-level procedures we use to describe \Aseparator in Section~\ref{sec:without-energy}. In Section~\ref{sec:energy} \Aseparator will also be used as a building block for algorithms with constrained energy \Agrid and \Awave.
Due to space limitations, the majority of the descriptions of our algorithms and proofs is provided in a separate appendix. 

\section{Building Blocks}
\label{sec:building-blocks}

The main parts of our algorithms are based on exploring regions, computing and realizing wake-up trees, and organizing teams of robots to explore regions in parallel. The recruitment of a team is based on a sampling of point sets. To avoid exploring large empty regions, we use geometric separators.

\subsection{Exploration (\explore)}
\label{sec:section-explore}

One central task robots are led to realize is the exploration of a given region, in order to collect the positions of all the sleeping robots in that region. For the sake of simplicity, we only consider rectangular regions, whose orientation is parallel to the axis.
Note that it can be used to explore any shape inscribed in a rectangle.
We present in~\Cref{sec:explore-details} a simple method for exploring a given rectangle with a single robot, using function \look, which can be adapted to a team of robots. In this extension, every robot explores a sub-rectangle before moving to a meeting point where they can share their variables.
\begin{restatable}[\explore]{lemma}{rstexplore}
    \label{lem:collaborative-exploration}
    There exists a procedure $\explore$ such that, 
    for any rectangle $\squarereg$ of dimensions $w \times h$, for any two positions $p,p' \in \squarereg$,
    the execution of $\explore(\squarereg,p')$ at time $t$, by a team of $k$ robots $\team = \{r_1,\dots,r_k\}$ initially co-located at position $p$ guarantees:
    \begin{itemize}
        \item it terminates at time $t'$ with $(t' - t) \in O(wh/k + w + h)$; and
        \item at time $t'$, robots of $\team$ have gathered the initial positions of all robots of $\squarereg$ that are asleep at $t'$.
    \end{itemize}
\end{restatable}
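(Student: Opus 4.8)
The plan is to reduce $\explore$ to $k$ independent single-robot lawn-mower sweeps followed by a rendez-vous at $p'$. Assume w.l.o.g.\ that $w\ge h$. At the start the $k$ awake, co-located robots order themselves by identifier and agree on the partition of $\squarereg$ into $k$ axis-parallel slabs $\squarereg_1,\dots,\squarereg_k$ of dimensions $(w/k)\times h$, with $r_i$ taking $\squarereg_i$. Then each $r_i$ (a) walks straight from $p$ to a corner of $\squarereg_i$; (b) sweeps $\squarereg_i$ in boustrophedon fashion along $\lceil w/k\rceil+1$ equally spaced vertical passes (one along each vertical side of $\squarereg_i$), consecutive passes being joined by a horizontal shift and at distance $\le 1$ from one another, invoking $\look$ at $\lceil h\rceil+1$ equally spaced points of each pass (including its two endpoints, consecutive points at distance $\le 1$) and recording, in the global frame, the positions of all robots it sees asleep; (c) walks straight to $p'$ and waits. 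Since the robots share a global clock and know $w,h,k$, they can precompute a common time $t'$ by which all of them are provably back at $p'$; at $t'$, being co-located, they merge their recorded sets and terminate.

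For the second item I would argue as follows. Let $r$ be a robot whose initial position lies in $\squarereg$ and which is still asleep at $t'$. A sleeping robot never moves, and an awakened robot stays awake, so throughout $[t,t']$ the robot $r$ sat, asleep, at its initial position, which lies in exactly one slab $\squarereg_j$ (fix a tie-breaking rule on slab boundaries); in particular it was asleep while $r_j$ swept $\squarereg_j$. By the spacing of the passes, that position is within horizontal distance $\tfrac12$ of some pass of $r_j$, and by the spacing of the $\look$ points along that pass it is within vertical distance $\tfrac12$ of a point where $r_j$ invoked $\look$, hence within Euclidean distance $1/\sqrt2<1$ of such a point. Thus $r_j$ saw $r$ asleep and recorded its (initial $=$ current) position, which therefore belongs to the set merged at $p'$ at time $t'$.

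For the first item I would simply add up lengths. Both straight travels of $r_i$ have their endpoints in $\squarereg$, hence cost at most $\mathrm{diam}(\squarereg)=\sqrt{w^2+h^2}\le w+h$ each; the sweep of $\squarereg_i$ has length at most $(\lceil w/k\rceil+1)h+w/k\le wh/k+2h+w/k=O(wh/k+w+h)$; and $\look$ invocations and local computation take no time and do not count toward the makespan. So every robot is back at $p'$ within $O(wh/k+w+h)$ after $t$, and one may set $t'=t+\Theta(wh/k+w+h)$, whence $t'-t\in O(wh/k+w+h)$. I expect the only genuinely delicate points to be (i) making the coverage claim of the second item fully rigorous — turning distance-$1$ visibility together with \emph{discrete} snapshots into ``the $\look$ positions of $r_j$ collectively see all of $\squarereg_j$'', which is exactly what forces the explicit $\le 1$ spacing of passes and of $\look$ points above; and (ii) checking that the length estimates survive the degenerate regimes, e.g.\ $w/k<1$ or $h<1$ (the sweep is still a handful of passes) and $k$ much larger than $wh$ (where the $w+h$ travel term dominates while $wh/k\le 1$).
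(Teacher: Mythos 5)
Your proof is correct and follows essentially the same route as the paper: partition the rectangle into $k$ congruent strips, have each robot perform a boustrophedon sweep with discrete snapshots spaced so that every point of its strip lies within distance $1$ of some $\look$ position, then reconvene at $p'$ at a precomputed common time to merge the recorded positions. The only differences are immaterial constants (you space passes and snapshots at distance $1$ where the paper uses $\sqrt2$, and you slice along the other axis), and your explicit argument that a robot still asleep at $t'$ never moved is a welcome bit of extra care.
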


\subsection{Realization of a Central Wake-up-Tree}
\label{sec:wakeup-propagation}

In~\cite{YBMK15,BCGH24} the authors show that, knowing the initial positions of robots, it is possible to compute a wake-up tree in linear time, whose makespan is an approximation of the optimal. Yet, in the distributed setting, some specific problems may arise. Indeed, two awake robots $\rob_i$ and $\rob_j$ may compute independently two wake-up trees on different but not disjoint subsets $X_i$ and $X_j$ of $\P$. If $\pos[k]$, the position of $\rob_k$, belongs to $X_i \cap X_j$, then $\rob_i$ and $\rob_j$ are said \emph{in conflict}. 
Both need to use $\rob_k$ for their wake-up trees and only the first robot to reach $\rob_k$ is able to do it (since $\rob_k$ will then move). The second one only find out that $\rob_k$ has left its initial position $\pos[k]$ when itself or one of its descendent arrives close to $\pos[k]$. In this situation the consistency of the wake-up tree is broken, and a new wake-up tree has to be computed for the sub-tree rooted at $\rob_k$. Since this situation can be repeated an arbitrarily large number of times for a set $Y = X_i \cap X_j$, there is no evidence that $\rob_j$ can wake up $X_j \setminus Y$ in a time proportional to the diameter of $X_j \setminus Y$. To deal with that issue, we ensure in our algorithms that wake-up trees are computed in separate regions of the plane, and that at most one robot computes a wake-up tree in a given region. This is formalized in~\Cref{lem:centralized-awakening}, detailed in~\Cref{sec:wut-realization}.

\begin{restatable}[Distributed Makespan]{lemma}{centralized}
\label{lem:centralized-awakening}
    Given a square region $\squarereg$ of width $R$ and a robot $\rob$ positioned in the center of $\squarereg$, knowing both $\squarereg$ and a set of sleeping robots $\Sleeping$ whose initial positions are in $\squarereg$,
    if no robots other than $\set{\rob}\cup\Sleeping$ takes action in $\squarereg$,
    then $\rob$ can wake-up all robots of $\Sleeping$ in time $5R$.
\end{restatable}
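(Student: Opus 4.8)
The plan is to combine a known centralized wake-up-tree construction with a careful bookkeeping of two phases: first getting the robot $\rob$ to reach and awaken enough robots so that the existing $O(1)$-approximation machinery applies, and then letting that machinery run inside $\squarereg$. The starting observation is that $\rob$ sits at the center of a square of width $R$, so every point of $\squarereg$ — and in particular every point of $\Sleeping$ — is within Euclidean distance $R/\sqrt{2} < R$ of $\rob$. Since $\rob$ already knows the initial positions of all of $\Sleeping$ (this is given), no exploration is needed here; the only issue is the \emph{makespan} of waking them all, i.e.\ the weighted depth of a wake-up tree rooted at $\rob$ spanning $\set{\rob}\cup\Sleeping$.

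The key step is to invoke the results of~\cite{YBMK15,BCGH24} (the wake-up-constant bounds, improved to $\approx 7.07$ or better): given the point set, one can compute in linear time a wake-up tree rooted at $\rob$ whose makespan is at most $c\cdot\rho'$, where $\rho'$ is the largest distance from $\rob$ to any point of $\Sleeping$ and $c$ is the wake-up constant of the Euclidean plane. Here $\rho' \le R/\sqrt2$. With $c \le 7.07$ we would get makespan at most $7.07\cdot R/\sqrt2 \approx 5.0\,R$, which is the claimed bound $5R$. So the body of the proof is: (i) state that $\rho' \le R/\sqrt2$; (ii) apply the cited construction to obtain a wake-up tree $T$ of makespan $\le c\,\rho' \le c R/\sqrt2 \le 5R$; (iii) observe that $\rob$ realizes $T$ by having each robot, upon being awakened, move straight to its (at most two) children in $T$ — this is exactly the ``realization'' of a wake-up tree and the makespan of the execution equals the weighted depth of $T$.

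The one genuine subtlety — and the reason the hypothesis ``no robots other than $\set{\rob}\cup\Sleeping$ takes action in $\squarereg$'' is needed — is the conflict issue flagged in~\Cref{sec:wakeup-propagation}: if some other awake robot entered $\squarereg$ and moved one of the robots of $\Sleeping$ away from its recorded position, the tree $T$ would become inconsistent and a descendant of that vertex would arrive at an empty point, forcing recomputation and destroying the time bound. Under the stated hypothesis this cannot happen: every vertex of $T$ is found exactly where $\rob$ expects it, so each edge of $T$ is traversed exactly once in the straight-line fashion, and the total time along any root-to-leaf path is the sum of the edge lengths on that path, i.e.\ at most the weighted depth of $T$. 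I expect the main obstacle to be purely presentational: making precise what ``realizing'' a wake-up tree means in the Look-Compute-Move model (each awakened robot computes $T$ itself from the shared data, or receives its subtree from its parent) and checking that no waiting is incurred, so that the execution makespan is \emph{exactly} the weighted depth and hence $\le 5R$; all of this is routine given the cited centralized results and the isolation hypothesis.
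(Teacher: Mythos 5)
Your proposal takes essentially the same route as the paper's proof: both inscribe $\squarereg$ in the disk of radius $R/\sqrt{2}$ centered at $\rob$, invoke the wake-up constant $5\sqrt{2}\approx 7.07$ from~\cite{BCGH24} to obtain a wake-up tree of weighted depth $5\sqrt{2}\cdot R/\sqrt{2}=5R$, and use the isolation hypothesis precisely to exclude conflicts during the realization of that tree. The argument is correct as stated.
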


More generally, if robots do not know the positions of sleeping robots in $\squarereg$, it is possible for a robot to discover them before computing a wake-up tree. This process will be used in \Agrid, and is presented in the following Corollary, detailed in Section~\ref{sec:local-sync}.

\begin{restatable}[Explore and Wake up]{corollary}{simplesquarewup}
    \label{cor:simple-square-wup}
    Given a square region $\squarereg$ of width $R$ containing at least one awake robot, if no awake robot goes through the border $\squarereg$, it is possible to wake-up all sleeping robots of $\squarereg$ in time $R^2 + (10 + \sqrt{2})R$.
\end{restatable}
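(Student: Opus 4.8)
The idea is to run two successive phases with a single designated robot: a \emph{discovery} phase, in which it sweeps $\squarereg$ and records the initial positions of every robot of $\squarereg$ that is still asleep, and a \emph{wake-up} phase, in which it realizes a central wake-up tree for that set via \Cref{lem:centralized-awakening}. Let $c$ be the center of $\squarereg$ and let $\rob$ be an awake robot in $\squarereg$; if there are several, we let one of them play the role of $\rob$ while the others stay put, so that they take no \emph{action} in $\squarereg$ in the sense of \Cref{lem:centralized-awakening}. Note that this single-robot configuration is also the one maximizing the cost of the discovery phase, so it is the binding case for the bound.

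\emph{Discovery.} The robot $\rob$ executes $\explore(\squarereg, c)$ (a one-robot team, $\squarereg$ of dimensions $R\times R$, meeting point $c$). By \Cref{lem:collaborative-exploration}, this terminates with $\rob$ back at $c$ and holding the initial positions of all robots of $\squarereg$ asleep at that time; moreover the sweep described in \Cref{sec:explore-details} is carried out without leaving $\squarereg$, so $\rob$ does not cross the border. Since $\explore$ only uses \look and thus wakes nobody, and since by hypothesis no awake robot crosses the border of $\squarereg$, no robot of $\squarereg$ was woken during this phase; hence the set $\Sleeping$ obtained is exactly the set of all sleeping robots of $\squarereg$. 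Unfolding the constants of the single-robot exploration of an $R\times R$ square and of the final move to $c$, the phase takes time at most $R^2 + (5+\sqrt 2)\,R$.

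\emph{Wake-up.} Now $\rob$ sits at the center of $\squarereg$ and knows both $\squarereg$ and $\Sleeping$, all of whose initial positions lie in $\squarereg$. By hypothesis no awake robot enters or leaves $\squarereg$, and the realization of $\rob$'s wake-up tree stays inside the convex hull of $\set{c}\cup\Sleeping\subseteq\squarereg$, so no robot other than $\set{\rob}\cup\Sleeping$ acts in $\squarereg$ during this phase; \Cref{lem:centralized-awakening} then applies and $\rob$ wakes up all of $\Sleeping$ in additional time at most $5R$ (nothing to do if $\Sleeping=\emptyset$). Summing the two phases yields $R^2 + (5+\sqrt 2)R + 5R = R^2 + (10+\sqrt 2)R$.

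\textbf{Main obstacle.} There is little conceptual difficulty; the care goes into (i) checking that the hypotheses of \Cref{lem:collaborative-exploration} and \Cref{lem:centralized-awakening} are genuinely met -- namely that the whole procedure (sweep plus wake-up tree realization) stays within $\squarereg$, so it is consistent with the ``no border crossing'' assumption, and that this assumption rules out interference throughout both phases -- and (ii) pinning down the precise constant $5+\sqrt 2$ of the discovery phase, which requires extracting from \Cref{sec:explore-details} the exact length of a single-robot sweep of a width-$R$ square with unit-range visibility together with the repositioning to $c$.
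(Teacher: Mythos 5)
Your two-phase structure (single-robot sweep, then \Cref{lem:centralized-awakening}) is the same as the paper's, and the total $R^2+(10+\sqrt2)R$ comes out right, but there is a genuine gap in how you handle the case of several awake robots. You write ``if there are several, we let one of them play the role of $\rob$ while the others stay put'', yet in this model that election is not implementable for free: the awake robots of $\squarereg$ need not be co-located, and since visibility has radius $1$ while $R$ can be much larger, they need not even be aware of one another. They therefore have no way to agree on who is ``the'' designated robot without first meeting. If two of them each believe they are the designated one, each runs a discovery sweep and then realizes its own wake-up tree on overlapping sets of sleeping robots, which is exactly the conflict situation described in Section~\ref{sec:wakeup-propagation} that the whole construction is designed to avoid; so this is not a cosmetic omission.

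The paper's proof resolves this by first sending \emph{every} awake robot of $\squarereg$ to the lower-left corner of $\squarereg$ (a location computable from $\squarereg$ alone), which costs at most $\sqrt2 R$; once co-located they can share variables and designate a unique explorer, and only that one subsequently acts. This gathering step is precisely what the $\sqrt2$ in the stated constant pays for: the paper's accounting is $\sqrt2 R$ (gather) $+\ (R^2+5R)$ (explore and return to the center) $+\ 5R$ (wake-up). Your accounting instead charges the $\sqrt2 R$ to the repositioning toward the center after the sweep, which leaves the leader election unaccounted for. Fixing your proof amounts to restoring the gathering phase and re-deriving the constant of the single-robot sweep so that explore-plus-return-to-center fits in $R^2+5R$.
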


\subsection{Geometric separators}

One central tool we use is \emph{geometric separators}.
Given a square $\squarereg$ of width $R$ centered at position $p$, the interior of $\squarereg$, including $\squarereg$, is noted $\squarereg^{\mathit{in}}$ and the remaining region of the plane is noted $\squarereg^{\mathit{out}}$.
Given a square $\squarereg$ with width $R > 2\ell$, we define the \emph{separator of $\squarereg$}, denoted $\sep(\squarereg)$, as the region bounded by $\squarereg$ and a square of center $p$ and width $R-2\ell$.
We immediately get:
\begin{lemma}
    \label{lem:separator-path}
    Let $\squarereg$ be a square.
    Let $\ell$ be the connectivity threshold of $(\P,s)$.
    Any path in the $\ell$-disk graph of $\P$ linking robots $\rob \in \squarereg^{\mathit{in}}$ and $\rob' \in \squarereg^{\mathit{out}}$ contains at least one robot located in $\sep(\squarereg)$.
\end{lemma}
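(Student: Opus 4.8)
The plan is to prove Lemma~\ref{lem:separator-path} by a simple connectivity argument about the geometry of $\sep(\squarereg)$. Let $\squarereg$ be a square of width $R>2\ell$ centered at $p$, and let $\squarereg'$ denote the concentric square of width $R-2\ell$, so that $\sep(\squarereg)$ is the closed annular region between $\squarereg'$ and $\squarereg$. The key geometric observation is that $\sep(\squarereg)$ \emph{separates} $(\squarereg')^{\mathit{in}}$ from $\squarereg^{\mathit{out}}$ in the plane: any continuous curve from a point strictly inside $\squarereg'$ to a point in $\squarereg^{\mathit{out}}$ must cross the entire ``thickness'' of the annulus, which is exactly $\ell$ on each of the four sides (by the definition $R - (R-2\ell) = 2\ell$, split evenly). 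Moreover, the separator is ``$\ell$-thick'' in the following precise sense: any point $q \in (\squarereg')^{\mathit{in}} = \squarereg' \cup (\squarereg')^{\mathit{in}}$ — wait, I should be careful: I want the statement that any point inside the inner square is at distance $>\ell$ from any point in $\squarereg^{\mathit{out}}$. Indeed, a point in $(\squarereg')^{\mathit{in}}$ and a point in $\squarereg^{\mathit{out}}$ differ by more than $\ell$ in at least one coordinate (since their $x$- or $y$-coordinates straddle the gap of width $\ell$ between a side of $\squarereg'$ and the corresponding side of $\squarereg$), hence are at Euclidean distance $>\ell$.

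**Next I would** run the argument along the path. Let $\pi = (\rob = u_0, u_1, \dots, u_m = \rob')$ be a path in the $\ell$-disk graph of $\P$ with $u_0 \in \squarereg^{\mathit{in}}$ and $u_m \in \squarereg^{\mathit{out}}$; recall consecutive vertices satisfy $\lVert u_{i} - u_{i+1}\rVert \le \ell$. If some $u_i$ already lies in $\sep(\squarereg) \subseteq \squarereg^{\mathit{in}}$ we are done, so suppose no vertex of $\pi$ is in $\sep(\squarereg)$. Then every $u_i$ is either in $(\squarereg')^{\mathit{in}}$ (the open part, i.e. $\squarereg^{\mathit{in}} \setminus \sep(\squarereg)$) or in $\squarereg^{\mathit{out}}$. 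Since $u_0 \in \squarereg^{\mathit{in}} \setminus \sep(\squarereg)$ (as $u_0 \notin \sep(\squarereg)$) and $u_m \in \squarereg^{\mathit{out}}$, there is a first index $i$ with $u_i \in (\squarereg')^{\mathit{in}} \setminus \sep(\squarereg)$ — hmm, let me restate more cleanly: there is an index $i$ with $u_i$ in the interior region and $u_{i+1}$ in $\squarereg^{\mathit{out}}$. But by the thickness observation above, $\lVert u_i - u_{i+1}\rVert > \ell$, contradicting that $\pi$ is a path in the $\ell$-disk graph. Hence some vertex of $\pi$ must lie in $\sep(\squarereg)$.

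**The one point requiring care** — and the only place a reader might want detail — is the ``thickness'' claim: that a point $q$ strictly inside $\squarereg'$ and a point $q'$ outside $\squarereg$ are at distance strictly greater than $\ell$. For this, write $p = (p_x, p_y)$; then $q' \in \squarereg^{\mathit{out}}$ means $|q'_x - p_x| > R/2$ or $|q'_y - p_y| > R/2$, while $q$ in the open interior of $\squarereg'$ means $|q_x - p_x| < (R-2\ell)/2$ and $|q_y - p_y| < (R-2\ell)/2$. In the coordinate where $q'$ exits $\squarereg$, say the $x$-coordinate, we get $|q_x - q'_x| \ge |q'_x - p_x| - |q_x - p_x| > R/2 - (R-2\ell)/2 = \ell$, so $\lVert q - q'\rVert \ge |q_x - q'_x| > \ell$. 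I do not expect any genuine obstacle here; the whole lemma is essentially a one-line topological fact (an annulus of width $2\ell$ separates the plane and no $\ell$-step can jump it), and the only subtlety is being consistent about which boundaries are included in $\sep(\squarereg)$, $\squarereg^{\mathit{in}}$, and $\squarereg^{\mathit{out}}$ so that ``in'' and ``out'' truly partition the plane, with $\sep(\squarereg)$ contained in the ``in'' part.
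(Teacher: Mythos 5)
Your proof is correct and is exactly the ``immediate'' argument the paper has in mind (the paper states the lemma with ``We immediately get'' and gives no written proof): the separator has thickness $\ell$ on each side, so any point strictly inside the inner square is at distance greater than $\ell$ from any point of $\squarereg^{\mathit{out}}$, and no edge of the $\ell$-disk graph can jump it. Your coordinate computation of the thickness claim and the first-crossing argument are both sound.
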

\begin{corollary}
    \label{cor:separator}
    If $\P \cap \sep(\squarereg) = \varnothing$, either $\P \subset \squarereg^{in}$ or $\P \subset \squarereg^{out}$.
\end{corollary}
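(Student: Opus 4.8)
The plan is to prove the contrapositive using Lemma~\ref{lem:separator-path}. Assume that $\P$ meets both sides of the separator, i.e. there are robots $\rob\in\P\cap\squarereg^{\mathit{in}}$ and $\rob'\in\P\cap\squarereg^{\mathit{out}}$; the goal becomes to exhibit a robot of $\P$ inside $\sep(\squarereg)$, which contradicts the hypothesis $\P\cap\sep(\squarereg)=\varnothing$.

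The first step is to produce a connecting path. Since $\ell$ is (at least) the connectivity threshold of $(\P,s)$, the $\ell$-disk graph of $\P\cup\set{s}$ is connected, so it contains a path $\pi$ from $\rob$ to $\rob'$; in the way this corollary is used the square $\squarereg$ is always built around the source or around an awake robot, so one may assume $s$ lies in the inner square $\squarereg^{\mathit{in}}\setminus\sep(\squarereg)$, i.e. $s\notin\sep(\squarereg)$. The second step is to apply Lemma~\ref{lem:separator-path} to $\pi$: its endpoints lie in $\squarereg^{\mathit{in}}$ and $\squarereg^{\mathit{out}}$ respectively, so $\pi$ must contain a vertex located in $\sep(\squarereg)$. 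Every vertex of $\pi$ is either a point of $\P$ or the point $s$, and $s\notin\sep(\squarereg)$, hence this vertex belongs to $\P$. Thus $\P\cap\sep(\squarereg)\neq\varnothing$, a contradiction, so $\P\subseteq\squarereg^{\mathit{in}}$ or $\P\subseteq\squarereg^{\mathit{out}}$. (Strictly speaking Lemma~\ref{lem:separator-path} is stated for the $\ell$-disk graph of $\P$, but its argument is purely geometric — consecutive points of a path are within distance $\ell$, and an edge cannot join a point of the inner square of width $R-2\ell$ to a point of $\squarereg^{\mathit{out}}$ since such a segment crosses $\partial\squarereg$ at Euclidean distance $\ge\ell$ from its inner endpoint — so it applies verbatim to $\P\cup\set{s}$.)

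The only delicate point, and the one deserving care in the write-up, is the bookkeeping around the source: we must be sure that the vertex handed to us by the separator lemma is a genuine element of $\P$ and not merely of $\P\cup\set{s}$. This is why one wants $\squarereg$ positioned so that $s$ sits inside $\squarereg^{\mathit{in}}$ away from $\sep(\squarereg)$ (true whenever $\squarereg$ is centered on $s$, and more generally in all our applications); given that, everything else is an immediate consequence of Lemma~\ref{lem:separator-path}, which is why the statement is phrased as a corollary.
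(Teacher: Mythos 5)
Your proof is correct and matches the paper's (implicit) argument: the paper states the corollary as an immediate consequence of Lemma~\ref{lem:separator-path}, via exactly the contrapositive you spell out (connectivity gives a path from $\squarereg^{\mathit{in}}$ to $\squarereg^{\mathit{out}}$, which the lemma forces to meet $\sep(\squarereg)$). Your extra bookkeeping about the source --- the connectivity threshold is defined for $\P\cup\set{s}$ while the lemma speaks of the $\ell$-disk graph of $\P$, so one must check the vertex produced in $\sep(\squarereg)$ is not $s$ itself --- is a genuine subtlety the paper glosses over, and your resolution (in every application $s$ lies well inside $\squarereg^{\mathit{in}}$, hence outside $\sep(\squarereg)$) is the right one.
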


\subsection{Distributed \texorpdfstring{$\ell$}{l}-sampling}
\label{sec:dfsampling}
To begin, we point out that the time to wake up every robot in a square region of width $R$, $O(R)$ (Lemma \ref{lem:centralized-awakening}), is often dominated by the time to discover sleeping robots in this region by a team of $k$ robots, which is $O(R^2/k)$ (Lemma~\ref{lem:collaborative-exploration}).
However, knowing an upper bound $\ell$ on the connectivity threshold can help thanks to \emph{$\ell$-samplings}.  
More formally, \emph{an $\ell$-sampling of a region $\squarereg$ is a subset of positions $\P' \subseteq \P \cap \squarereg$ that are pairwise at distance at least $\ell$}. 
We also say that $\squarereg$ is \emph{covered} by $\P'$, if any robot within $\squarereg$ is at distance at most $\ell$ from a robot whose position is in $\P'$.
Assuming that the $\ell$-sampling is given and $\P$ is covered by $\P'$, discovering the $n$ robots can be done in parallel in time $O(\ell^2)$, gathering $\P'$ at the source $s$ in time $R$ and waking-up every robot in time $O(R)$ using a centralized wake-up algorithm (Lemma \ref{lem:centralized-awakening}). 
In total, it takes  $O(R+\ell^2)$. 
The two main difficulties are how to efficiently compute in a distributive way an $\ell$-sampling, and how large is an $\ell$-sampling.

\begin{restatable}[$\ell$-sampling cardinality]{lemma}{rstsamplingsize}
\label{lem:disk-covering}
If $\P'$ is an $\ell$-sampling of a square region of width $R$, then $|\P'| \leq 16 R^2 / (\pi \ell^2)$.
\end{restatable}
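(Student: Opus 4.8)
The statement to prove is Lemma~\ref{lem:disk-covering}: if $\P'$ is an $\ell$-sampling of a square region $\squarereg$ of width $R$, then $\card{\P'} \le 16R^2/(\pi\ell^2)$. The plan is a standard volume/packing argument. Since the points of $\P'$ are pairwise at distance at least $\ell$, the open disks of radius $\ell/2$ centered at the points of $\P'$ are pairwise disjoint. Each such disk has area $\pi(\ell/2)^2 = \pi\ell^2/4$. The key geometric observation is that all these disks are contained in a bounded region: if $p \in \P' \subseteq \squarereg$, then the disk of radius $\ell/2$ around $p$ lies within the square $\squarereg$ enlarged by $\ell/2$ on each side, i.e.\ a square of width $R + \ell$.

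First I would set up the disjointness: for distinct $p,q \in \P'$, $\dist(p,q) \ge \ell$, so the open balls $B(p,\ell/2)$ and $B(q,\ell/2)$ are disjoint (any point in both would put $p$ and $q$ at distance $<\ell$). Next I would bound the total area: $\card{\P'} \cdot \pi\ell^2/4 = \sum_{p\in\P'}\operatorname{area}(B(p,\ell/2)) = \operatorname{area}\big(\bigcup_{p\in\P'}B(p,\ell/2)\big) \le \operatorname{area}(\squarereg^{+\ell/2}) = (R+\ell)^2$, where $\squarereg^{+\ell/2}$ denotes the square of width $R+\ell$ concentric with $\squarereg$. Rearranging gives $\card{\P'} \le 4(R+\ell)^2/(\pi\ell^2)$.

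The only remaining gap is that $4(R+\ell)^2/(\pi\ell^2)$ is not literally $16R^2/(\pi\ell^2)$; I need $(R+\ell)^2 \le 4R^2$, i.e.\ $R + \ell \le 2R$, i.e.\ $\ell \le R$. This should hold in the intended application because the $\ell$-sampling is of a region of width $R$ and one only invokes this for $R \ge \ell$ — indeed, if $R < \ell$ the region can contain at most one sampled point and the bound $16R^2/(\pi\ell^2)$ would need checking separately (it can fail, e.g.\ $R = \ell/2$ gives bound $4/\pi < 1$). So the main (minor) obstacle is to either state the hypothesis $R \ge \ell$ explicitly, or handle the degenerate case: when $R < \ell$, any two points of $\squarereg$ are at distance at most $R\sqrt{2} < \ell\sqrt2$, which does not immediately force $\card{\P'}\le 1$, so the cleanest fix is to assume $R \ge \ell$ (which is the regime in which the lemma is used) and then $(R+\ell)^2 \le (2R)^2 = 4R^2$ closes the bound. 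I expect the authors' proof to be exactly this packing argument, possibly with a slightly sharper enclosing-region estimate to absorb the constant cleanly.
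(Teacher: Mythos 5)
Your proof is correct and is essentially the paper's argument: both are the same packing bound via pairwise-disjoint disks of radius $\ell/2$ and area $\pi\ell^2/4$, the only cosmetic difference being that the paper keeps the disks clipped to $\squarereg$ and argues each retains at least a quarter of its area, whereas you enlarge the square to width $R+\ell$. Your observation that recovering the constant $16R^2/(\pi\ell^2)$ implicitly requires $R\gtrsim\ell$ is accurate and applies to the paper's version as well (the quarter-area claim also fails for $R$ much smaller than $\ell$), so no gap beyond what the paper itself leaves implicit.
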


\dfsampling is a distributed algorithm computing an $\ell$-sampling of a squared region $\squarereg$. 
A single robot or a co-located team of robots start from a same position of $\P \cap \squarereg$ and aims at finding an $\ell$-sampling of size $4 \ell$ (See  Figure \ref{fig:DFS-sep}). 
This sampling represents positions of robots to be later recruited to become a new team. 
This sampling is discovered using a specific exploration algorithm.
Since it may happen that exploring from one single position is not enough to reach a sample of size $4 \ell$, the team may use several starting positions for the exploration task, which we call the \emph{seeds} $\X \subset \squarereg$.
In Figure~\ref{fig:exploration-sep}, $\X$ is the set of points within an $\ell$-separator of a sub-square.

Roughly speaking, \dfsampling is based on a Depth-First Search in the $2\ell$-disk graph of $\P \cap \squarereg$, starting by the position of seeds from $\X$. 
Whenever a point $p$ is discovered, it is added to $\P'$ only if it is at distance greater than $\ell$ from any other points already added to $\P'$. 
This is required to guarantee that $\P'$ is indeed a $\ell$-sampling. 
A detailed description is in Section~\ref{sec:detail-dfs} as well as the proof of Lemma~\ref{lem:dfsampling}.

\begin{restatable}[\dfsampling]{lemma}{rstdfsampling}
    \label{lem:dfsampling}
    In a region $\squarereg$ of width $R$ containing a set of seeds $\X \subseteq \P$, a team $\team$ of robots knowing the awake robots of $\squarereg$ can compute a set of positions $\P'$ being an $\ell$-sampling of $\P$. The computation is done using \dfsampling in time:
    \begin{align*}
        O(\ell^2 \clog(|\P'|)) & \text{ if } |\team|=1 \text{ and }  \X = \set{p_s} \text{ and } R\geq 2\rhostar\\
        O(R+\ell |\P'|) & \text{ if } |\team|=\ell \text{ and } \X = \P \cap \sep(\squarereg)
    \end{align*}
   In both cases, either (1) $|\P'|=4\ell$; or (2) $|\P'| < 4\ell$ and $\squarereg$ is \emph{covered} by $\P'$.
\end{restatable}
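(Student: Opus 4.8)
The plan is to view \dfsampling as a depth-first traversal of the $2\ell$-disk graph $G$ of $\P\cap\squarereg$ (in the first regime with the artificial vertex $p_s$ added), started from the seed set $\X$: each time the traversal enters a vertex $v$, the team runs $\explore$ on the $4\ell\times 4\ell$ axis-parallel square centred at $v$, which by \Cref{lem:collaborative-exploration} reveals exactly the $G$-neighbours of $v$ --- they all lie in $B(v,2\ell)$ --- together with their statuses, and every newly discovered point of $\P\cap\squarereg$ is appended to $\P'$ precisely when its distance to the current $\P'$ is more than $\ell$. Because $\P'$ only grows and a point enters it only while being at distance $>\ell$ (hence $\geq\ell$) from everything already there, the returned set is an $\ell$-sampling of $\P$ regardless of when the traversal halts; so the real content is the dichotomy on $\card{\P'}$ and the two running times.

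For the dichotomy, the traversal is interrupted as soon as $\card{\P'}=4\ell$ --- outcome~(1) --- and otherwise runs until every vertex of $G$ reachable from $\X$ has been visited, and I would then show that every $\rob\in\P\cap\squarereg$ was discovered. If $\rob\in\sep(\squarereg)$ this is clear ($\rob\in\X$ in the second regime; in the first regime $\P\cup\set{s}\subseteq\squarereg$ since $R\geq2\rhostar$, and $G$ contains the connected $\ell$-disk graph of $\P\cup\set{s}$, so $\rob$ is reachable from $p_s$). Otherwise $\rob$ sits in the inner square $\squarereg^{\mathit{in}}\setminus\sep(\squarereg)$; pick a path in the $\ell$-disk graph of $\P\cup\set{s}$ from $\rob$ to $s$. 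In the first regime it already lies in $G$ and reaches $p_s$; in the second regime $s\notin\squarereg^{\mathit{in}}$, so by \Cref{lem:separator-path} this path meets $\sep(\squarereg)$, and cutting it at its first vertex $v$ in $\sep(\squarereg)$ gives --- applying \Cref{lem:separator-path} again to its prefixes --- a path staying inside $\squarereg$, hence a path of $G$ from $\rob$ to the seed $v$. So $\rob$ is always reachable from $\X$ in $G$ and thus discovered; at the moment of its first discovery it is either inserted into $\P'$ or found at distance $\leq\ell$ from the then-current $\P'$, so by monotonicity it is at distance $\leq\ell$ from the final $\P'$. Hence $\squarereg$ is covered by $\P'$, which is outcome~(2). (The sub-case $s\in\squarereg^{\mathit{in}}$ in the second regime is excluded by the way \dfsampling is invoked; should it arise, adding $p_s$ to $\X$ leaves the argument unchanged.)

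For the running times, a traversal step costs $O(\ell^2/k+\ell)$ for $k$ robots and a traversal edge costs at most $2\ell$. In the second regime ($k=\ell$) the team first sweeps $\sep(\squarereg)$ --- area $O(R\ell)$, covered by a constant number of $R\times 2\ell$ rectangles --- in time $O(R)$ to find the (possibly asleep) seeds, then performs the traversal along a single pass of $\sep(\squarereg)$, diving from each not-yet-visited seed into its $G$-component; across all the dives there are $O(\card{\P'})$ exploration steps at cost $O(\ell)$ each and $O(\ell\,\card{\P'})$ spent in moves (an Euler tour of a forest with $O(\card{\P'})$ nodes and edges $\leq 2\ell$), so the total is $O(R+\ell\,\card{\P'})$. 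In the first regime ($k=1$) a step costs $O(\ell^2)$ and an edge $O(\ell)$, and it remains to bound the number of exploration steps the lone robot makes before it has gathered $4\ell$ sample points or certified coverage, and to schedule the depth-first search so that the total exploration cost comes out as $O(\ell^2\,\clog{(\card{\P'})})$ rather than the crude $O(\ell^2\,\card{\P'})$.

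This last point is the main obstacle: the separator-based reachability argument makes correctness routine, but getting the single-robot makespan down to $O(\ell^2\,\clog{(\card{\P'})})$ --- i.e.\ understanding why one robot running the depth-first search pays only a $\clog{(\card{\P'})}$ overhead on its $O(\ell^2)$ per-step cost, and why a depth-first rather than a breadth-first order is what achieves this --- is where the delicate accounting sits.
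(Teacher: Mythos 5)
Your correctness and dichotomy arguments are sound and essentially the paper's: the $\ell$-sampling property follows by construction, and the covering claim for $\card{\P'}<4\ell$ is obtained from connectivity of the $\ell$-disk graph together with \Cref{lem:separator-path}, exactly as in the paper (the paper phrases it as a contradiction on the first non-covered vertex of a path to a seed, but the content is the same). Your second-regime bound $O(R+\ell\card{\P'})$ also matches the paper's accounting: $O(\ell)$ per neighborhood exploration with a team of $\ell$, $O(\ell\card{\P'})$ for the traversal moves, and the inter-seed travel charged to one tour of $\sep(\squarereg)$ (the paper makes this precise via the $\textsc{Sort}$ ordering by projections onto the border, bounding it by the perimeter $4R$ plus $2\ell$ per seed).

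However, the point you explicitly leave open --- the $O(\ell^2\clog\card{\P'})$ bound in the single-robot regime --- is a genuine gap, and the idea you are missing is that the team does \emph{not} stay of size $1$. In \dfsampling, whenever a position $p'$ is added to $\P'$ and the robot there is asleep, it is woken up and \emph{recruited into $\team$}, and the whole (co-located) team performs the subsequent traversal together. Hence the exploration of $B_p(2\ell)$ preceding the discovery of the $i$-th sample point is executed by a team of size $\Theta(i)$ and costs $O(\ell^2/i)$ by \Cref{lem:collaborative-exploration}; summing over $i=1,\dots,\card{\P'}$ gives the harmonic bound $\sum_i O(\ell^2/i)=O(\ell^2\clog\card{\P'})$. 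The movement cost is handled separately and is \emph{not} where the $\clog$ comes from: the DFS tree has $O(\card{\P'})$ edges of length at most $2\ell$, so forward moves and backtracking total $O(\ell\card{\P'})=O(\ell^2)$ since $\card{\P'}\leq 4\ell$. The role of the depth-first order is only to keep the growing team co-located along a single walk whose length is controlled by an Euler tour; the speed-up itself comes entirely from recruitment. Your reading of the first regime as ``a step costs $O(\ell^2)$'' fixes $k=1$ throughout and would indeed only yield the crude $O(\ell^2\card{\P'})$.
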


\section{Without energy constraint}\label{sec:without-energy}
\begin{figure}[h!]
    \captionsetup[subfigure]{position=b}
    \centering
    \subcaptionbox{\textsf{Initialization} \label{fig:initialization-sep}}{\includegraphics[width=.3\linewidth]{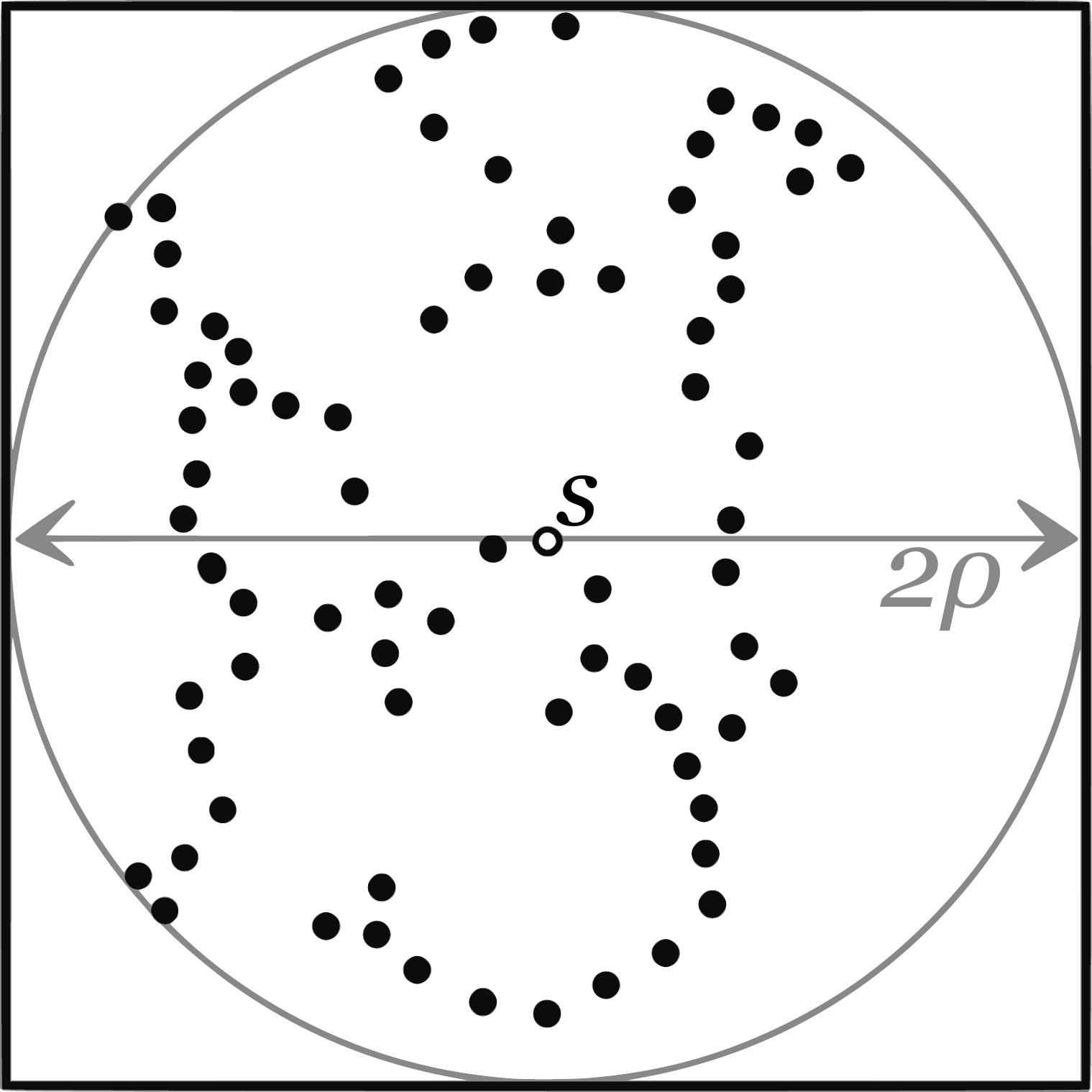}}
    \hfill
    \subcaptionbox{\dfsampling from $s$ to get a sample of size $4\ell$.\label{fig:DFS-sep}}{\includegraphics[width=.3\linewidth]{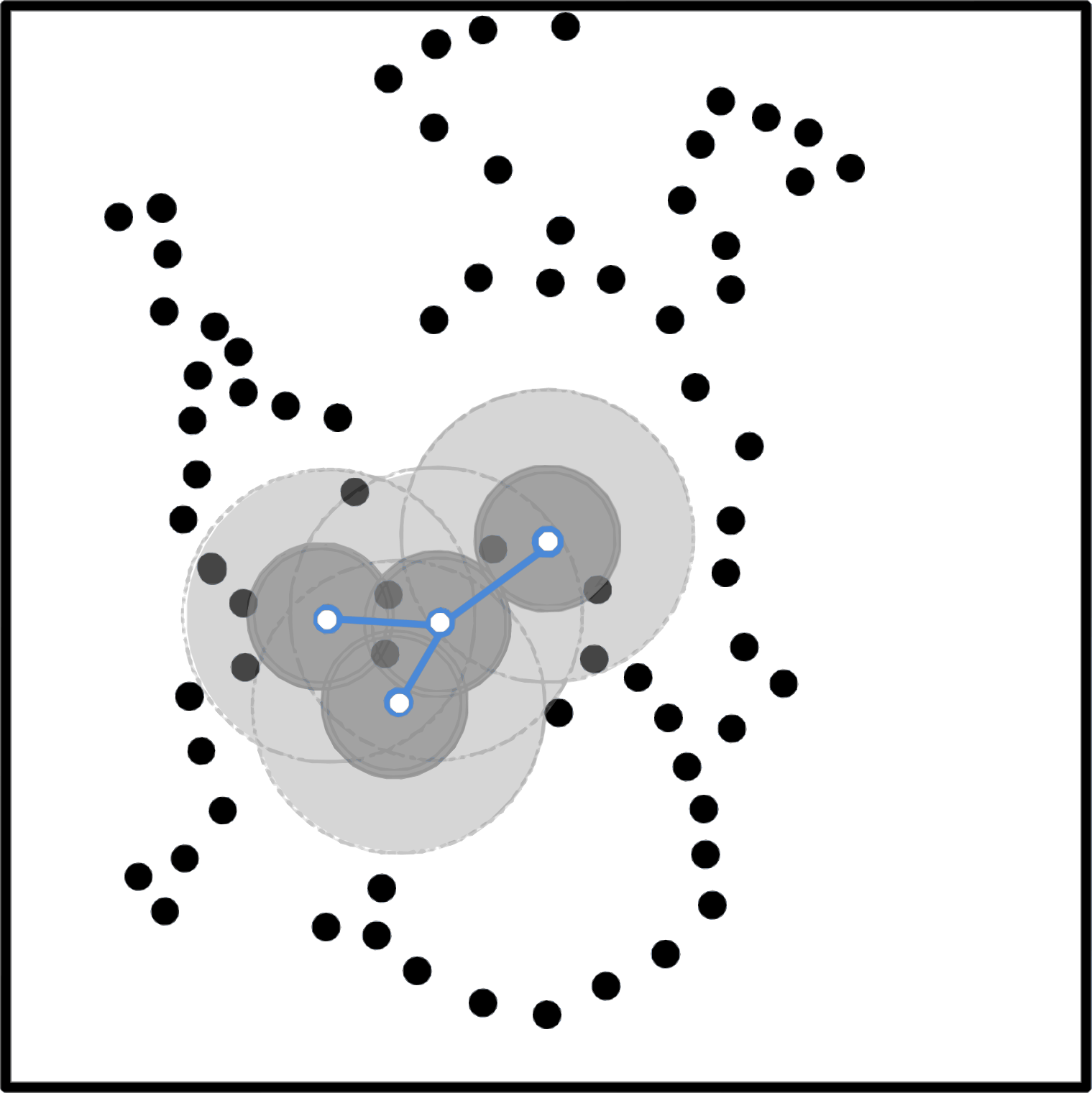}}
    \hfill
    \subcaptionbox{\textsf{Exploration} of $4$ separators by teams of $\ell$ robots. \label{fig:exploration-sep}}{\includegraphics[width=.305\linewidth]{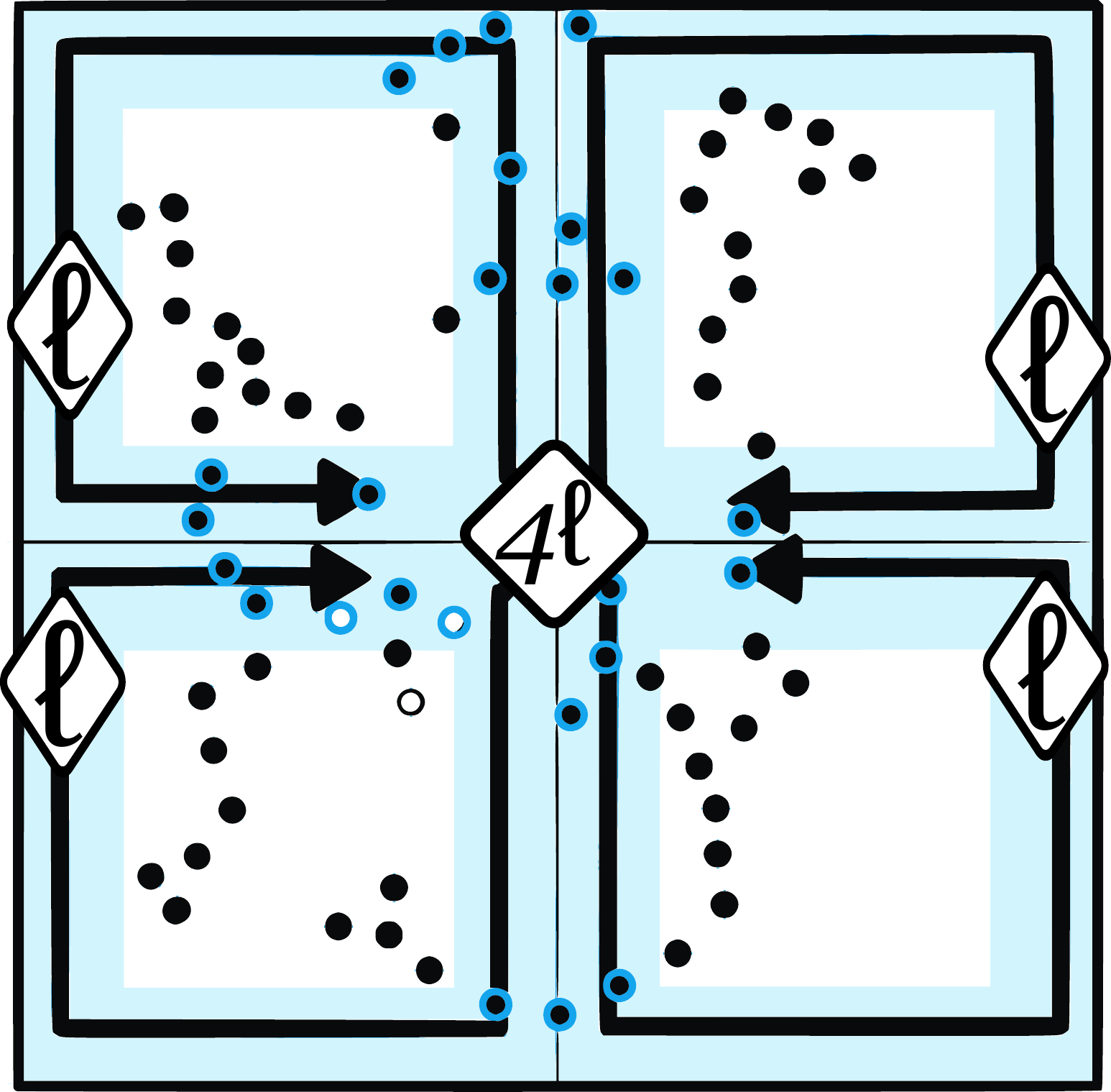}}
    \caption[First phases of \Aseparator]
        {First phases of \Aseparator; $\bullet$ sleeping robots; $\circ$ awake robots; \tikzbullet{blue-source}{black} sleeping seeds; \tikzbullet{blue-source}{white} awake seeds.}
\end{figure}

\begin{figure}[h!]
    \captionsetup[subfigure]{position=b}
    \centering
    
    \subcaptionbox{\textsf{Recruitment}: \dfsampling starting from $\X_i \subset \sep(\squarereg[i])$.
        \label{fig:parallel-DFS-sep}}{\includegraphics[width=.3\linewidth]{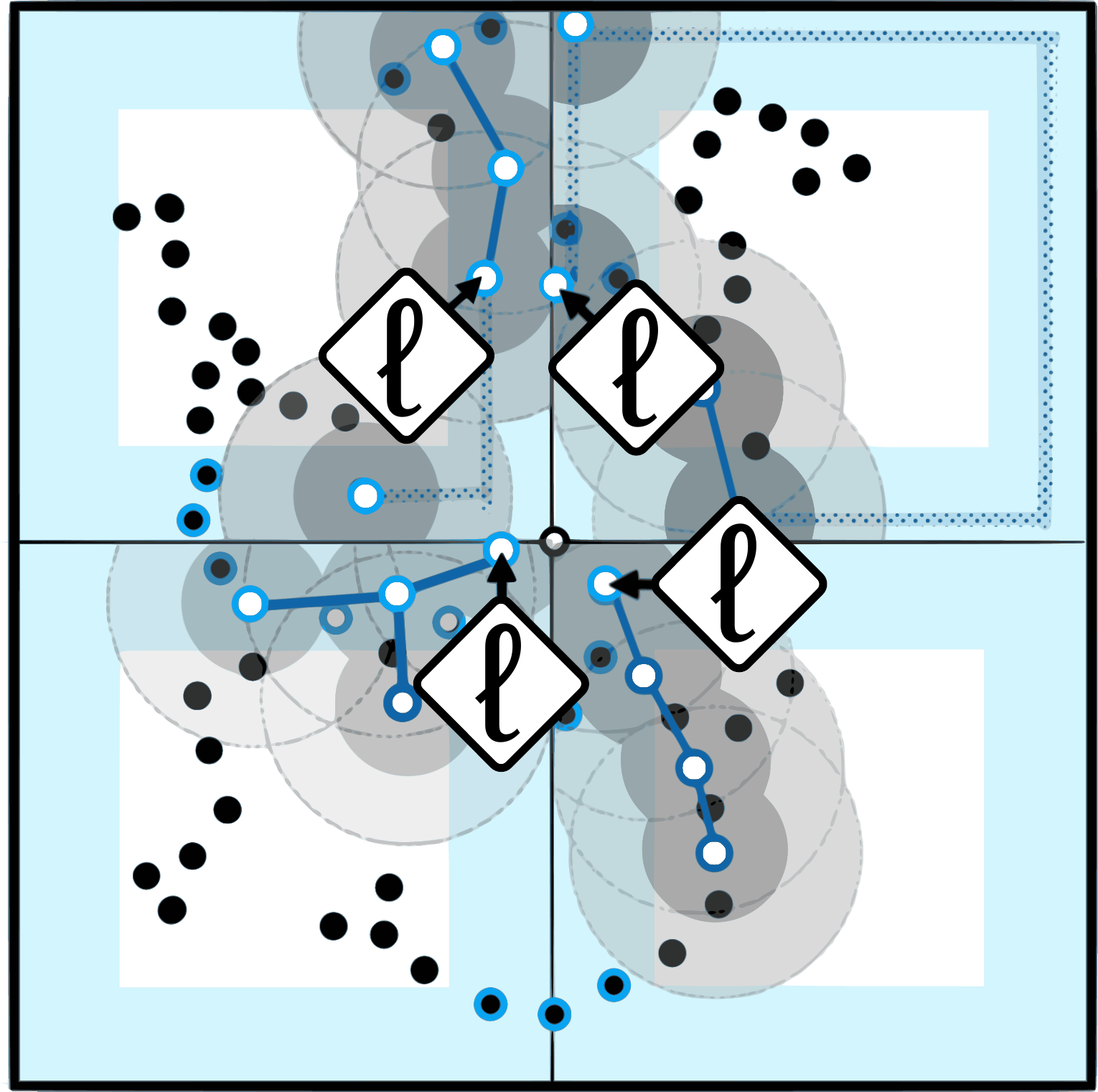}}
    \hfill
    \subcaptionbox{ $\team_i$ merge at the center of $\squarereg$ and share their variables.
        \label{fig:backhome-sep}}{\includegraphics[width=.3\linewidth]{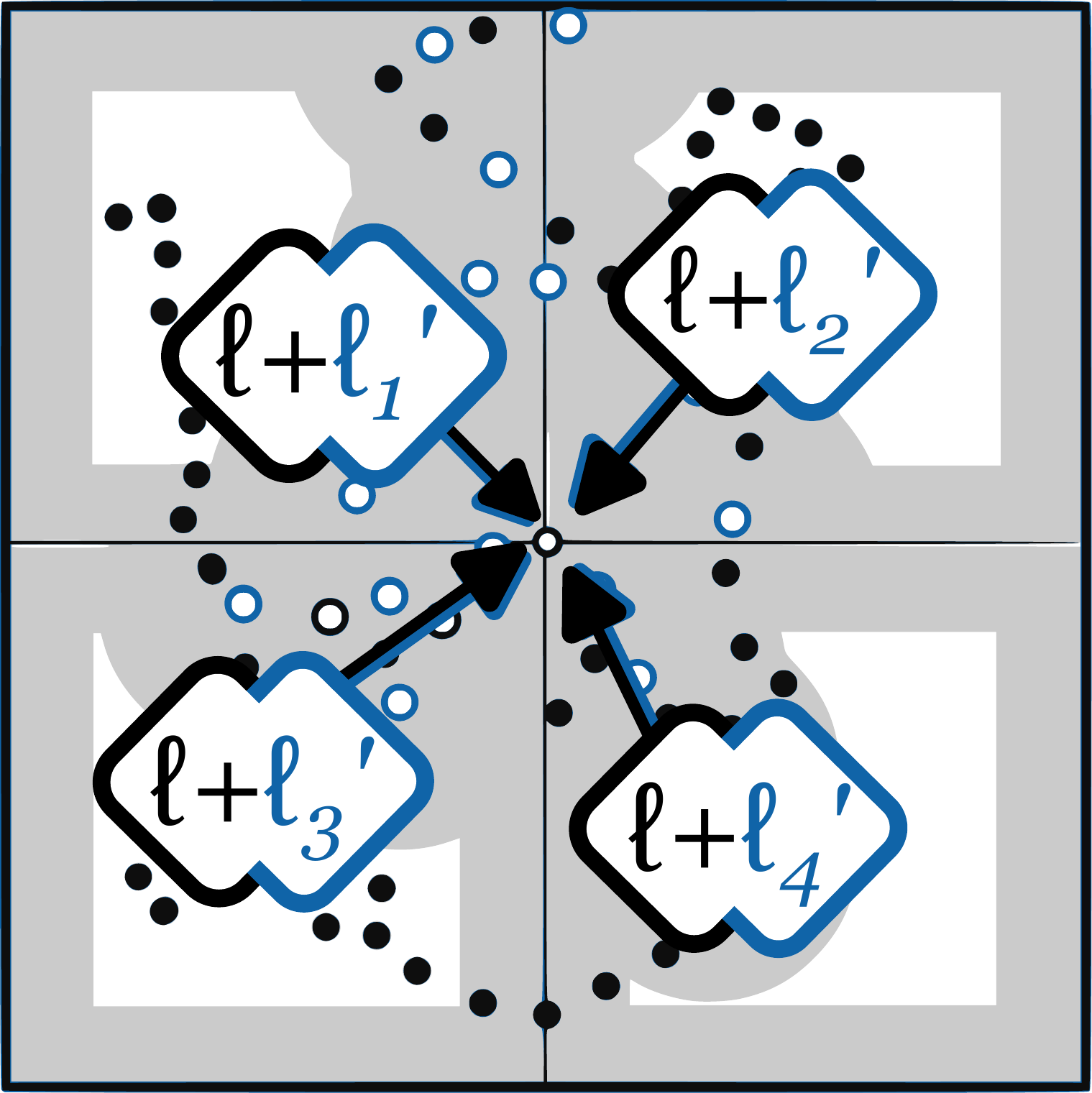}}
    \hfill
    \subcaptionbox{Next round: four team of $4\ell$ move to a sub-square.\label{fig:rec-sep}}{\includegraphics[width=.305\linewidth]{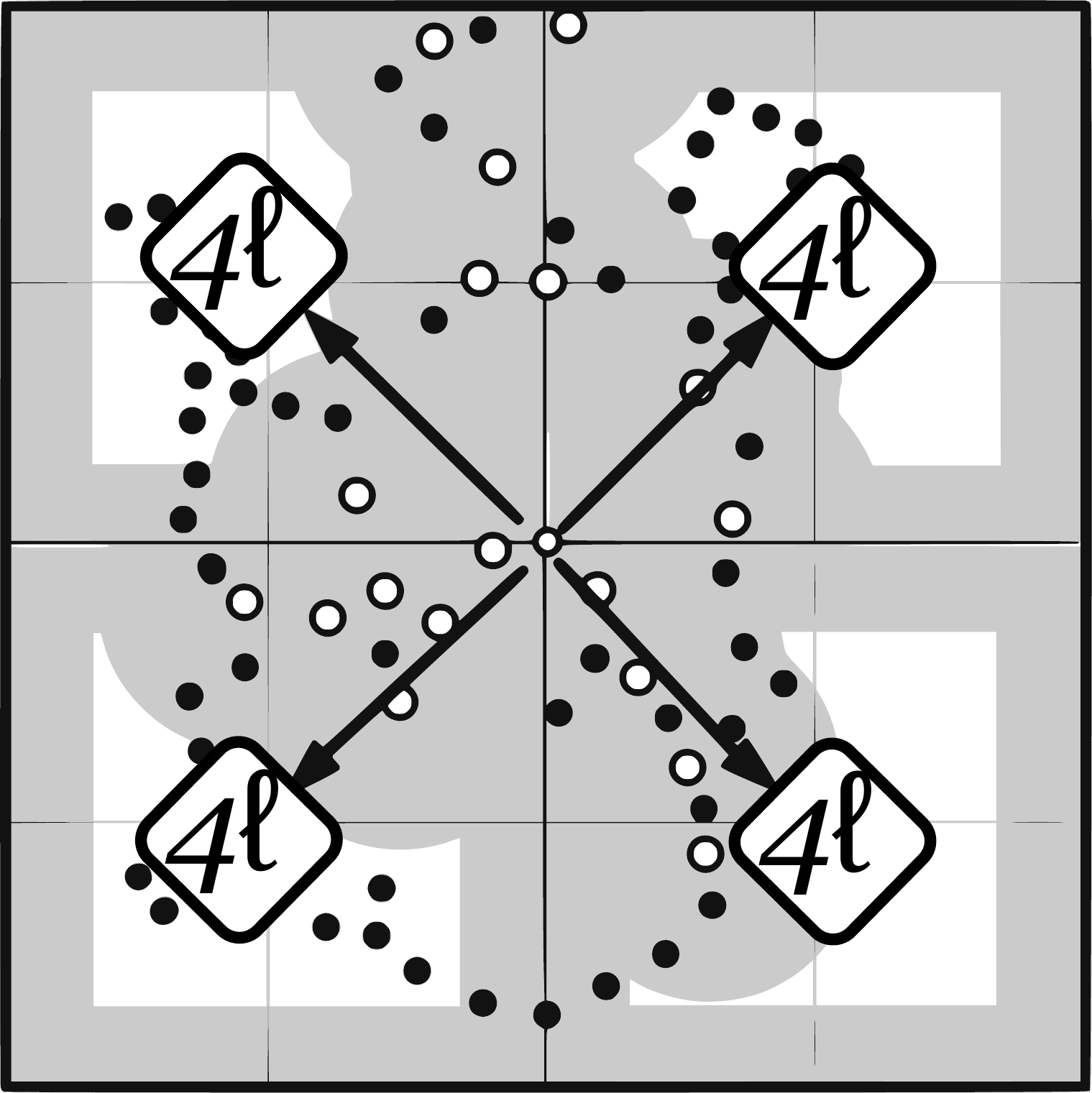}}
    \caption[Recruitment and Reorganization phases of Aseparator.]
        {\textsf{Recruitment} and \textsf{Reorganization} phases of \Aseparator; \tikzbullet{blue-recruits}{white} recruited robots; explored area is grayed.}
\end{figure}

Algorithm \Aseparator is based on a divide-and-conquer strategy where robots are organized in teams. As shown by a matching lower bound, the algorithm has optimal makespan. We describe it with six phases: \textsf{Initialization, Partition, Exploration, Recruitment, Reorganization} and \textsf{Termination}.

Having first awaken $4\ell$ robots in the square of center $s$ and width $2\rho$ (\textsf{Initialization} and \textsf{Recruitment} Phase, presented in Figure~\ref{fig:initialization-sep} and \ref{fig:DFS-sep}).
we divide the square into $4$ sub-squares, and send a team of $\ell$ robots in every sub-square (\textsf{Partition} Phase).
During the \textsf{Exploration} Phase, each team collectively explores the separator of their sub-square and stores seeds in \X, that are the initial positions of robots (sleeping or already awake) belonging to the separator (Figure~\ref{fig:exploration-sep}).
During the \textsf{Recruitment} Phase, each team wakes up new robots in its sub-square, such that these new robots \emph{plus robots currently exploring} whose initial position is in that sub-square, are $4\ell$   (Figure~\ref{fig:parallel-DFS-sep}). These $4 \ell$ robots are \emph{recruited} to define a new team used in the next round.

Finally, during phase \textsf{Reorganization}, all $4$ teams meet in the center of the square so robots can team up with robots initially located to the same sub-square. Each team thus formed go into its corresponding sub-square, and repeat the process until all robots have been awakened  (Figures~\ref{fig:backhome-sep}, \ref{fig:rec-sep}).
\textsf{Recruitment} Phase relies on function $\dfsampling$ presented in Section~\ref{sec:dfsampling}.
Lemma~\ref{lem:dfsampling} guarantee that if \dfsampling returns a set with less than $4\ell$ positions, then every robot located in the sampled region has been discovered (but not necessarily awakened).
This justifies the \textsf{Termination} phase where awaken robots wake up the remaining sleeping robots with a centralized algorithm.
Figure~\ref{fig:code-aseparator} presents a more detailed description of \Aseparator.

\begin{figure}[h!]
    \begin{tcolorbox}[sharp corners, colback=blue!8, colframe=blue!30!gray, title=\Aseparator]
        \begin{enumerate}
            \item \textbf{Round $0$}: \textsf{\textbf{Initialization}} and \textsf{\textbf{Recruitment}} \\
             $\squarereg \leftarrow $ square of width $R = 2\rho$ and centered at the source robot $s$ \\
            $\team \gets \{s\}$\\
            $\X \gets \{\pos[s]\}$\\
            $\team$ recruits up to $4\ell-1$ new robots using $\textsc{DFSampling}(\squarereg,\X)$\\
            Move $\team$ to the center of $\squarereg$
            \item \label{item:round} \textbf{Round $k \geq 1$} for a team $\team$ in square $\squarereg$:
            \begin{enumerate}[label=\textbf{(\roman*)}]
                \item \textsf{\textbf{Termination}}\\ 
                If $|\team| < 4\ell$: do a centralized awakening of sleeping robots in $\squarereg$ and terminate.
                \item \textsf{\textbf{Partition}} \\ 
                Partition $\squarereg$ (resp. $\team$) into $4$ sub-squares $\squarereg[1], \squarereg[2], \squarereg[3], \squarereg[4]$ (resp. $4$ teams $\team_1, \team_2, \team_3, \team_4$ of $\ell$ robots each). \\
                Each team $\team_i$ performs in parallel:
                \begin{enumerate}[label=\textbf{(\roman*)}]
                    \setcounter{enumiii}{2}
                    \item \textsf{\textbf{Exploration}} \\
                    Using 4 times routine \explore, collectively explore $\sep(\squarereg[i])$.\\
                    Save in $\X_i$ the set of positions of $\sep(\squarereg[i])$ where a robot was found asleep, and those where a robot has already been awakened.
        
                    \item \textsf{\textbf{Recruitment}} \\
                    $\team_i$ recruits $\ell'_i$ robots  by using $\textsc{DFSampling}(\squarereg[i], \X_i)$\\
                    Move $\team_i$ to the center of \squarereg.
                \end{enumerate}
                
                \setcounter{enumii}{4}
                \item \textsf{\textbf{Reorganization}} \\
                Wait until the four teams $\team_i$ can merge and share their variables.\\
                Reorganize robots in teams $\team'_i$ of size $\ell'_i$ by square of origin $\squarereg[i]$.\\
                For each team $\team'_i$, do in parallel: 
                \begin{itemize}
                    \item Move $\team'_i$ to the center of $\squarereg[i]$ 
                    \item Go to Round $k+1$ with team $\team'_i$ and square $\squarereg[i]$.
                \end{itemize}
            \end{enumerate}
        \end{enumerate}
    \end{tcolorbox}
    \caption{Description of \Aseparator}
    \label{fig:code-aseparator}
\end{figure}

\begin{theorem}
    \label{th:upper}
    \Aseparator solves the dFTP, given every admissible tuple $(\ell,\rho,n)$, with a makespan $O(\rho + \ell^2 \clog(\rho/\ell))$.
\end{theorem}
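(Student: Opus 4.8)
I would prove the statement in two steps: first that \Aseparator always halts with every robot awake, then that its makespan is $O(\rho+\ell^2\clog{(\rho/\ell)})$; both arguments proceed along the recursion tree of the algorithm. Write $W_k$ for the side of the squares handled at round $k$, so $W_0=2\rho$ and $W_{k+1}=W_k/2$, i.e.\ $W_k=2^{1-k}\rho$. A team subdivides its square only while the separators of its four sub-squares are defined, i.e.\ while $W_k/2>2\ell$, so the tree has depth $D=O(\log(\rho/\ell))$; a leaf is reached either when a team has fewer than $4\ell$ robots, or when $W_k=O(\ell)$, and in both cases that team finishes its (small or fully discovered) square by a centralized awakening. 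Since the four sub-teams created by a \textsf{Partition} run in parallel and re-synchronise only in the \textsf{Reorganization} phase, the makespan is at most the maximum, over root-to-leaf paths, of the sum of the per-round costs along that path; it therefore suffices to bound each per-round cost and telescope.

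\textbf{Correctness.} I would carry an invariant by induction on the round: at the start of round $k$, for each team $\team$ assigned to a square $\squarereg$, every robot whose initial position lies in $\squarereg$ belongs to $\team$'s territory (awake or asleep), and no awake robot ever crosses the boundary of the square it is assigned to. This is preserved because \textsf{Exploration} and \textsf{Recruitment} only move robots inside $\squarereg[i]\cup\sep(\squarereg[i])\subseteq\squarereg$, the moves to a centre stay inside the enclosing square, and \textsf{Reorganization} re-forms teams \emph{by square of origin}. The key consequence is that the seeded search reaches every robot of a sub-square: fix $\squarereg[i]$ and $p\in\P\cap\squarereg[i]$; since $\ellstar\le\ell$, $\P\cup\set{s}$ is connected in its $\ell$-disk graph, so there is an $\ell$-disk-graph path from $p$ to $s$; letting $q$ be the first vertex of that path (starting from $p$) lying in $\sep(\squarereg[i])$ -- which exists, either because the path reaches $s$, which always sits at a corner of its square hence in $\sep(\squarereg[i])$, or because it leaves $\squarereg[i]$ and then meets $\sep(\squarereg[i])$ by Lemma~\ref{lem:separator-path} -- the prefix $p,\dots,q$ stays in $\squarereg[i]$ with consecutive points at distance $\le\ell$, and $q$ was recorded as a seed during \textsf{Exploration} (it is in $\sep(\squarereg[i])$ and at that moment is either asleep or is an awake robot whose origin the team already knows from the variable sharing of the parent round). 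Hence the $2\ell$-disk-graph DFS of \dfsampling from the seeds discovers $p$; so whenever \dfsampling returns fewer than $4\ell$ positions, every robot of $\squarereg[i]$ has been discovered (Lemma~\ref{lem:dfsampling}(2)), and the corresponding leaf team -- knowing all their initial positions and sitting at the centre of $\squarereg[i]$ -- wakes them up by Lemma~\ref{lem:centralized-awakening}. Together with $\rhostar\le\rho$ (so $\P$ lies inside the round-$0$ square) and finiteness of $D$, this gives correctness.

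\textbf{Makespan.} For an internal round $k\ge 1$ (where $W_k>4\ell$): the four \textsf{Exploration} calls jointly explore $\sep(\squarereg[i])$, a frame split into four $(W_k/2)\times 2\ell$ rectangles, costing $O((W_k\ell)/\ell+W_k+\ell)=O(W_k)$ by Lemma~\ref{lem:collaborative-exploration} with $\ell$ robots; \textsf{Recruitment} runs \dfsampling with $\card{\team}=\ell$, seeds $\P\cap\sep(\squarereg[i])$ and region width $W_k/2$, costing $O(W_k/2+\ell\,\card{\P'})=O(W_k+\ell^2)$ because $\card{\P'}\le4\ell$; and the moves to centres plus the \textsf{Reorganization} wait add $O(W_k+\ell^2)$. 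Thus the per-round cost is $C_k=O(W_k+\ell^2)$, while a leaf round costs $O(W_k)$ (centralized awakening with known positions, Lemma~\ref{lem:centralized-awakening}) or $O(W_k^2+W_k)=O(\ell^2)$ (direct explore-and-wake of a width-$O(\ell)$ square, Corollary~\ref{cor:simple-square-wup}). Round $0$ is special: it runs \dfsampling with one robot, $\X=\set{p_s}$ and $R=2\rho\ge2\rhostar$, costing $O(\ell^2\clog{(\card{\P'})})$; since $\P'$ is an $\ell$-sampling of a width-$2\rho$ square, $\card{\P'}\le 64\rho^2/(\pi\ell^2)$ by Lemma~\ref{lem:disk-covering}, so $\clog{(\card{\P'})}=O(\clog{(\rho/\ell)})$, and moving the at most $4\ell$ recruits to $s$ adds $O(\rho)$; hence $C_0=O(\rho+\ell^2\clog{(\rho/\ell)})$. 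Summing along any root-to-leaf path, using $W_k=2^{1-k}\rho$ and $D=O(\clog{(\rho/\ell)})$,
\[
\text{makespan}\;\le\; C_0+\sum_{k\ge1}C_k\;=\;O\big(\rho+\ell^2\clog{(\rho/\ell)}\big)+O\Big(\sum_{k\ge0}2^{1-k}\rho\Big)+O\big(D\ell^2\big)\;=\;O\big(\rho+\ell^2\clog{(\rho/\ell)}\big).
\]

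\textbf{Main obstacle.} The telescoping sum is routine once the building-block lemmas are in hand; the delicate part is making the localization invariant airtight in the genuinely distributed setting. Concretely I expect most effort to go into: the boundary conventions for points on the shared edges of sibling sub-squares (which team owns them, and which team is guaranteed to record them as a seed before anyone wakes them); verifying that during \textsf{Exploration}, \textsf{Recruitment} and the moves to a centre no awake robot ever leaves its assigned square, so the hypotheses of Lemmas~\ref{lem:centralized-awakening} and~\ref{lem:dfsampling} and Corollary~\ref{cor:simple-square-wup} hold when invoked; and checking that after \textsf{Reorganization} every newly formed team holds the already-discovered positions of all still-asleep robots of its sub-square, so that no robot is ever lost between regions and the \textsf{Termination} step is both correct and cheap.
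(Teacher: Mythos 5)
Your proposal is correct and follows essentially the same route as the paper's proof (Lemmas~\ref{lem:Aseparate-valid}, \ref{lem:Aseparate-makespan}, \ref{lem:durationRound} and \ref{lem:nbRounds}): round $0$ costs $O(\ell^2 \clog(\rho/\ell))$ via the single-robot \dfsampling bound, each later round costs $O(W_k+\ell^2)$, there are $O(\clog(\rho/\ell))$ rounds, and the sum telescopes, while correctness rests on the covering guarantee of Lemma~\ref{lem:dfsampling} together with the separator emptiness criterion. The one place your justification diverges is the depth bound: the recursion actually stops because a square of width $O(\ell^{3/2})$ cannot contain an $\ell$-sampling of $4\ell$ points (Lemma~\ref{lem:disk-covering}, as exploited in Lemma~\ref{lem:nbRounds}), not because separators become undefined at width $4\ell$ --- but since $\ell^{3/2}\geq\ell$ this still yields $O(\clog(\rho/\ell))$ rounds, so your bound stands.
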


In Section~\ref{sec:asep-proof} we prove Theorem~\ref{th:upper} in two steps: Lemma~\ref{lem:Aseparate-valid} guarantees that $\Aseparator$ solves the dFTP, and Lemma~\ref{lem:Aseparate-makespan} its makespan. Informally, we have $O(\clog {(\rho/\ell)})$ rounds (Lemma \ref{lem:nbRounds}) and Round $k\geq 1$, takes $O(\ell^2+\rho/2^k)$ time units (Lemma \ref{lem:durationRound}).

We also provide a lower bound on the makespan of any algorithm solving the dFTP.
This is done by building an $n$-point set $\P$ depending on the considered algorithm.
Details of the proof are given in Section~\ref{sec:lwBound}
\begin{restatable}[Lower Bound without energy constraint]{theorem}{lowerBound}
    \label{th:lower}
    For every admissible tuple $(\ell,\rho,n)$ and algorithm $\mathcal{A}$ solving the d-FPT, there exists an $n$-point set $\P$ and a source $s$ such that $\ellstar \leq \ell$ and $\rhostar \leq \rho$ such that the makespan of the execution of $\mathcal{A}$ with source $s$ and inputs $(\ell,\rho,n)$ on $\P$ is $\Omega(\rho + \ell^2 \clog(\rho/\ell))$.
\end{restatable}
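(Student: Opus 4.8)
The plan is to establish the lower bound as a maximum of two separate adversarial constructions: one forcing $\Omega(\rho)$ and one forcing $\Omega(\ell^2 \clog(\rho/\ell))$. Since the makespan is at least the maximum of the two, and $\max(a,b) \geq (a+b)/2$, this suffices. The $\Omega(\rho)$ bound is essentially trivial: place a single robot at distance $\rho$ from $s$ (taking $n-1$ other robots co-located with $s$, or more carefully, placed so that $\ellstar \leq \ell$ is respected and $\rhostar = \rho$; e.g.\ a chain of robots at spacing $\ell$ from $s$ out to distance $\rho$, which has $\ellstar = \ell \leq \ell$ and $\rhostar = \rho \leq \rho$, and requires $n = \rho/\ell$ robots, which is consistent with admissibility $\ell \le \rho \le n\ell$). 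Any valid execution must bring an awake robot to the farthest point, which takes time $\geq \rho$. Actually one must be slightly careful: even reaching distance $\rho$ could in principle be "free" if a robot were already there, but $s$ is the only awake robot initially and the farthest sleeping robot must be physically reached, so the makespan is $\Omega(\rho)$.

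The substantive part is the $\Omega(\ell^2 \clog(\rho/\ell))$ lower bound, and I expect this to be the main obstacle. The idea is an adversary argument exploiting distance-$1$ visibility and the recursive structure of the problem. Intuitively, the point set is built so that it "looks empty almost everywhere": the robots are concentrated near a single unknown location, but to find that location at scale $R$ one must explore a region of area $\Theta(R^2)$ which, with only a bounded number $k$ of awake robots available at that stage, costs time $\Omega(R^2/k)$; crucially, the adversary ensures $k$ stays small (say $k = O(\ell)$) at the relevant scales, forcing cost $\Omega(R^2 / \ell)$ per scale. I would make this recursive: at scale $R_j = \rho / 2^j$ for $j = 0, 1, \dots, \log(\rho/\ell)$, the adversary reveals that all robots lie in one of four sub-squares of the current square, chosen adaptively to be the one the algorithm explores last (a standard "pursuit/online" argument — by averaging, the algorithm spends at least a quarter of its exploration effort before ruling out the correct quadrant, or equivalently cannot certify emptiness of a quadrant of side $R_j$ without exploring $\Omega(R_j^2)$ area of it in the worst case). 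Using Lemma~\ref{lem:separator-path} / Corollary~\ref{cor:separator} in reverse: to be sure $\P$ is not in sub-square $\squarereg_i$, the algorithm must either visit a robot of $\P$ or verify that the separator $\sep(\squarereg_i)$ contains no robot, and verifying emptiness of that separator annulus (which has area $\Theta(\ell R_j)$ but whose "hard" part forces $\Omega(R_j^2)$ — here I need the right geometric packing argument, possibly instead arguing directly that the whole $R_j \times R_j$ quadrant's interior must be swept) requires time $\Omega(\ell^2)$ if only $O(\ell)$ robots are present, because covering an $\ell$-neighborhood-net of an $\Omega(\ell) \times \Omega(\ell)$ region with $O(\ell)$ robots each moving takes time $\Omega(\ell^2 / \ell \cdot \text{something})$. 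Let me reststructure: the cleaner claim is that at each of the $\Theta(\log(\rho/\ell))$ scales the algorithm is forced to pay $\Omega(\ell^2)$, and these costs are incurred at disjoint times (the scale-$j$ work happens strictly after the scale-$(j{-}1)$ work, since the algorithm only learns which quadrant to recurse into after paying the scale-$(j{-}1)$ cost), giving $\Omega(\ell^2 \log(\rho/\ell))$ total.

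Concretely, the steps are: (1) Fix the algorithm $\mathcal{A}$ and run it against an adversary that maintains an invariant "all of $\P$ lies in a current square $\squarereg_j$ of side $R_j$, and nothing has been revealed inside $\sep(\squarereg_j)$". (2) Show that before $\mathcal{A}$ can safely send all its robots into one sub-square $\squarereg_{j+1}$ and "forget" the other three, it must have a robot physically visit, or come within distance $1$ of every point of, a net of the three abandoned separators — formalize that certifying a $c\ell \times c\ell$ region contains no robot, given only the $O(\ell)$ awake robots available, costs $\Omega(\ell^2)$ time, because $\Omega(\ell)$ robots covering area $\Omega(\ell^2)$ with unit-width visibility and unit speed need $\Omega(\ell)$ time each \emph{even in parallel} only if area is $O(\ell^2) = O(\ell)\cdot O(\ell)$, so wait — we want $\Omega(\ell^2)$ not $\Omega(\ell)$; the point is the adversary keeps the team size at $O(\ell)$ while the region to sweep at each scale has area $\Omega(\ell^2)$ near the separator... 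I would instead bound the team size by $O(\ell)$ throughout the relevant prefix of the execution (the adversary never reveals enough robots to recruit a bigger team until the scales are exhausted), so sweeping area $\Theta(\ell^2)$ costs $\Theta(\ell^2)/\Theta(\ell) = \Theta(\ell)$ — this still only gives $\Omega(\ell \log(\rho/\ell))$, so the correct construction must force each scale to cost $\Omega(\ell^2)$, which means the region swept per scale has area $\Omega(\ell^3)$, OR the team size is $O(1)$ not $O(\ell)$ at each scale. The resolution — and the real crux — is that the adversary reveals robots only along a thin $\ell$-spaced chain, so the number of awake robots grows only additively with explored scales and stays $O(1)$ per scale of "new" capacity; then sweeping the $\Theta(\ell^2)$-area separator at scale $j$ with $O(1)$ robots costs $\Omega(\ell^2)$. (3) Sum over $j = 0,\dots,\Theta(\log(\rho/\ell))$, using time-disjointness of the phases, to get $\Omega(\ell^2\log(\rho/\ell))$. (4) Combine with the trivial $\Omega(\rho)$ bound via $\max \geq \tfrac12(\cdot+\cdot)$, and check the constructed $\P$ has $|\P| = n$, $\ellstar \le \ell$, $\rhostar \le \rho$ for the given admissible tuple (padding with extra robots near $s$ if $n$ exceeds the number used, which only helps the adversary). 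The main obstacle, as flagged, is pinning down exactly the geometric lemma — how large a region must be certified empty and with how few robots — so that the per-scale cost is genuinely $\Omega(\ell^2)$; this likely mirrors the treasure-hunt lower bound $\Omega(D + D^2/k)$ of~\cite{FKLS12} applied at scale $D = \Theta(\ell)$ with $k = O(1)$ effective searchers, repeated across scales.
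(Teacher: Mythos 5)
Your $\Omega(\rho)$ half and the overall ``take the max of two constructions'' framing are fine, but the core of the $\Omega(\ell^2\log(\rho/\ell))$ half has a genuine gap --- one you half-diagnose yourself and then paper over. Your per-scale accounting needs each of the $\Theta(\log(\rho/\ell))$ scales to cost $\Omega(\ell^2)$, which forces you to assume the algorithm has only $O(1)$ awake robots at every scale. That assumption cannot be enforced by the adversary: any instance with $\rhostar$ close to $\rho$ and $\ellstar\le\ell$ necessarily contains at least $\rho/\ell$ robots forming an $\ell$-connected chain, and nothing stops the algorithm from waking every robot it discovers, so after $i$ discoveries the team can have size $i+1$ and sweeps area at rate $2(i+1)$. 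With a growing team your per-scale cost degrades to $\Omega(\ell^2/k_j)$, and no choice of how the adversary ``doles out'' robots along a thin chain rescues the product $\ell^2\cdot\log(\rho/\ell)$; moreover, your argument needs every algorithm to respect the quadtree scale structure (paying for each scale at disjoint times), which is an additional claim you would have to prove.

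The paper's proof sidesteps scales entirely and resolves exactly this tension with a harmonic-sum argument. It places $m\ge\rho/\ell$ robots in pairwise-disjoint disks of radius $\ell/4$ (area $\pi\ell^2/16$ each) centered on a connected subset of the $\frac\ell2$-grid that includes a vertical chain out to distance $\Theta(\rho)$; each robot sits at the \emph{last} point of its disk to be swept by the execution. Hence at most $i$ robots are awake before $i$ full disks have been discovered, and since $k$ awake robots discover area at most $2k$ per unit time, the time to uncover the $(i{+}1)$-st disk is at least $\frac{\pi\ell^2}{32(i+1)}$. Summing over $i\le m$ gives $\Omega(\ell^2\log m)=\Omega(\ell^2\log(\rho/\ell))$: the logarithm comes from the harmonic series over the \emph{number of awake robots}, not from the number of geometric scales, and the argument is agnostic to the algorithm's exploration strategy. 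The $\Omega(\rho)$ term comes for free from the same instance via the chain reaching distance $\Theta(\rho)$. To repair your proof you would need to replace step (2)--(3) by this kind of global area/team-size bookkeeping; as written, the claimed $\Omega(\ell^2)$ per scale does not hold.
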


\section{With energy constraint}\label{sec:energy}
The construction used in the proof of Theorem~\ref{th:lower} can be adapted to obtain Theorem~\ref{th:impossibility}.
Details are given in Section~\ref{sec:impossibility}.
\begin{restatable}[Lower Bound on the energy budget]{theorem}{impossibility}
    \label{th:impossibility}
    For every admissible tuple $(\ell,\rho,n)$ and algorithm $\A$ with energy budget $B < \pi(\ell^2 - 1)/2$, there exists an $n$-point set $\P$ and a source $s$ such that $\ellstar \leq \ell$ and $\rhostar \leq \rho$ such that the execution of $\mathcal{A}$ on $\P$ does not wake-up any robot.
\end{restatable}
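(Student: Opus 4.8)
\textbf{Plan and the blind trajectory.} The plan is to build, for the given algorithm $\A$, an instance $\P$ on which the source $s$ stays \emph{blind} for the whole execution, so that it is never co-located with a sleeping robot and hence wakes up none of them; the threshold $\pi(\ell^2-1)/2$ is exactly what forces the unit neighbourhood of a trajectory of length $B$ to miss a point of the disk $D(p_0,\ell)$. Concretely, I would first run $\A$ (assumed deterministic) while returning, at every call of \look by $s$, the answer ``no robot in the vicinity''. Since $s$ is the only mover and its compute steps depend only on its memory, the global clock and the \look answers, this fixes a single trajectory $\gamma$ for $s$; by the energy constraint its total length is at most $B$, and --- adjoining its limit point when the execution is infinite, which is legitimate since a curve of finite length is Cauchy --- its image $\Gamma$ is a compact connected subset of $\bar D(p_0,B)$ of length at most $B$. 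The geometric input is the classical tube estimate: the $r$-neighbourhood $N_r(E)=\{x:\,d(x,E)\le r\}$ of a connected set $E$ with $\mathcal{H}^1(E)\le L$ satisfies $\mathrm{area}(N_r(E))\le 2rL+\pi r^2$ (equality for a segment). Applying it with $L\le B$ and $r=1+\eta$ for $\eta>0$ small, and using $2B+\pi<\pi\ell^2$ (which is equivalent to $B<\pi(\ell^2-1)/2$), one gets $\mathrm{area}(N_{1+\eta}(\Gamma))<\pi\ell^2=\mathrm{area}(D(p_0,\ell))$.

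\textbf{Placing the robots.} The area gap makes $D(p_0,\ell)\setminus N_{1+\eta}(\Gamma)$ nonempty; fix $q$ in it. Then $d(q,\Gamma)>1+\eta$, and since $p_0\in\Gamma$ also $1<|q-p_0|<\ell$. Choose $\eta'>0$ small enough that the open ball $B(q,\eta')$ lies in $D(p_0,\ell)$ and is disjoint from $N_1(\Gamma)$, and place the $n$ asleep robots at $n$ distinct points of $B(q,\eta')$. Then $s\notin\P$, each $p_i$ is within distance $\ell$ of $s$, so $s$ is adjacent to every robot in the $\ell$-disk graph of $\P\cup\{s\}$, which is therefore connected; hence $\ellstar\le\ell$. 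Also $\rhostar=\max_i|p_i-s|<\ell\le\rho$ by admissibility, so $\rhostar\le\rho$, and $(\P,s)$ is a valid instance for the tuple $(\ell,\rho,n)$.

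\textbf{Indistinguishability and conclusion.} By induction on time, on this instance the execution of $\A$ coincides with the blind run: at every moment $s$ sits at a point of $\Gamma$, which is at distance $>1$ from each $p_i$, so every \look of $s$ returns ``no robot in the vicinity'' exactly as in the blind run, and by determinism $s$ follows $\gamma$; moreover no other robot ever moves, since none is ever awakened. Hence $s$ is never co-located with a sleeping robot, and the execution of $\A$ on $\P$ wakes up no robot.

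\textbf{Main obstacle.} The geometry is routine once the tube bound is quoted; the delicate part is making the ``blindness is maintained'' step fully rigorous --- formally arguing that replacing the fictitious all-empty \look stream by the actual instance $\P$ changes no \look answer, and hence no move --- together with the bookkeeping that the accumulated move length of $s$ stays $\le B$ even on a non-terminating run, so that $\mathcal{H}^1(\Gamma)\le B$ is justified. The slack $\eta$ is included precisely to keep the ``distance $>1$'' conclusion robust when passing to the closure of the trajectory.
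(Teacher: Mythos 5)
Your proposal is correct and follows essentially the same route as the paper: both place the sleeping robot(s) in a portion of the disk of radius $\ell$ around $s$ that the unit-neighbourhood of the source's length-$B$ trajectory cannot cover, using the area bound (initial $\pi$ plus $2\delta$ per move of length $\delta$, i.e.\ your tube estimate $2B+\pi<\pi\ell^2$), and both handle $n>1$ by clustering all robots in a tiny undiscovered ball. Your write-up merely makes explicit the blind-run/indistinguishability argument and the closure of a non-terminating trajectory, which the paper leaves implicit.
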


We use a Breadth First Search based strategy to define $\Agrid$, an algorithm that takes as input the parameter $\ell \geq \ellstar$. 
This algorithm is then combined with $\Aseparator$ to get $\Awave$, which provides a smaller makespan.

To sum up, \Agrid works as follows: the plane is partitioned into squares of width $2 \ell$ centered at positions $\{(2k\ell,2k'\ell) \mid (k,k') \in \mathbb{Z}^2\}$. We wake-up the square $\squarereg$ containing $s$ in time $t(\squarereg) = R^2 + (10 + \sqrt{2})R$ (Corollary~\ref{cor:simple-square-wup}).
Then every square containing a new awake robot try to wake up the $8$ adjacent squares of a square ordered in a counter-clockwise order.
A precise description of \Agrid is provided in Section~\ref{sec:A-Grid}.

\begin{restatable}{theorem}{Agridmakespan}
    \label{th:Agridmakespan}
    \Agrid solves the dFTP, given every admissible tuple $(\ell,\rho,n)$, with an energy budget of $O(\ell^2)$ and with a makespan of $O(\ell \cdot \ecc_\ell)$.
\end{restatable}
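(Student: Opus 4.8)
I would first make the scheduling of \Agrid explicit. It tiles the plane with the axis-parallel squares of side $2\ell$ centred on the lattice $2\ell\,\mathbb{Z}^2$, and since $s=(0,0)$, the source sits at the centre of one such square $\squarereg_s$. The execution runs in synchronised \emph{rounds} (the global clock makes this possible), each round being cut into $8$ \emph{phases} of a common duration $\tau=\Theta(\ell^2)$. In round $0$ the source explores $\squarereg_s$ and wakes all its sleepers in time $(2\ell)^2+(10+\sqrt2)(2\ell)=O(\ell^2)$ by Corollary~\ref{cor:simple-square-wup}. Call a square \emph{active} once it contains an awake robot, and let $A_r$ be the set of squares that \emph{first} become active in round $r$, so $A_0=\{\squarereg_s\}$. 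In round $r+1$ only the squares of $A_r$ act: in phase $j\in\{1,\dots,8\}$ (the eight king-neighbour directions, taken counter-clockwise) every square of $A_r$ sends a robot to its $j$-th neighbouring square; if that square is not yet active the robot explores and wakes it, again by Corollary~\ref{cor:simple-square-wup}; otherwise it returns, recognising the situation through \look. Every robot halts once the single round in which its home square acts is over. (Only $\ell$ is used; $\rho$ and $n$ are irrelevant to \Agrid.)

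The delicate point is that the hypothesis of Corollary~\ref{cor:simple-square-wup} --- no awake robot crosses the border of the square being explored --- holds at every invocation. Fix a square $\squarereg'$ explored in phase $j$ of round $r+1$ by a robot issued from $\squarereg''\in A_r$, where $\squarereg'$ is the $j$-th king-neighbour of $\squarereg''$. Because the map ``square $\mapsto$ its $j$-th king-neighbour'' is a lattice translation, $\squarereg''$ is the only square of $A_r$ that sends a robot into $\squarereg'$ during phase $j$; no robot leaves $\squarereg'$ during that phase, since $\squarereg'\notin A_r$ sends nobody and its own just-awakened robots wait until round $r+2$; and no robot merely in transit enters $\squarereg'$, since every moving robot travels between two king-adjacent squares, a diagonal hop being routed through the shared corner so that no moving robot enters the interior of a third square. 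Taking $\tau=\Theta(\ell^2)$ larger than the worst itinerary --- travel to a king-neighbour ($\le 4\sqrt2\,\ell$), explore it ($(2\ell)^2+O(\ell)$), travel back --- guarantees that each phase finishes before the next starts, which is what makes the ``no border crossing'' property rigorous.

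Correctness is then an induction along a path of the $\ell$-disk graph of $\P\cup\{s\}$, which is connected since $\ellstar\le\ell$. Given a robot $w$, take a path $s=w_0,w_1,\dots,w_m=w$ in it. If $w_i$'s square first becomes active at a round $\le i$, then, as consecutive $w$'s are at distance $\le\ell$ while squares have side $2\ell$, the square of $w_{i+1}$ is that of $w_i$ or a king-neighbour of it, hence first becomes active at a round $\le i+1$ (either it already was, or the square of $w_i$ acts and reaches it). Thus $w$ is awakened by round $m$; and once every robot of a square is awake that square generates no more action, so the whole propagation --- and therefore every robot, each halting on its own --- is finished after $m^{\star}+O(1)$ rounds, where $m^{\star}$ is the maximum over $w$ of the least number of edges of an $s$--$w$ path in the $\ell$-disk graph. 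To bound $m^{\star}$: fix $w$, start from the $s$--$w$ path inside a spanning tree of the $\ell$-disk graph realising $\ecc_\ell$ (so of length $\le\ecc_\ell$), and repeatedly delete an internal vertex $w_i$ while the distance between $w_{i-1}$ and $w_{i+1}$ is $\le\ell$ --- this keeps a legal $\ell$-disk-graph path from $s$ to $w$ and only shortens it. At the end every internal $w_i$ has $w_{i-1}w_{i+1}$-distance $>\ell$, so each pair of consecutive edges has total length $>\ell$; summing over disjoint pairs shows the path has $<2\ecc_\ell/\ell+1$ edges, hence $m^{\star}<2\ecc_\ell/\ell+1$.

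Finally I would assemble the bounds. Round $0$ costs $O(\ell^2)$ and each later round costs $8\tau=O(\ell^2)$, so the makespan is $O\bigl((m^{\star}+1)\,\ell^2\bigr)=O(\ell^2+\ell\cdot\ecc_\ell)$, which is the claimed $O(\ell\cdot\ecc_\ell)$ whenever $\ecc_\ell=\Omega(\ell)$ (and $\ecc_\ell$ is always at least the distance from $s$ to the farthest robot, hence at least $\rhostar$). For the energy, each robot takes part in exactly one round --- round $0$ for $s$, and round $r+1$ for a robot awakened in round $r$ --- and within it moves a total of $O(\ell^2)$ (at most $8$ round-trips to a king-neighbour together with an exploration of a side-$2\ell$ square), staying motionless otherwise; hence every robot moves at most $O(\ell^2)$, i.e.\ within an energy budget $O(\ell^2)$. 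The two technical cruxes I expect are the conversion of the BFS depth into $O(\ecc_\ell/\ell)$ rounds (the edge-contraction argument on an optimal $\ell$-disk-graph tree) and making the phase-synchronisation and conflict-freedom argument fully watertight so that Corollary~\ref{cor:simple-square-wup} legitimately applies at every step.
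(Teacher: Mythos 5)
Your proposal is correct and follows essentially the same route as the paper: the same $2\ell$-grid with an eight-phase synchronised wave, the same induction along a minimum-hop path of the $\ell$-disk graph, and the same $1+2\ecc_\ell/\ell$ hop bound (which you re-derive by the exact edge-deletion argument the paper isolates as Lemma~\ref{lem:eccentricity}). Your explicit verification of the no-border-crossing hypothesis of Corollary~\ref{cor:simple-square-wup} and of the phase synchronisation is in fact more careful than the paper's own write-up.
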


The next algorithm, \Awave, is an adaptation of \Agrid. Roughly speaking, there are two changes: (1) the squares are now of width $8 \ell^2 \log_2{\ell}$; and (2) instead of using a simple process to explore and to wake up a square, we use $\Aseparator$ starting from a team of $4\ell$ robots for Rounds $k \geq 1$ and from the source $s$ for Round $0$. A precise description of \Awave is given in Section~\ref{sec:awave}.

\begin{restatable}{theorem}{Awavemakespan}
    \label{th:Awavemakespan}
    \Awave solves the dFTP, given every admissible tuple $(\ell,\rho,n)$, with an energy budget of $O(\ell^2 \log{\ell})$ and a makespan of $O(\ecc_\ell + \ell^2 \clog \ecc_\ell/ \ell)$. 
\end{restatable}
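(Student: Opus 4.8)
The plan is to mirror the analysis of \Agrid (Theorem~\ref{th:Agridmakespan}), replacing the "explore-and-wake a $2\ell$-square" primitive (Corollary~\ref{cor:simple-square-wup}) by a call to \Aseparator on a much larger square, and then to pay attention to three things: correctness (all robots woken, termination), the energy budget per robot, and the makespan. Throughout, write $L \coloneqq 8\ell^2\log_2\ell$ for the width of the big squares of the BFS grid $\{(2kL,2k'L)\}$.

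\textbf{Correctness.} First I would argue that \Awave solves the dFTP. As in \Agrid, the big squares tile the plane, and a square becomes "active" once it contains an awake robot; an active square launches \Aseparator internally and then attempts to propagate to its $8$ neighbours in counter-clockwise order. Since $\ellstar \le \ell$, the $\ell$-disk graph of $\P\cup\{s\}$ is connected, so any non-empty big square adjacent (in the grid) to an active non-empty square is itself reachable: some robot of the active square lies within $\ell$ of a robot of the neighbour, hence the propagation step (a short move across the shared border) indeed delivers an awake robot into the neighbour. The key point — exactly as flagged in Section~\ref{sec:wakeup-propagation} — is that each big square is processed by \emph{one} team running \Aseparator, and teams in a square act only inside that square, so the hypotheses of Lemma~\ref{lem:centralized-awakening}/Lemma~\ref{lem:Aseparate-valid} are met and there are no cross-square conflicts. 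By Lemma~\ref{lem:Aseparate-valid}, the internal call wakes every robot whose initial position lies in that square (provided it is started from $4\ell$ robots for rounds $k\ge1$, or from $s$ for round $0$, which is how \Awave invokes it). An induction on graph-distance in the grid of non-empty squares, together with $\rhostar\le\rho$ to bound the number of such squares and guarantee the process reaches all of them, finishes correctness; the global clock lets awake robots in a square wait for the $4\ell$-robot team to be assembled before the internal \Aseparator begins, and finiteness of the grid gives termination.

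\textbf{Energy.} Each robot participates in at most $O(1)$ internal \Aseparator executions of width-$L$ squares, plus $O(1)$ propagation hops of length $O(L)$. So it suffices to bound the energy a single robot spends inside one \Aseparator run on a width-$L$ square. Here I would reuse the makespan analysis of \Aseparator, noting that total distance travelled by any robot is at most its makespan (speed $1$): a robot is involved in $O(\log(R/\ell))$ rounds, and in each round it walks $O(\ell^2 + R/2^k)$. But for the energy bound we want $O(\ell^2\log\ell)$, not $O(L)$; the point is that with $R=L=8\ell^2\log_2\ell$, the $\sum_k R/2^k = O(R)=O(\ell^2\log\ell)$ geometric part is fine, and the $\sum_k \ell^2$ part is $O(\ell^2\log(L/\ell)) = O(\ell^2\log(\ell\log\ell)) = O(\ell^2\log\ell)$. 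One must double-check the \explore and \dfsampling sub-routines don't make an individual robot travel more than its share — this is exactly the per-robot guarantee baked into Lemma~\ref{lem:collaborative-exploration} and Lemma~\ref{lem:dfsampling} — so each robot's total movement is $O(\ell^2\log\ell)$, as claimed.

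\textbf{Makespan.} This is the crux. Let $T$ be the makespan. A path of non-empty squares from the source's square to the farthest one has length $O(\ecc_\ell/L)$: indeed a spanning tree of the $\ell$-disk graph of depth $\ecc_\ell$ certifies that every point of $\P$ is reachable from $s$ by a sequence of $\ell$-hops of total length $\le\ecc_\ell$, and each such hop either stays in a square or crosses into an adjacent one, so the grid-distance to the square of a robot at tree-depth $d$ is $O(d/L)$; hence the BFS over squares has depth $O(\ecc_\ell/L)+O(1)$. Along the critical chain, the delay contributed by each square is (time to assemble the $4\ell$-team and hop in) $+$ (makespan of the internal \Aseparator on a width-$L$ square). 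By Lemma~\ref{lem:Aseparate-makespan} the latter is $O(L + \ell^2\log(L/\ell)) = O(\ell^2\log\ell)$, and the hop/assembly cost is $O(L)=O(\ell^2\log\ell)$ as well. So, naively, $T = O\big((\ecc_\ell/L)\cdot \ell^2\log\ell\big) = O(\ecc_\ell)$ from the propagation, but this is too lossy: we must also add the very first square's cost and the last square's internal cost without the $\ecc_\ell/L$ multiplier where appropriate. The clean statement is $T = O\big(\ecc_\ell + \ell^2\log(\ecc_\ell/\ell)\big)$: the $O(\ecc_\ell)$ term comes from summing the $O(L)$ per-square propagation delays along a chain of $O(\ecc_\ell/L)$ squares (a telescoping/geometric bound, not a product — adjacent squares on the BFS frontier are woken with a pipeline, so delays along the chain \emph{add} at rate $O(L)$ per grid-step, i.e. $O(L)\cdot O(\ecc_\ell/L)=O(\ecc_\ell)$), while the residual $O(\ell^2\log(\ecc_\ell/\ell))$ term is the internal \Aseparator cost of the last square on the path together with the round-count bookkeeping — here one uses $\log(L/\ell)=\log(8\ell\log_2\ell) = O(\log\ell) = O(\log(\ecc_\ell/\ell))$ since $\ecc_\ell \ge \ellstar$ forces $\ecc_\ell/\ell$ to be at least a constant when $\ecc_\ell\ge\ell$, and the degenerate case $\ecc_\ell=O(\ell)$ collapses to a single square handled by one \Aseparator call of makespan $O(\ell^2\log\ell)=O(\ell^2\log(\ecc_\ell/\ell))$ after adjusting constants.

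\textbf{Main obstacle.} The delicate part is the makespan accounting along the BFS frontier: one must show the per-square delays \emph{pipeline} rather than \emph{multiply}, i.e. that once a square's internal \Aseparator has produced its $4\ell$ "recruits" near the border, the neighbour can start without waiting for the whole internal computation to finish, so that the critical path length is $\sum (\text{hop costs}) + (\text{one full internal cost})$ and not $(\text{grid depth})\times(\text{full internal cost})$. This is the same phenomenon that makes \Agrid's bound $O(\ell\cdot\ecc_\ell)$ rather than worse, and I would lift the corresponding lemma from Section~\ref{sec:A-Grid}/\ref{sec:awave} essentially verbatim, substituting the \Aseparator makespan for $t(\squarereg)$. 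Everything else — correctness via $\ellstar\le\ell$ connectivity, the energy sum, the round-count bound $O(\log(\ecc_\ell/\ell))$ — is routine given the lemmas already established.
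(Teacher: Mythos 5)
Your overall architecture (BFS over a grid of width-$L$ squares, internal calls to \Aseparator, product bound ``grid depth $\times$ cost per square'' for the $O(\ecc_\ell)$ term) matches the paper, and your energy accounting is essentially the paper's. But there are two genuine gaps.

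First, the propagation/correctness step is not established. \Awave only launches \Aseparator into neighbouring squares from teams of size at least $4\ell$; a non-empty square whose robots have all been woken but which contains fewer than $4\ell$ of them propagates nothing. Your argument (``some robot of the active square lies within $\ell$ of a robot of the neighbour, hence the propagation step delivers an awake robot into the neighbour'' and ``the global clock lets awake robots wait for the $4\ell$-robot team to be assembled'') assumes such a team exists, which is exactly what has to be proven. The paper's proof does this quantitatively: along a shortest path $P$ in the $\ell$-disk graph from $s$ to a still-asleep robot, the prefix of $P$ of length $R-2\ell$ ending at the frontier is confined to at most $3$ squares adjacent to the next square to wake; since consecutive hops have length at most $\ell$, this prefix contains at least $(R-2\ell)/\ell$ distinct awake robots, so the most populated of those $3$ squares holds at least $R/(3\ell)-1 \geq 4\ell$ of them precisely because $R = 8\ell^2\log_2\ell$ with $\ell\geq 4$. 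This is where the choice of square width is actually used, and it also drives the round count (the awake prefix of $P$ grows by $R-3\ell$ every $3$ rounds, giving $O(\ecc_\ell/(\ell^2\log\ell))$ rounds). Your worry about pipelining versus multiplying is a red herring: the paper's rounds are globally synchronized and the plain product bound already yields $O(\ecc_\ell)$.

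Second, your reduction of the additive term from $O(\ell^2\log\ell)$ to $O(\ell^2\clog(\ecc_\ell/\ell))$ is wrong as stated. You claim $\log\ell = O(\log(\ecc_\ell/\ell))$ ``since $\ecc_\ell/\ell$ is at least a constant,'' but that implication requires $\ecc_\ell/\ell \geq \ell^{c}$ for some fixed $c>0$, not merely $\ecc_\ell/\ell = \Omega(1)$; if $\ecc_\ell = 2\ell$ then $\log(\ecc_\ell/\ell)=1$ while $\log\ell$ is unbounded, and your ``degenerate case'' sentence asserts $O(\ell^2\log\ell)=O(\ell^2\log(\ecc_\ell/\ell))$, which is false there. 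The paper closes this by a case split at $\ecc_\ell \lessgtr \ell^{3/2}/16$: when $\ecc_\ell$ is small, everything collapses to Round~$0$ of \Aseparator, whose cost is not $O(\ell^2\log\ell)$ but $O(\ell^2\clog\min\{\ell,\rhostar/\ell\})$ because \dfsampling terminates early when few robots are recruited (Equation~\eqref{eq:Asep-rd0} and Lemma~\ref{lem:nbRounds}), and then $\rhostar/\ell\leq\ecc_\ell/\ell$ gives the bound; when $\ecc_\ell\geq\ell^{3/2}/16$ one has $\ecc_\ell/\ell\geq\sqrt\ell/16$ and only then does $\log\ell=O(\log(\ecc_\ell/\ell))$ hold, with Lemma~\ref{lem:BigO} finishing the bookkeeping. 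Without this early-termination property of \Aseparator and the explicit case analysis, the claimed makespan does not follow.
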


We can find a lower bound of $\Omega(\ecc)$ for a given of range of values for $\ecc$ (See construction in Section \ref{sec:lower-nrj}):

\begin{restatable}[Lower Bound for energy constrained]{theorem}{lowernrj}
    \label{th:lower-nrj}
    For every admissible tuple $(\ell,\rho,n)$, for every $B > \ell$, and for every $\ecc \in [\rho,\min\{n\ell-\rho/3,\floor{\rho^2/(2(B+1)) + 1}{}\}]$, there exists an $n$-points set $\P$ and a source $s$ of connectivity threshold $\ell$, radius $\rho$, and $\ell$-eccentricity $\ecc_\ell = \ecc$ such that the makespan of any algorithm $\mathcal{A}$ solving the dFTP for $(\P,s)$ given $(\ell,\rho,n)$ and energy budget $B$, is $\Omega(\ecc + \ell^2 \log{(\ecc/\ell)})$. 
\end{restatable}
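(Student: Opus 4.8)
Fix the algorithm $\mathcal A$ and the parameters $(\ell,\rho,n,B,\ecc)$ in the stated ranges. I would build an instance $(\P,s)$ whose $\ell$-disk graph is a single path $\Pi$ rooted at $s$ of weighted depth exactly $\ecc$, \emph{folded} inside the disk of radius $\rho$ around $s$ so that the energy budget forbids any shortcut between distant portions of $\Pi$. Concretely, $\Pi$ is a comb (boustrophedon) curve: straight teeth of Euclidean length $\Theta(\rho)$ joined by hairpins, consecutive teeth separated by a gap of width $g$, with one robot placed every $\ell$ along the curve; one extra robot is placed at Euclidean distance exactly $\rho$ from $s$ (to force $\rhostar=\rho$), and the remaining robots are clustered within distance $\ell/10$ of $s$ to reach a total of $n$ points. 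The gap $g$ is a suitable function of $B$ with $g>\ell$ and $\pi g^2$ exceeding $B$ by a constant factor. The condition $g>\ell$ makes the only edges of the $\ell$-disk graph the ones between consecutive points of $\Pi$ (teeth do not interconnect, hairpins are drawn long enough to avoid short chords, and the near-$s$ cluster is too far from $p_1$), so $\ecc_\ell$ equals the weighted depth of $\Pi$, which is set to $\ecc$ by choosing the number of teeth. Fitting the teeth inside the disk of radius $\rho$ requires $(\text{number of teeth})\cdot g=O(\rho)$ with total length $\ecc$, for which the hypothesis $\ecc\le\floor{\rho^2/(2(B+1))+1}$ is (more than) enough; the point count $\Theta(\ecc/\ell+\rho/\ell)\le n$ is where $\ecc\le n\ell-\rho/3$ is used; and admissibility of $(\ell,\rho,n)$ together with $B>\ell$ is exactly what makes this range non-empty (cf. Proposition~\ref{prop:admissible}). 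For $B$ below the solvability threshold of $(\P,s)$ the statement is vacuous, so I may assume $B$ large enough that $\mathcal A$ can actually relay along $\Pi$.

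The core is a \emph{confinement lemma}: in any valid execution of $\mathcal A$ on $(\P,s)$ with budget $B$, at all times $t$ the set of awake robots is a connected prefix of $\Pi$, and the deepest awake robot (the frontier) has traversed at most $t$ units of $\Pi$'s weight by time $t$. I would prove it by an area-accounting induction: the union of all points ever visited by robots up to time $t$ has area at most (number of awake robots)$\cdot B$, and, by the choice of $g$, one shows by induction on the index of the next tooth that this union never meets a robot outside the current prefix, hence no such robot is ever discovered or woken; the second clause is immediate since a robot advances its position by at most the distance it travels, which costs time one for one. As $\mathcal A$ must wake every robot, the frontier must reach weighted depth $\ecc$, which takes time at least $\ecc$; this yields makespan $\ge\ecc$.

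For the $\Omega(\ell^2\log(\ecc/\ell))$ term I would graft onto $\Pi$ the nested obstruction gadget of the proof of Theorem~\ref{th:lower}. By the confinement lemma, $\mathcal A$ is forced to discover the points of $\Pi$ essentially in path order, so from its viewpoint $\Pi$ is a one-dimensional search space of ``length'' $\ecc$, not $\rho$. Inserting, at path-distances $\ell,2\ell,4\ell,\dots,\Theta(\ecc)$, a copy of the $O(\ell)$-diameter gadget of Theorem~\ref{th:lower} — each forcing an arbitrary algorithm to spend $\Omega(\ell^2)$ extra time to get past that scale — and letting an online adversary commit each gadget only after $\mathcal A$ moves, gives $\Omega(\ell^2\log(\ecc/\ell))$ extra time. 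These gadgets, having diameter $O(\ell)$, fit inside the teeth and consume only $O(\ell\log(\ecc/\ell))$ of the path length and point budget, so after re-balancing the number of teeth by a constant they disturb neither $\ecc_\ell=\ecc$ nor $\rhostar=\rho$ nor the point count. Hence the makespan of $\mathcal A$ is $\Omega(\ecc)+\Omega(\ell^2\log(\ecc/\ell))=\Omega(\ecc+\ell^2\log(\ecc/\ell))$.

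I expect the main obstacle to be making the confinement lemma fully rigorous: one must follow the Look--Compute--Move dynamics over all robots at once and argue that the cumulative explored area never leaks past a gap, even as the number of awake robots — and thus the total available energy — grows; the cleanest invariant to maintain across elementary moves is ``the first tooth not yet reached along $\Pi$ has never been touched''. A secondary, purely bookkeeping, difficulty is pinning down $\ecc_\ell=\ecc$, $\rhostar=\rho$ and exactly $n$ points simultaneously while leaving room for the comb and the $\log$-gadgets, which is where the precise admissible range for $\ecc$ is spent; if $\mathcal A$ is randomized, the standard Yao argument over a distribution of such instances applies.
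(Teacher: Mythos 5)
Your overall architecture (a folded rectilinear path of weighted depth $\ecc$ packed into the disk of radius $\rho$, with teeth of Euclidean length $\Theta(\rho)$ and a gap tied to $B$ forbidding shortcuts, plus the conditions $\ecc\le n\ell-\rho/3$ and $\ecc\le\rho^2/(2(B+1))+1$ spent exactly where the paper spends them) matches the paper's construction, and the $\Omega(\ecc)$ half of the bound would go through with it. However, there are two genuine gaps. First, your gap width is too small: you take $g$ with $\pi g^2=\Theta(B)$, i.e.\ $g=\Theta(\sqrt B)$, but a robot with budget $B\gg g$ can simply walk across such a gap. Since the macroscopic geometry of the comb is fixed (the adversary's freedom is confined to placing each robot inside a region of diameter $O(\ell)$, as needed to keep the $\ell$-disk graph connected), a robot on tooth $j$ with $B\ge g+c\ell^2$ can cross to tooth $j+1$ and exhaustively search one such region there, waking a robot and collapsing the relay to cost $O(g+\ell^2)$ per tooth instead of $\Theta(\rho)$ --- which destroys the $\Omega(\ecc)$ bound in admissible regimes such as $\ell^2\ll B\le\rho/2$. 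The only airtight separation is to put consecutive teeth at Euclidean distance strictly greater than $B$ (the paper uses exactly $B+1$), so that no robot can ever reach the next tooth no matter what it knows; your area-accounting confinement lemma cannot repair this, because waking a robot requires reaching it rather than sweeping area, and the swept area $kB$ vastly exceeds the area of a gap once the number $k$ of awake robots grows.

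Second, the $\Omega(\ell^2\log(\ecc/\ell))$ term. You insert $\log(\ecc/\ell)$ gadgets at path-distances $\ell,2\ell,4\ell,\dots$ and claim each costs $\Omega(\ell^2)$. That per-gadget claim is false: when the frontier reaches path-distance $2^j\ell$ there are already $\Theta(2^j)$ awake robots (one per $\ell$ of path), and, budget permitting, they can search a $\Theta(\ell^2)$-area gadget in parallel in time $O(\ell^2/2^j)$, so your geometric sum collapses to $O(\ell^2)$. The harmonic structure of the proof of Theorem~\ref{th:lower} is essential there: one hidden robot per $\Theta(\ell^2)$ of area, $m$ of them, the $i$-th costing $\Omega(\ell^2/i)$ precisely because only $i$ robots are then awake, summing to $\Omega(\ell^2\log m)$. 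The paper does not graft anything onto the path: it observes that the unconstrained lower bound of Theorem~\ref{th:lower} applies verbatim to budgeted algorithms, and that $\rho\le\ecc_\ell\le 12\rho^2/\ell$ (Lemma~\ref{lem:eccentricity}) gives $\log(\rho/\ell)=\Theta(\log(\ecc/\ell))$, so the $\Omega(\ell^2\log(\rho/\ell))$ term already reads as $\Omega(\ell^2\log(\ecc/\ell))$. If you insist on a single instance realizing both terms simultaneously with the prescribed eccentricity, you would need to embed the full harmonic family of hidden robots (one per $\ell$ of path, not one per scale), which is the actual mechanism of Theorem~\ref{th:lower}, not a constant number of ``$\Omega(\ell^2)$ tollbooths''.
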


\section{Discussion}
Although the input of our algorithms is $(\ell,\rho,n)$, it turns out that only the knowledge of $\ell$ as an upper bound of $\ellstar$ is required. The number of robots to awake $n$ is never used.

Only \Aseparator requires an upper bound $\rho$ on $\rhostar$.
We can  easily get a constant approximation of $\rhostar$ as follows: (1) we build a team of $4\ell$ robots using \dfsampling in time $O(\ell^2 \log \ell)$ and (2) we run  explorations of the $\ell$-separators squares of increasing width $\ell \cdot 2^i$ for $i=1,2,\ldots,k$ until we have an empty separator. Then we take $\rho=\ell \cdot 2^k$ and we can prove that it is a $3$-approximation of $\rhostar$. The overcost is $O(\ell^2 \log \ell + \sum_i^k \ell 2^i)=O(\ell^2 \log \ell+\ell 2^{k+1})=O(\ell^2 \log \ell+\rho)$. Either step (1) does not provide $4 \ell$ robots meaning that $\P$ is covered and that $\rhostar$ is deduced from the discovered robots or (2) we have $\log \ell =O(\log (\rhostar/\ell))$. To conclude, the total overcost of computing a constant approximation of $\rhostar$ from $\ell$ is of same order than \Aseparator.

We presented two algorithms that guarantee an optimal makespan under the hypothesis of knowing an upper bound on $\ellstar$.
\Aseparator provides an optimal makespan in $\Theta(\rhostar+\ell^2\log(\rhostar/\ell))$ without energy constraints, and \Awave provides an optimal makespan of $\Theta(\ecc_\ell+\ell^2 \log\ecc_\ell/\ell)$ with an energy budget in $\Theta(\ell^2 \log \ell)$.
Finally, $\Agrid$ provides a less efficient makespan (apart under some specific relations between $\ell,\rho$ and $\ecc_\ell$), but requires an optimal energy budget of $\Theta(\ell^2)$.

Some questions remain open: can we get an optimal algorithm with only $\Theta(\ell^2)$ energy?

Moreover, our lower bound with energy constraints holds for a specific range of $\ecc$, informally between $\rho$ and $c \cdot (\rho/\ell)^2$ for some constant $c$, whereas we know that $\ecc$ can approach $c \cdot \rho^2/\ell$: can we obtain a lower bound for a broader range of value of $\ecc$?

In our algorithms, robots has to communicate using rendez-vous. Is it possible to achieve efficient solution without face-to-face communication?

\section{Building blocks - details}

\paragraph{Main notations.}  Given $r \ge 0$ and $p\in \mathbb R^2$, we denote by $B_p(r)$ the disk of center $p$ and radius $r$. Given a real $\delta\ge 0$, and $\X \subset \mathbb{R}^2$, the \emph{$\delta$-disk graph} of $\X$ is the edge-weighted geometric graph whose vertex set is $\X$, two points $u,v \in \X$ being connected by an edge if and only if $u$ and $v$ are at (Euclidean) distance at most~$\delta$, and the weight of the edge corresponds to the distance between their endpoints.

We first emphasize a relationship between the different parameters of the point sets:

\begin{proposition}\label{prop:admissible}
For every point set $\P$ and source $s$, for any $\ell \geq \ellstar$:
\[
    0 < \ellstar \le \rhostar \le \ecc_\ell \le n \cdot\ellstar ~.
\]
\end{proposition}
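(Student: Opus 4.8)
The plan is to establish the chain of inequalities $0 < \ellstar \le \rhostar \le \ecc_\ell \le n\cdot\ellstar$ one link at a time, each being an elementary consequence of the definitions.

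First, $0 < \ellstar$: since $s \notin \P$ and $\P$ is finite, the minimum pairwise distance among points of $\P\cup\set{s}$ is a strictly positive real; the $\delta$-disk graph is disconnected for $\delta$ below that minimum (indeed $s$ is isolated), so $\ellstar$ is at least that minimum, hence positive. Next, $\ellstar \le \rhostar$: by definition every point of $\P$ is at distance at most $\rhostar$ from $s$, so in the $\rhostar$-disk graph of $\P\cup\set{s}$ the source $s$ is adjacent to every other vertex; this star is connected, hence $\ellstar \le \rhostar$ by minimality of $\ellstar$.

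For $\rhostar \le \ecc_\ell$ when $\ell \ge \ellstar$: first note $\ecc_\ell$ is finite, because the $\ell$-disk graph of $\P\cup\set{s}$ contains the $\ellstar$-disk graph (more edges can only be added as $\ell$ grows) which is connected, so a spanning tree rooted at $s$ exists. Now fix any spanning tree $T$ of the $\ell$-disk graph rooted at $s$ realizing weighted-depth $\ecc_\ell$, and let $p\in\P$ be a point with $\|p - s\| = \rhostar$. The unique $s$-to-$p$ path in $T$ has total weight at most $\ecc_\ell$ (the depth of $p$ is at most the tree's depth), and by the triangle inequality this total weight is at least $\|p-s\| = \rhostar$; hence $\rhostar \le \ecc_\ell$.

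Finally, $\ecc_\ell \le n\cdot\ellstar$: take a spanning tree $T_0$ of the (connected) $\ellstar$-disk graph of $\P\cup\set{s}$; it has $n+1$ vertices, hence $n$ edges, each of weight at most $\ellstar$. For any vertex $v$, the $s$-to-$v$ path in $T_0$ uses at most $n$ edges, so its weight is at most $n\,\ellstar$; thus $T_0$ has weighted-depth at most $n\,\ellstar$, and since $T_0$ is also a spanning tree of the $\ell$-disk graph (as $\ell \ge \ellstar$ only adds edges), $\ecc_\ell \le n\cdot\ellstar$. The only mild subtlety is making sure $\ecc_\ell$ is finite before comparing it with other quantities, which the monotonicity-in-$\ell$ observation handles; everything else is the triangle inequality and edge-counting in a tree.
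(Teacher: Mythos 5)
Your proof is correct and follows essentially the same route as the paper: the first three inequalities are direct consequences of the definitions, and the last one is obtained by taking a spanning tree of the $\ellstar$-disk graph ($n$ edges, each of length at most $\ellstar$) and bounding the weight of any root-to-vertex path by edge-counting. You are merely more explicit than the paper (which declares the first three inequalities straightforward), and you correctly handle the small subtlety that the bound is $n\cdot\ellstar$ rather than $n\cdot\ell$ by working in the $\ellstar$-disk graph and observing its spanning tree remains one of the $\ell$-disk graph.
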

Thus the input for our algorithm is a tuple $(\ell,\rho,n)$ such that $\ell \leq \rho \leq n\rho$.

\begin{proof}
    The first three inequalities are straightforward using definition. Note $\ell=\ellstar$ and $\rho=\rhostar$. The last one is because: (1) each edge of the $\ell$-disk graph of $\P\cup\set{s}$ has length at most $\ell$; and (2) a path from $s$ to any point of $\P$ in every tree spanning $n=|P|+1$ points contains at most $|P|$ edges. 
\end{proof}

\subsection{Explore}
\label{sec:explore-details}
We now detail how a team of $k$ robots can explore efficiently a square.

\rstexplore*

\begin{proof}
    Let us start by describing a \emph{Single exploration}, that is the path traveled by a single robot $r$ to explore a rectangular region of width $w$ and height $h$.
    Starting from a corner, keep on zigzaging rows by rows separated by $\sqrt{2}$ and every move of length $\sqrt{2}$, do a snapshot to discover up to radius $1$, as shown in Figure~\ref{fig:single-exploration}.
    Note that the algorithm may require an initial move and a final move, to go from $p$ to the origin of the path, and to go from the endpoint of the path to $p'$, both moves being bounded by $w+h$.
    The sum of the length of vertical paths is at most $h'$, and the length of each horizontal path is at most $w$.
    Furthermore, the number of horizontal path is $\lceil h/\sqrt 2 \rceil$, which means that the sum of the length of the horizontal paths is at most $w \lceil h/\sqrt 2 \rceil \in O(w\times h + w)$.
    By summing all three parts, we end up with a complexity of $\sf explore\_single$ in $O(w\times h + w + h)$.
    The validity of the procedure $\sf explore\_single$ comes with the fact that any disk of radius $1$ contains the square with identical center and width $\sqrt 2$.
    
    For a collaborative exploration with $k$ robots, the procedure separates the targeted rectangle into $k$ rectangles, each one being explored by a single robots as shown in Figure~\ref{fig:collaborative-exploration}. 
    Sub-rectangles are of equal dimensions $w' = w$ and $h' = \frac{h}{k}$
    The analysis conducted for the single exploration leads to a complexity of $O(\frac{w\times h}k + w + h)$ for this procedure.
    Since all robots can compute an upper bound on the time complexity for all robots, they can all reach the endpoint $p'$ at the same moment $t'$ and share the information they gathered in their own rectangles.
\end{proof}

\begin{figure}[h!]
    \centering
    \def\ray{100}
    \def\height{55}
    \def\length{187}
    \def\center{10}
    \def\halfstep{20}
    \def\step{40}
    \def\sechalfstep{60}
    \definecolor{col1}{gray}{0.6}
    \definecolor{col2}{gray}{0.9}
    \begin{subfigure}{0.45\textwidth}
        \begin{tikzpicture}[scale=0.8]
            \begin{scope}[shift=({-\ray pt,100pt})]
                \node (robot) at (-\halfstep pt,\height/2 pt) {$r$};
                \foreach \ycoord in {0, \step,...,\height} {
                    \foreach \xcoord in {0,\step,...,\length} {
                        \fill [color=col1] (\xcoord pt,\ycoord pt) rectangle (\xcoord+\halfstep pt,\ycoord+\halfstep pt);
                    }
                }
                \foreach \ycoord in {\halfstep} {
                    \foreach \xcoord in {\halfstep,\sechalfstep,...,\length} {
                        \fill [color=col1] (\xcoord pt,\ycoord pt) rectangle (\xcoord+\halfstep pt,\ycoord+\halfstep pt);
                    }
                }
                \foreach \ycoord in {0,\step,...,\height} {
                    \foreach \xcoord in {\halfstep,\sechalfstep,...,\length} {
                        \fill [color=col2] (\xcoord pt,\ycoord pt) rectangle (\xcoord+\halfstep pt,\ycoord+\halfstep pt);
                    }
                }
                \foreach \ycoord in {\halfstep} {
                    \foreach \xcoord in {0,\step,...,\length} {
                        \fill [color=col2] (\xcoord pt,\ycoord pt) rectangle (\xcoord+\halfstep pt,\ycoord+\halfstep pt);
                    }
                }
                \foreach \ycoord in {0, \step,...,\height} {
                    \foreach \xcoord in {0,\step,...,\length} {
                        \fill (\xcoord + \center pt, \ycoord + \center pt) circle[radius=1pt];
                        \draw (\xcoord + \center pt, \ycoord + \center pt) circle [radius=\center*1.414 pt];
                    }
                }
                \foreach \ycoord in {\halfstep} {
                    \foreach \xcoord in {\halfstep,\sechalfstep,...,\length} {
                        \fill (\xcoord + \center pt, \ycoord + \center pt) circle[radius=1pt];
                        \draw (\xcoord + \center pt, \ycoord + \center pt) circle [radius=\center*1.414 pt];
                    }
                }
                \foreach \ycoord in {0,\step,...,\height} {
                    \foreach \xcoord in {\halfstep,\sechalfstep,...,\length} {
                        \fill (\xcoord + \center pt, \ycoord + \center pt) circle[radius=1pt];
                        \draw (\xcoord + \center pt, \ycoord + \center pt) circle [radius=\center*1.414 pt];
                    }
                }
                \foreach \ycoord in {\halfstep} {
                    \foreach \xcoord in {0,\step,...,\length} {
                    .    \fill (\xcoord + \center pt, \ycoord + \center pt) circle[radius=1pt];
                        \draw (\xcoord + \center pt, \ycoord + \center pt) circle [radius=\center*1.414 pt];
                    }
                }
                \foreach \xcoord in {\halfstep, \step,...,\length} {
                    \draw[->, >=latex, color=red] (\xcoord-\center pt, \center pt) -- (\xcoord+\center pt, \center pt) ;
                    \draw[<-, >=latex, color=red] (\xcoord-\center pt, \halfstep+\center pt) -- (\xcoord+\center pt, \halfstep+\center pt) ;
                    \draw[->, >=latex, color=red] (\xcoord-\center pt, \step+\center pt) -- (\xcoord+\center pt, \step+\center pt) ;
                }
                \draw[->, >=latex, color=blue, very thick] (\center+9*\halfstep pt, \center pt) -- (\center+9*\halfstep pt, \halfstep+\center pt);
                \draw[->, >=latex, color=blue, very thick] (\center pt, \halfstep+\center pt) -- (\center pt, \step+\center pt);
            
                \draw (0,0) rectangle (\length pt, \height pt);
                \draw[<->, >=latex] (0, \height+25 pt) -- (\length pt, \height+25 pt) node[above, midway] {$w$} ;
                \draw[<->, >=latex] (\length+30 pt, 0) -- (\length+30 pt, \height pt) node[right, midway] {$h$} ;
            \end{scope}
        
            \begin{scope}[shift=({0,-50pt})]
                \fill[color=col1] (-100pt,0) rectangle (0,100pt) ;
                \fill[color=col2] (0,0) rectangle (100pt,100pt) ;
                \draw (-50pt,50pt) circle[radius=70.7pt] ;
                \draw (50pt,50pt) circle[radius=70.7pt] ;
                \fill (-50pt,50pt) circle[radius=2pt];
                \fill (50pt,50pt) circle[radius=2pt];
                \draw[->, >=latex, color=red] (-50pt,50pt) -- (50pt,50pt) node[below, midway] {$\sqrt 2$};
                \draw[->, >=latex] (-50pt,50pt) -- (-100pt,100pt) node[above, midway] {$1$};
            \end{scope}
        \end{tikzpicture}
        \caption{Scheme of the Single exploration procedure presented in the proof of Lemma~\ref{lem:collaborative-exploration}.\label{fig:single-exploration}}  
    \end{subfigure}
    \hfill
    \begin{subfigure}{0.45\textwidth}
        \centering
        \begin{tikzpicture}[scale=0.8]
            \draw (-\ray pt,-\ray pt) rectangle (\ray pt, \ray pt);
            \draw (0,0) circle (\ray pt);
            \foreach \width in {-100, -80,...,99} {
                \draw (-100 pt, \width pt) rectangle (100pt, \width+20 pt);
            }
            \node (r1) at (-110pt, -90pt) {$r_1$} ;
            \node (r2) at (-110pt, -70pt) {$r_2$} ;
            \node (ri) at (-110pt, 10pt) {$r_i$} ;
            \node (rk) at (-110pt, 90pt) {$r_k$} ;
            \draw[<->, >=latex] (110pt, -100pt) -- (110pt, 100pt) node[right, midway] {$h$} ;
            \draw[<->, >=latex] (120pt, -100pt) -- (120pt, -80pt) node[right, midway] {$h/k$} ;
            \draw[<->, >=latex] (-100pt, 115pt) -- (100pt, 115pt) node[above, midway] {$w$} ;
        \end{tikzpicture}
        \caption{Scheme of the construction of Lemma~\ref{lem:collaborative-exploration} applied to the collaborative exploration of a disk.\label{fig:collaborative-exploration}}
    \end{subfigure}
    \caption{Depiction of the exploration process.}
    \label{fig:exploration}
\end{figure}
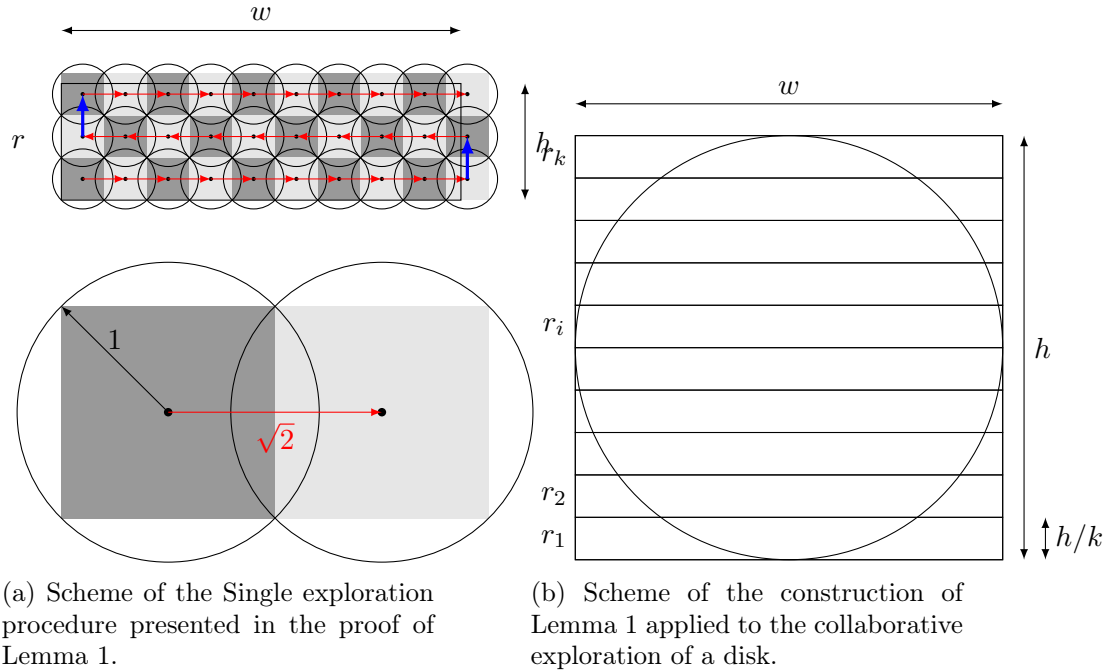

\subsection{Realization of a wake-up tree.}
\label{sec:wut-realization}

In this section we explain how to apply some results about Centralized Freeze Tag Problem to awake robots in a distributed setting.

Once an awake robot $\rob_0$ has learned the positions of sleeping robots in region of the plane of radius $R$, it can compute an arbitrary wake-up tree $\rob_0$ of weighted depth $O(R)$ \cite{BCGH24}. This step is called the \emph{centralized awakening} of the region by $\rob_0$. 
It remains to actually wake up robots. $\rob_0$ has to awake its first child in the wake-up tree and transmit the rest of the tree so the newly awake robot can help. More generally, each new awake robot participate to the propagation of the wake-up tree. 
The procedure starts by having $\rob_0$ moving to its child $\rob_1$ in $\wut_{r}$ (Algorithm~\ref{algo:propagate_wut}, line \ref{algo-line:wut-move}). 
When $\rob_0$ and $\rob_1$ are co-located, $\rob_0$ wakes up $\rob_1$ (line~\ref{algo-line:wut-wake-up}) and shares the wake-up tree $\wut$ (line~\ref{algo-line:wut-share}).
When a robot $\rob_i$ wakes up a robot $\rob_j$, both use a simple procedure to update in parallel what remains of the wake-up tree:
\begin{itemize}
    \item If the subtree of $\wut_i$ rooted in $\rob_j$ is empty, ie $\rob_j$ is a leaf, none of $\rob_i$ or $\rob_j$ has anything left to do (line~\ref{algo-line:wut-init-empty} and \ref{algo-line:wut-empty}).
    \item If $\rob_j$ has a unique child in $\wut_i$, $\rob_j$ updates its local wake-up tree  $\wut_j \gets \wut_i\setminus \{\rob_i\}$ so $\wut_j$ is rooted in $\rob_j$ (line~\ref{algo-line:wut-init-single}) and $\rob_i$ is done (line~\ref{algo-line:wut-single}). 
    $\rob_j$ wakes up its unique child in $\wut_j$ and recursively propagate $\wut_j$.
    \item If $\rob_j$ has two children, $\wut_i$ is separated into two sub-trees: $\rob_i$ keeps the right-hand sub-tree (line~\ref{algo-line:wut-update}) while $\rob_j$ keeps the left-hand sub-tree (line~\ref{algo-line:wut-init-update}). Both robots update their memory accordingly and moves to the unique child of their respective wake-up trees to continue the propagation.
\end{itemize}

\begin{algorithm}[H] \label{algo:propagate_wut}
\caption{Propagate Wake-Up Tree - Code for a robot $\rob$}
\Input{\wut}
    \lIf{$|\wut| = 0$}{\stop}\nllabel{algo-line:wut-init-empty}
    \lElseIf{$|\wut| = 1$}{$\wut \gets \wutchild (\wut)$}\nllabel{algo-line:wut-init-single}
    \lElseIf{$|\wut| = 2$}{$\wut \gets \wutchild_1 (\wut)$}\nllabel{algo-line:wut-init-update}
    
    \While{\wut}{
        $\dest \gets \root(\wut)$\;
        $\move(\dest)$\; \nllabel{algo-line:wut-move}
        $\wakeup(\dest, \wut)$\; \nllabel{algo-line:wut-wake-up}
        $\exchange(\wut)$\; \nllabel{algo-line:wut-share}
        \lIf{$|\wut| = 0$}{\stop}\nllabel{algo-line:wut-empty}
        \lElseIf{$|\wut| = 1$}{\stop}\nllabel{algo-line:wut-single}
        \lElseIf{$|\wut| = 2$}{
            $\wut \gets \wutchild_2 (\wut)$%
            \nllabel{algo-line:wut-update}
        }
    }
\end{algorithm}

\centralized*
\begin{proof}
    In \cite{BCGH24}, authors show that if a set of sleeping robot is contained within a disk of radius $R'$ around an awake robot, one can compute a wake-up tree  with makespan $5\sqrt{2}R'$.
    The proof follow by considering the disk of center $\pos[\rob]$ and radius $\frac{R}{\sqrt{2}}$. 
    \squarereg is entirely contains in that disk.
    By assumption $\rob$ known the positions of sleeping robots $\Sleeping$ within \squarereg. 
    Therefore $\rob$ can awake \squarereg by computing an propagating a wake-up tree with makespan $\frac{R}{\sqrt{2}} \sqrt{2} = 5R$.
    The correctness is due to the assumption that no robots other than $r \cup \Sleeping$ awake robots in \squarereg, which avoid any risk of conflicts.
\end{proof}

\subsection{Local Synchronization}
\label{sec:local-sync}

Let us prove Corollary~\ref{cor:simple-square-wup} whenever several awake robots located in the same square aim at waking up sleeping robots of the square. We assume that these awake robots start their action at the same time:
\simplesquarewup*
\begin{proof}
    Every awake robots move to the lower-left corner of $\squarereg$ in at most $\sqrt{2}R$. 
    One of them explores $\squarereg$ in time $R^2 + 5R$ (Lemma~\ref{lem:collaborative-exploration}) and move to the center of \squarereg.
    Among all awake robots of \squarereg, only that one does take action.
    Lemma~\ref{lem:centralized-awakening} apply: it can awake every sleeping robots in time $5R$.
\end{proof}

\subsection{\texorpdfstring{$\ell$}{l}-sampling and Eccentricity}

As discussed in Section~\ref{sec:dfsampling}, computing an $\ell$-sampling of a set of positions $\P$ is an efficient way to discover $\P$, and thus to wake up robots.
In this section we show some relationships between the values $\ell, \rho$ and $\ecc_\ell$.

\rstsamplingsize*
\begin{proof}
    For any $p_i,p_j \in \P'$, $B_{p_i}(\ell/2) \cap B_{p_j}(\ell/2) = \emptyset  $. 
    The area of each ball is $\pi \ell^2/4$. Although $p_i$'s are located within $\squarereg$, it may happen that only a fraction of $B_{p_i}(\ell/2)$ is contained within $\squarereg$. 
    This fraction is at least $1/4$ whenever $p_i$ is located at a corner of $\squarereg$. 
    Since $\bigcup_{i=1}^{|\P'|} (B_{p_i} \cap \squarereg) \subset \squarereg$, we have $|\P'| \pi \ell^2/16 \leq R^2$. 
\end{proof}

\begin{lemma}
    \label{lem:eccentricity}
    Let us consider a point set $(\P,s)$.
    For any $\ell \geq \ellstar$, we have $\ecc_\ell \in [\rhostar, \frac {12\rhostar^2}\ell]$, and for any position $p\in\P$, there exists a path from $s$ to $p$ in the $\ell$-disk graph which is at most $1+\frac{2\ecc_\ell}\ell$-hops long.
\end{lemma}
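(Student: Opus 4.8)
The plan is to treat the three assertions separately, in increasing order of effort. The bound $\rhostar\le\ecc_\ell$ is the easy one: it is already part of Proposition~\ref{prop:admissible}, and can be seen directly by noting that in any spanning tree of the $\ell$-disk graph of $\P\cup\set{s}$ rooted at $s$, the path from $s$ to the point $p^\star\in\P$ with $|sp^\star|=\rhostar$ has weighted length at least $|sp^\star|=\rhostar$ by the triangle inequality, so the weighted depth of the tree is at least $\rhostar$.

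For the upper bound $\ecc_\ell\le 12\rhostar^2/\ell$, the spanning tree I would exhibit is the shortest-path (Dijkstra) tree of the $\ell$-disk graph $G$ of $\P\cup\set{s}$ rooted at $s$: in it the weighted depth of a vertex $v$ equals $d_G(s,v)$, so it suffices to bound $\max_v d_G(s,v)$. Fix $v$ and choose a shortest $s$--$v$ path in $G$ using as few edges as possible, $s=y_0,y_1,\dots,y_k=v$. A standard shortcut argument shows $|y_iy_j|>\ell$ whenever $|i-j|\ge 2$: otherwise $y_iy_j$ is an edge of $G$ and replacing the sub-path $y_i,\dots,y_j$ by it would not increase the weight (triangle inequality) while strictly decreasing the number of edges, a contradiction. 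Hence $\set{y_0,y_2,y_4,\dots}$ is a set of points pairwise at distance $>\ell$, all lying in $\overline{B_s(\rhostar)}$; surrounding each by a pairwise-disjoint disk of radius $\ell/2$ and comparing areas inside $\overline{B_s(\rhostar+\ell/2)}$ gives $\ceil{(k+1)/2}\le (1+2\rhostar/\ell)^2$ (the disk analogue of Lemma~\ref{lem:disk-covering}). Therefore $k\le 2(1+2\rhostar/\ell)^2-1$, and since every edge of the path has length at most $\ell$, $d_G(s,v)\le k\ell$, which a short computation bounds by $12\rhostar^2/\ell$ (when $\ell$ is comparable to $\rhostar$ the bound $d_G(s,v)\le\rhostar$ from the edge $sv$, or a sharper packing count, is used instead).

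For the hop bound I would reuse exactly the same fewest-edges shortest path, now from $s$ to a prescribed $p\in\P$. Since an optimal eccentricity tree has an $s$--$p$ path of weighted length at most $\ecc_\ell$, we get $d_G(s,p)\le\ecc_\ell$. The skip-$2$ property $|y_iy_{i+2}|>\ell$ together with the triangle inequality gives $|y_iy_{i+1}|+|y_{i+1}y_{i+2}|>\ell$ for every $i$, so summing over the $\floor{k/2}$ disjoint consecutive edge-pairs yields $\floor{k/2}\cdot\ell<d_G(s,p)\le\ecc_\ell$, whence $k<1+2\ecc_\ell/\ell$. Thus this path is an $s$--$p$ path in the $\ell$-disk graph of at most $1+2\ecc_\ell/\ell$ hops.

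The structural content is all in the single observation that a minimal-length/minimal-edge path cannot have two vertices within $\ell$ that are two or more steps apart; everything else is bookkeeping. Accordingly, I expect the only real obstacle to be cosmetic: pinning the constant down to exactly $12$ requires carefully tracking the disk-packing estimate and its $O(\rhostar/\ell)$ boundary term (and being mildly careful that the claimed interval $[\rhostar,12\rhostar^2/\ell]$ is non-degenerate only in the regime $\ell=O(\rhostar)$, which is the one of interest since $\ellstar\le\rhostar$).
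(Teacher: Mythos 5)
Your proof is correct and follows essentially the same route as the paper's: a minimum-hop shortest path, the skip-two property $|p_ip_{i+2}|>\ell$, a packing bound on the even-indexed vertices for the $12\rhostar^2/\ell$ estimate, and the pairing of consecutive edges for the $1+2\ecc_\ell/\ell$ hop count. The only divergence is that the paper invokes its square-region packing lemma (Lemma~\ref{lem:disk-covering}) to reach the constant $12$ directly, whereas your disk-packing estimate leaves a slightly larger constant to be tightened; as you note, this is purely cosmetic.
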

\begin{proof}
    The lower bound $\ecc_\ell \geq \rhostar$ is straightforward.

    Let $s,p_1,\ldots,p_k=p$ be a shortest path from $s$ to $p$ in the $\ell$-disk graph of $(\P,s)$, which is minimal in terms of hops.
    By definition, $\forall 1\leq i \leq k-2, |p_ip_{i+2}| > \ell$, otherwise $p_i$ and $p_{i+2}$ are connected in the $\ell$-disk graph, and therefore there exists another shortest path, smaller by one hop.
    Therefore, $d_\ell(s,p)$ is at least $\ell$ times the number of hops from $p_{2i}$ to $p_{2i+2}$: $d_\ell(s,p) \geq \ell \floor{k/2} \geq \ell \frac{k-1}2$.
    Since $d_\ell(s,p) \leq \ecc_\ell$ we conclude $k\leq 1 + 2\ecc_\ell/\ell$

    Note also that we can apply Lemma~\ref{lem:disk-covering} to the set of robots of even index in that path.
    Therefore we obtain $\floor{k/2} \leq \frac{16\rho^2}{\pi\ell^2}$ and thus $k \leq 1+\frac{32\rho^2}{\pi\ell^2} < \frac{36 \rho^2}{\pi \ell^2} < 12 \rho^2/\ell^2$.
    Furthermore, we have $d_\ell(s,p) \leq k\ell$, which imply that $d_\ell(s,p) \leq 12 \rhostar^2/\ell$.
    Since this is true for any $p$, and $\ecc_\ell = \max_{p\in \P} d_\ell(s,p)$ we deduce $\ecc_\ell(G) \leq 12 \rhostar^2 / \ell$.
\end{proof}

\subsection{Description and proof of Distributed \texorpdfstring{$\ell$}{l}-sampling}
\label{sec:detail-dfs}

We describe more formally the algorithm \dfsampling.
In a region \squarereg a team \team compute an $\ell$-sampling of \squarereg, starting from a given set of seeds $\X$. If robots have been awakened previously in \squarereg, we denote by $A$ the set of their initial positions. $A$ is assumed known by \team. The output is a set of points $\P'$ that is a $\ell$-sampling of \squarereg. In practice, we use \dfsampling to recruit a team $\team'$ of at most $4\ell$ robots identified by their positions $\P'$.

Let us start by defining a function $\textsc{Sort}(\X)$ to order positions of seeds $\X = \{X_i\}$ whenever $\X > 1$.
Seeds are ordered with respect to a projection on the borders of \squarereg in the clockwise order around the center of square $\squarereg$. 
More precisely, a seed $X_i$ is projected to the closest point of the border of $\squarereg$ breaking tie by choosing the first projected point in the clockwise order.

\begin{enumerate}
    \item $\P'=\emptyset$. Positions $\X=[X_1,X_2,\ldots,X_j]$ are ordered with $\textsc{Sort}(\X)$. Set $i=1$. Add $X_i$ to $\P'$.
    
    \item \label{item:dfs-step-2} Team $\team$ moves to $X_i$ and performs a Depth First Search traversal of the $2\ell$-disk graph $G$ induced by $\P \cap \squarereg$. 
    A neighbor $p'$ of a current position $p$ is either known if $p' \in A$ or is found using $\explore(B_p(2\ell),p)$. 
    The team explores as far as possible along each branch before backtracking and stop if $|\P'|=4\ell$. 
    A stack keeps track of the positions of $P'$ discovered so far along a specified branch which helps in backtracking of the graph. 
    We have one condition to move to $p'$: 
    \emph{a position $p'$ is added to $\P'$ if its distance to any point of $\P'$ is greater than $\ell$.}
    If the robot located at $p'$ is sleeping, it is awakened by the team $\team$ and added to $\team$.
    
    \item If $|\P'|<4\ell$ and $i<j$ then $i \gets$ the index of the next seed of $X_{i'}$ such that $B_{X_{i'}}(\ell)$ is not covered and repeat Step~\ref{item:dfs-step-2};
    \item return $\P'$
\end{enumerate}

We are now ready to prove the corresponding lemma:
\rstdfsampling*

\begin{proof}
    By construction, any $p'$ added to $\P'$ is at distance greater than $\ell$ from any other point of $P'$. 
    Inductively, any pair of positions in $\P'$ has a pairwise distance greater than $\ell$. Thus the output of \dfsampling is an $\ell$-sampling of $\squarereg$.
    Let us now analyze the time required to obtain that $\ell$-sampling.

    The Depth First Search traversal uses three operations: find neighbors, move to a neighbor and backtrack if no new neighbor is added.

    In the following $\team$ stands for the current team of robots.
    As soon as the robot team wakes up a sleeping robot, it is added to \team so the team size can increase.
    Let us start with seed $X_1$.
    The time to find the list of potential neighbors in the $2\ell$-disk graph is the time to explore $B_p(2\ell)$.
    From Lemma \ref{lem:collaborative-exploration}, it takes a time in $O(\ell^2/|\team|)$.
    Thus exploring and moving to a neighbor takes $O(\ell^2/|\team|+\ell)$.

    Let us first consider the case where the initial team $|\team|=1$ and $\X=A={p_s}$.
    Assume that at some point, $k$ robots were awakened and added to $\team$.
    This happens as soon as there is a sleeping robot on a position $p'$ that is added to $\P'$. 
    The time to find and wake up these $k$ robots is upper bounded by $\sum_{i=1}^k O(\ell^2/i) \in O(\ell^2 \clog k)$. 

    Let us focus on the time dedicated to moves. For recruiting $k \geq 2$ robots, $\team$ either have to move forward to a neighbor at a distance at most $2 \ell$, or backtrack to branch from a neighbor of a previously added robot.
    The amount of moves required to backtrack is at most $2 k \ell$ units since the tree corresponding to the Depth First Search traversal has $k$ edges of length at most $2 \ell$.
    In total, the total time of backtracking is less than $2k\ell \leq 8\ell^2 \in O(\ell^2)$ since $k \leq 4\ell$. 
    In total, it takes $O(\ell^2 \clog k)$ for this first case.

    The second case is whenever we start with a team of size $\ell$ and a set of seeds $\X = \P \cap \sep(\squarereg)$.
    The exploration is now done collaboratively with $\ell$ robots. By Lemma~\ref{lem:collaborative-exploration}, for any position $p' \in \P'$ the exploration phase takes $O(\ell)$ .
    As in the first case, the different moves take $O(k\ell)$.
    Thus the total time of exploration for the recruitment of $k$ robots takes $O(k \ell)$. 

    The second change is that the Depth First Search traversal starting from seed $X_1$ can stop before reaching $4 \ell$ robots.
    In this case the search goes on starting from the next seed.
    We have to add the time corresponding to the moves from one seed to another.
    Note that \team starts a branch from at most $4 \ell$ positions $X_{i_1},\ldots,X_{i_{4 \ell}}$ of $\X$.
    Assume that \team recruit $k$ seeds among these $4 \ell$ positions. 
    We rely on the fact that $\X \subseteq \sep(\squarereg)$ and the ordering computed by $\textsc{Sort}(\X)$ to bound the total duration of theses intermediates moves.
    
    Let $Y_{i}$ be the projected point of $X_i$ on the border of $\squarereg$. 
    By definition of the ordering $\textsc{Sort}(\X)$, the distance between two consecutive seeds $X_{i_j}$ and $X_{i_{j+1}}$ is at most $|X_{i_j}Y_{i_j}|+ |Y_{i_j},Y_{i_{j+1}}|+|X_{i_{j+1}}Y_{i_{j+1}}| \leq 2 \ell + d_1(Y_{i_j},Y_{i_{j+1}}) $ where $d_1(p,p')$ is the distance in the $L_1$-norm for $p,p' \in \mathbb{R}^2$. 
    Thus the length of $\sum_{j=1}^{k-1} |X_{i_j}X_{i_{j+1}}| \leq \sum_{j=1}^{k-1} d_1(Y_{i_j},Y_{i_{j+1}}) + 2 \ell \leq 4R + 2 k \ell$ since $\sum_{j=1}^{k-1} d_1(Y_{i_j},Y_{i_{j+1}})$ is at most the perimeter of the square. 
    To conclude, it takes $O(R + k \ell)$.
    
    We now focus on the properties (1) and (2).
     Assume that \dfsampling ends. 
     We show by contradiction that if $|\P'| < 4 \ell$ then $\squarereg$ is covered by $\P'$: $\forall p \in \P \cap \squarereg, \exists p' \in \P'$ such that $p \in B_{p'}(2\ell)$.

    Assume that there exists a point $p_0 \in \P \cap \squarereg$ that is not covered by $\P'$. 
    Since the connectivity threshold is less than $\ell$, the $\ell$-disk graph $G$ of $\P$ is connected and any pair of points in $\P$ are linked by a path in $G$. 
    In particular, there must exist a seed $p_x \in \X$ such that there is a path $b = (p_0, p_1, ..., p_x)$ in $G$ entirely contains in \squarereg: $\bigcup_{i=1}^x, p_i \subset \squarereg$. 
    It is trivial if $\P$ is fully contained in $\squarereg$ \ie when $R \geqslant 2\rhostar$.
    
    Else, if there is in $\P$ at least one point $q$ outside of $\squarereg$ and $\X =  \P \cap \sep(\squarereg)$. 
    From Lemma~\ref{lem:separator-path} any path from $p_0 \in \squarereg^{in}$ to $ q \in \squarereg^{out}$ contains at least one position $p_x \in \sep(\squarereg) \cap \X$.
    Consider a path $b$ from $p_0$ to $p_x$. 
    By assumption, $p_0$ is not covered by $\P'$. 
    Let $p_i$ be the first position on the path such that $p_i$ is not covered by $\P'$ and $p_{i+1}$ is covered. 
    By construction of $\P'$ if $p_{i+1}$ is covered there is a position $p'_{i+1} \in \P'$ contains the close neighborhood of $p_{i+1}$ in $G$ \ie $|p_{i+1}p'_{i+1}| \leqslant \ell$ or $p_{i+1} = p'_{i+1}$. 
    This implies that $B_{p'_{i+1}}(2\ell)$ covered by $P'$. 
    Since $|p_ip_{i+1}| \leqslant \ell$ and $|p_{i+1}p'_{i+1}| \leqslant \ell$ we get $|p_ip'_{i+1}| \leqslant 2\ell$ which conflicts with the assumption that $p_i$ is not covered.
\end{proof}

\section{Proofs of \texorpdfstring{\Aseparator}{Aseparator}}
\label{sec:asep-proof}
We first prove the following lemma, which will be useful in the proofs of the makespan of \Aseparator and \Awave.
\begin{lemma}
    \label{lem:BigO}
    For any $r\geq \ell \geq 1$ we have:
    $O(r + \ell^2 ( \clog \min\{\ell,r / \ell\} + \clog \frac{r}{\ell^{3/2}} ) ) \subseteq O(r + \ell^2 \clog (r / \ell) )$
\end{lemma}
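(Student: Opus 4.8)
The plan is to show directly that, for every $r \geq \ell \geq 1$, the function inside the left-hand $O(\cdot)$ is bounded by a fixed constant times the function inside the right-hand $O(\cdot)$; the inclusion of the two $O$-classes then follows immediately. Concretely, I would establish
\[
    r + \ell^2\big(\clog\min\{\ell, r/\ell\} + \clog(r/\ell^{3/2})\big) \;\leq\; 2\,\big(r + \ell^2 \clog(r/\ell)\big).
\]

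I would handle the two logarithmic terms on the left by monotonicity of $\clog$. First, since $\ell \geq 1$ we have $\ell^{3/2} \geq \ell$, hence $r/\ell^{3/2} \leq r/\ell$, and therefore $\clog(r/\ell^{3/2}) \leq \clog(r/\ell)$. Second, $\min\{\ell, r/\ell\} \leq r/\ell$ gives $\clog\min\{\ell, r/\ell\} \leq \clog(r/\ell)$. (If one prefers to evaluate the $\min$ explicitly, split on whether $r \geq \ell^2$, in which case $\min\{\ell, r/\ell\} = \ell \leq r/\ell$, or $\ell \leq r < \ell^2$, in which case $\min\{\ell, r/\ell\} = r/\ell$; the bound is immediate in both.) Adding the two inequalities, multiplying by $\ell^2$, and adding $r$ to both sides yields the displayed estimate, hence the claimed inclusion.

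The only property invoked is that $\clog$ is non-decreasing, which holds by definition (it is the corrected logarithm used throughout), so there is no real obstacle here; note also that the term $r$ on the right dominates the term $r$ on the left verbatim, so no hypothesis on the size of $\clog$ is needed. The statement is isolated as a lemma only because this simplification is applied twice later, in the makespan analyses of \Aseparator and \Awave, where the round-by-round accounting naturally produces the more cumbersome left-hand expression.
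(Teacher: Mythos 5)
Your proof is correct, and it is cleaner than the paper's. The paper proves the same target inequality (the bracketed sum is at most $2\clog(r/\ell)$ up to an additive constant) but does so by splitting on which argument realizes the minimum: when $\min\{\ell,r/\ell\}=r/\ell$ it argues exactly as you do, but when $\min\{\ell,r/\ell\}=\ell$ it substitutes $\ell$ for the min and then runs a multi-step computation, bounding $\log(1+\ell)+\log(1+r/\ell^{3/2})$ by $\log(4r/\sqrt{\ell})$ via the product $(1+\ell)(1+r/\ell^{3/2})$ and then by $\log 4 + 2\log(r/\ell)$ using $\ell\le r/\ell$. Your single observation that $\min\{\ell,r/\ell\}\le r/\ell$ holds unconditionally makes the case split, and the entire second-case computation, unnecessary; both arguments ultimately rest only on the monotonicity of $\clog$ and on $r/\ell^{3/2}\le r/\ell$ for $\ell\ge 1$. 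As a side benefit your version yields an explicit multiplicative constant of $2$ rather than the paper's implicit constants, and nothing is lost in generality.
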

\begin{proof}
    $r/\ell^{3/2}\leq r/\ell$ and we have two cases:
    \begin{enumerate}
        \item $\ell \geq r/\ell \geq r/\ell^{3/2}$ implying $\clog (\min\{\ell,r / \ell\} + \clog \frac{r}{\ell^{3/2}}) \leq 2 \clog r/\ell.$
        \item $1 \leq \ell \leq r/\ell \leq r/\ell^{1/2}$.
    \end{enumerate}
    In this second case, $\clog{\ell}+\clog{r/\ell}=\log(1+\ell)+\log(1+r/\ell^{3/2}) \leq \log (1+\ell+r/\ell^{3/2}+r/\ell^{1/2}) \leq \log (4r/\ell^{1/2})$.
    We also have $\log( r/\ell^{1/2})=\log\ell^{1/2} + \log r/\ell \leq \log\ell + \log r/\ell \leq 2 \log r/\ell$.
    Thus $\clog (\min\{\ell,r / \ell\} + \clog \frac{r}{\ell^{3/2}}) \leq \log 4 + 2\log(r/\ell) \in O(\clog (r/\ell))$.
\end{proof} 

\paragraph{Makespan \Aseparator}
\begin{restatable}[Makespan of \Aseparator]{lemma}{separatorMakespan}
    \label{lem:Aseparate-makespan}
    For every admissible tuple $(\ell, \rho, n)$ and for any $n$-point set $\P$ and source $s$ such that $\ellstar \leq \ell$ and $\rhostar \leq \rho$, the execution of $\Aseparator$ with source $s$ and inputs $(\ell, \rho, n)$ on $\P$ terminates in time $O(\rho + \ell^2 \clog(\rho/\ell))$
\end{restatable}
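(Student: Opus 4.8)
The plan is to bound the makespan round by round, as hinted in the roadmap. First I would establish that the total number of rounds is $O(\clog(\rho/\ell))$. Since Round $0$ starts with a square of width $R_0 = 2\rho$ and each round partitions squares into four sub-squares of half the width, the square width at Round $k$ is $R_k = 2\rho/2^k = \rho/2^{k-1}$. The process stops (via the \textsf{Termination} phase) once a team has fewer than $4\ell$ robots, but even in the worst case where all teams remain full, once $R_k \le c\ell$ for a suitable constant the square is small enough that \dfsampling on it cannot find $4\ell$ pairwise-$\ell$-separated points (by Lemma~\ref{lem:disk-covering}, an $\ell$-sampling of a width-$R_k$ square has at most $16R_k^2/(\pi\ell^2)$ points, which drops below $4\ell$ once $R_k^2 < \pi\ell^3/4$, i.e. $R_k = O(\ell^{3/2})$), so the recursion terminates; this gives at most $O(\log(\rho/\ell))$ rounds, and I would record this as a sub-claim matching Lemma~\ref{lem:nbRounds}.

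Next I would bound the duration of a single Round $k \ge 1$ for one team in a square of width $R_k$. The phases are: \textsf{Partition} (free, just a relabeling of $\squarereg$ and $\team$), \textsf{Exploration} (four invocations of \explore by teams of $\ell$ robots over the separator $\sep(\squarereg[i])$, which is contained in a $O(R_k)\times O(\ell)$ frame, costing $O(R_k\ell/\ell + R_k + \ell) = O(R_k)$ by Lemma~\ref{lem:collaborative-exploration}), \textsf{Recruitment} (\dfsampling with a team of $\ell$ and seeds $\X_i = \P\cap\sep(\squarereg[i])$, costing $O(R_k + \ell\cdot|\P'|) = O(R_k + \ell^2)$ by Lemma~\ref{lem:dfsampling} since $|\P'| \le 4\ell$, plus the move to the center which is $O(R_k)$), and \textsf{Reorganization} (a merge at the center of $\squarereg$ plus a move of each reformed team to the center of its sub-square, both $O(R_k)$; waiting synchronizes the four parallel teams but does not add asymptotically since all four sub-durations are $O(R_k+\ell^2)$ and each team can compute a common upper bound). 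Summing, Round $k$ takes $O(R_k + \ell^2) = O(\rho/2^k + \ell^2)$, which I would state as Lemma~\ref{lem:durationRound}. Round $0$ is similar: \dfsampling from the single robot $s$ with $\X=\{\pos[s]\}$ and $R_0 = 2\rho \ge 2\rhostar$ costs $O(\ell^2\clog|\P'|) = O(\ell^2\clog\ell)$ by Lemma~\ref{lem:dfsampling}, plus an $O(\rho)$ move to the center, so Round $0$ costs $O(\rho + \ell^2\clog\ell)$.

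Finally I would sum over all rounds. The total is $O(\rho + \ell^2\clog\ell) + \sum_{k=1}^{O(\clog(\rho/\ell))} O(\rho/2^k + \ell^2)$. The geometric sum $\sum_k \rho/2^k$ telescopes to $O(\rho)$, and the $\ell^2$ term contributes $O(\ell^2\clog(\rho/\ell))$ since there are $O(\clog(\rho/\ell))$ rounds; one must be slightly careful that the number of rounds is not merely $O(\log(\rho/\ell))$ but can also be bounded by $O(\log(\ell \cdot r/\ell^{3/2})) = O(\log(r/\ell^{1/2}))$ type expressions when the termination-via-small-square kicks in before termination-via-small-team, which is exactly the shape $\clog\min\{\ell, r/\ell\} + \clog(r/\ell^{3/2})$ appearing in Lemma~\ref{lem:BigO}. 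Invoking Lemma~\ref{lem:BigO} with $r = \rho$ absorbs the $\ell^2\clog\ell$ from Round $0$ and the per-round $\ell^2$ factors into the clean bound $O(\rho + \ell^2\clog(\rho/\ell))$. The main obstacle I anticipate is the careful bookkeeping of the \textsf{Reorganization}/synchronization step — arguing that the four teams, which may finish their exploration and recruitment at slightly different real times, can nonetheless all proceed to the next round after a common wait bounded by $O(R_k + \ell^2)$, using the fact that each robot can precompute a uniform upper bound on every team's phase durations (this is the same device used in the proof of Lemma~\ref{lem:collaborative-exploration}); once that is in place, the arithmetic is routine.
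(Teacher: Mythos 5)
Your overall strategy is the same as the paper's: bound the number of rounds by $O(\clog(\rho/\ell^{3/2}))$ (Lemma~\ref{lem:nbRounds}), bound each Round $k\geq 1$ by $O(R^{(k)}+\ell^2)$ (Lemma~\ref{lem:durationRound}), sum the geometric series, and clean up with Lemma~\ref{lem:BigO}. However, there is a genuine gap in your treatment of Round~0. You bound the cost of the initial \dfsampling by $O(\ell^2\clog|\P'|)=O(\ell^2\clog\ell)$ using only $|\P'|\leq 4\ell$, and then claim that Lemma~\ref{lem:BigO} absorbs the $\ell^2\clog\ell$ term. It does not: Lemma~\ref{lem:BigO} requires the first logarithm to be $\clog\min\{\ell,\rho/\ell\}$, not $\clog\ell$, and the inclusion $O(\rho+\ell^2\clog\ell)\subseteq O(\rho+\ell^2\clog(\rho/\ell))$ is false when $\rho/\ell$ is small. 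For instance, with $\rho=2\ell$ the target bound is $O(\ell^2)$ while your analysis only yields $O(\ell^2\log\ell)$.

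The missing ingredient is the packing bound of Lemma~\ref{lem:disk-covering}: an $\ell$-sampling of the width-$2\rho$ square has at most $O(\rho^2/\ell^2)$ points, so $|\P'|\leq\min\{4\ell,O(\rho^2/\ell^2)\}$ and hence $t_0\in O(\ell^2\clog\min\{\ell,\rho/\ell\})$, using $\clog(x^2)=O(\clog x)$. This is exactly the form Lemma~\ref{lem:BigO} needs, and it is how the paper argues (Equation~\ref{eq:Asep-rd0}). Since you already invoke Lemma~\ref{lem:disk-covering} to bound the number of rounds, the fix is local; with it in place, the rest of your plan --- the per-round phase accounting, the $O(\clog(\rho/\ell^{3/2}))$ round count, and the synchronization of the four teams via precomputed uniform upper bounds on phase durations --- is correct and identical in substance to the paper's argument.
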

\begin{proof}
    Since \Aseparator is decomposed into rounds, the proof is done by bounding the duration of a Round $k$, denoted $t_k$, and the number of rounds until \Aseparator terminates.
    
    The duration of Round 0 is the duration of recruiting a team \team of size up to $4\ell$ within a square of width $2\rhostar$. 
    The recruitment is done by computing a $\ell$-sampling of the square with \dfsampling.
    From Lemma~\ref{lem:disk-covering} there is at most $32 \rhostar^2 / (\pi \ell^2)$ points in a $\ell$-sampling of a square of width $2\rhostar$, therefore $|\team| = \min(4\ell, 32 \rhostar^2 / (\pi \ell^2))$.
    Lemma~\ref{lem:dfsampling} guarantee that the time of \dfsampling depends of the size of the output team. 
    Hence the duration of Round 0 is:
    \begin{equation}
    \label{eq:Asep-rd0}
        t_0 \in O(\ell^2 \clog (\min\{\ell,\rhostar / \ell\})) \subseteq O(\ell^2 \clog (\min\{\ell,\rho / \ell\})).
    \end{equation}

    For $k \geq 1$, we denote by $\squarereg^{(k)}$ and $\team^{(k)}$ a square region and a team given as inputs at Round $k$.
    Let $R^{(k)}$ be the width of $\squarereg^{(k)}$. Note that for a given $k>1$, several squares are treated in parallel so neither $\squarereg^{(k)}$ nor $\team^{(k)}$ are unique.
    Given $k$, $R^{(k)} = \frac{2 \rho }{2^{k-1}}$ since at each round, the square $\squarereg^{(k)}$ is partitioned into four squares implying that $R^{(k + 1)}$ is the half of $R^{(k)}$, starting with $R^{(1)} = 2\rho$.
    
    Round $k \geq 1$ with input $\squarereg^{(k)}$ and $\team^{(k)}$ is a \emph{terminating} Round if it begins with $|\team^{(k)}| < 4\ell$, otherwise it is a \emph{partitioning} Round. 
    A terminating round only takes time $O(R^{(k)}) \subseteq O(\rho)$ to move toward the center of the sub-square and wake-up remaining sleeping robots discovered in the \dfsampling execution.
    
    Let us focus on partitioning rounds. 
    We show in Lemma~\ref{lem:durationRound} that for $k \geq 1, t_k = O(R^{(k)} + \ell^2)$.

    Let $k_{max}$ be the maximum number of Rounds. From Lemma~\ref{lem:nbRounds}, $k_{max} = O(\clog \frac{\rho}{\ell^{3/2}})$. The proof follows by summing the duration of Rounds from $k = 0$ to $k_{max}$
    \begin{align}
        T_{\Aseparator} &= \sum^{k_{max}}_{k=0} t_k = t_0 + \sum^{k_{max}}_{k=1} t_k  \\
        ~ & \in O(\ell^2 \clog (\min\{\ell,\rho / \ell\})) + \sum^{k_{max}}_{k=1}O(R^{(k)} + \ell^2)\\
        ~ & \in O(\ell^2 \clog (\min\{\ell,\rho / \ell\})) + \sum^{k_{max}}_{k=1} O(\frac{2\rho}{2^{k-1}}) + k_{max}O(\ell^2)\\
        ~ & \in O(\ell^2 \clog (\min\{\ell,\rho / \ell\})) + O(\rho) + O(\clog\frac\rho{\ell^{3/2}})O(\ell^2)\\
        ~ & \in O(\rho + \ell^2 ( \clog \min\{\ell,\rho / \ell\} + \clog \frac{\rho}{\ell^{3/2}} ) ) \\
        ~ & \in O(\rho + \ell^2 \clog (\rho / \ell) )
    \end{align}
    The last line comes from Lemma~\ref{lem:BigO}.
\end{proof}

\begin{lemma}
    \label{lem:Aseparate-valid}
    For every admissible tuple $(\ell, \rho, n)$ and for any $n$-point set $\P$ and source $s$ such that $\ellstar \leq \ell$ and $\rhostar \leq \rho$, the execution of $\Aseparator$ with source $s$ and inputs $(\ell, \rho, n)$ on $\P$ eventually wakes up all robots.
\end{lemma}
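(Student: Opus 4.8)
The plan is to split the claim into a \emph{termination} part (every branch of the algorithm ends in finite time) and a \emph{correctness} part (when it ends, no robot of $\P$ is left asleep); together these give that the execution eventually wakes up all robots. The backbone of both is the observation that \Aseparator unfolds as a recursion whose nodes are Round-$k$ invocations on a square $\squarereg$ of width $R^{(k)} = 2\rho/2^{k-1}$, each partitioning round spawning at most four children on the quadrant sub-squares. For termination, I would argue the recursion depth is finite: by Lemma~\ref{lem:disk-covering} every $\ell$-sampling of a width-$R$ square has at most $16R^2/(\pi\ell^2)$ points, so once $R^{(k)}$ drops to $O(\ell^{3/2})$ the team produced by \dfsampling necessarily has fewer than $4\ell$ robots and the next round is a \emph{terminating} round (a leaf of the recursion); squares too small to be partitioned are handled by a terminating round directly. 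Since widths halve, this happens within $O(\log(\rho/\ell))$ levels, and each round consists of finitely many moves, waits and computations of bounded length (Lemmas~\ref{lem:collaborative-exploration}, \ref{lem:dfsampling}, \ref{lem:centralized-awakening}); hence every branch, and thus the whole execution, terminates.

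For correctness I would carry out an induction over the round number, maintaining a coverage invariant on the set $\mathcal{F}_k$ of sub-problems $(\squarereg_j,\team_j)$ active at level $k$: (a) each $\team_j$ is a nonempty set of awake robots located at the center of $\squarereg_j$; (b) every robot of $\P$ is either already awake or has its initial position in $\squarereg_j^{\mathit{in}}$ for some $j$, and the robots of $\P$ are partitioned among the $\squarereg_j$ by a fixed tie-breaking rule on shared borders, so each is the responsibility of exactly one branch; and (c) whenever $|\team_j| < 4\ell$, the team $\team_j$ knows the initial positions of all robots of $\P$ lying in $\squarereg_j^{\mathit{in}}$. The base case is Rounds $0$ and $1$: since $\rhostar \le \rho$, $\P$ lies in the interior of the width-$2\rho$ square centered at $s$, and by Lemma~\ref{lem:dfsampling} (first case, applicable because $R = 2\rho \ge 2\rhostar$) the initial call to \dfsampling either returns a full team of $4\ell$ robots or returns fewer while covering the square --- and ``covered'' means every robot sits within $\ell \le 2\ell$ of a sampled position, hence was encountered during the depth-first exploration of the $2\ell$-disk graph, which yields (c).

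The inductive step concerns a partitioning round on $(\squarereg,\team)$ with $|\team| = 4\ell$, splitting $\squarereg$ of width $R$ into four quadrant sub-squares $\squarereg[1],\dots,\squarereg[4]$. I would first record the elementary geometric fact that $\squarereg^{\mathit{in}}$ is contained in the union of the four concentric inner squares of width $R/2 - 2\ell$ and the four separators $\sep(\squarereg[i])$ (the width-$2\ell$ bands around the central cross and the outer frame are exactly what the separators cover). Thus every robot of $\P$ in $\squarereg^{\mathit{in}}$ either lies in some $\sep(\squarereg[i])$ --- and then its initial position is recorded as a seed in $\X_i$ during the Exploration phase --- or lies in the interior of some $\squarereg[i]$. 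Next, since $\ell \ge \ellstar$ the $\ell$-disk graph of $\P$ is connected, so by Lemma~\ref{lem:separator-path} any robot with initial position in the interior of $\squarereg[i]$ is joined to every robot outside that interior by a path through $\sep(\squarereg[i])$; hence the run of \dfsampling$(\squarereg[i],\X_i)$ over the $2\ell$-disk graph from the seeds $\X_i$ reaches every still-asleep robot of $\squarereg[i]^{\mathit{in}}$ unless it stops earlier having recruited a full team of $4\ell$. Consequently, for each $i$, either \dfsampling returns a size-$4\ell$ team (and the recursion continues on $\squarereg[i]$, all its not-yet-awake robots still covered, giving (a),(b)) or returns a smaller team while covering $\squarereg[i]$, in which case --- exactly as in the base case --- all sleeping robots of $\squarereg[i]^{\mathit{in}}$ have been discovered and, after the teams merge and share variables, their positions are known, giving (c). A sub-square with $\X_i = \varnothing$ is handled by Corollary~\ref{cor:separator}: an empty separator forces the robots of $\P$ still relevant to this branch to lie entirely on one side of $\sep(\squarereg[i])$, so $\squarereg[i]$ either carries no not-yet-awake robot --- nothing to do there --- or the alternative is impossible since it would place such a robot in $\sep(\squarereg[i])$. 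The Reorganization phase then regroups robots by origin square into the level-$(k{+}1)$ teams, preserving (a) and the interior-partition of robots, which closes the induction and produces $\mathcal{F}_{k+1}$.

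It remains to see that a terminating round on $(\squarereg,\team)$, with $|\team| < 4\ell$, indeed wakes up every remaining sleeping robot of $\squarereg^{\mathit{in}}$: by invariant (c) the team knows all their initial positions, and at any moment the simultaneously active squares form an antichain of the recursion tree and are therefore pairwise interior-disjoint (any ancestor must finish before its descendants start), so no other robot ever acts inside $\squarereg$ and the no-conflict hypothesis of Lemma~\ref{lem:centralized-awakening} (or Corollary~\ref{cor:simple-square-wup} if a fresh exploration is preferred) holds; thus all of $\squarereg$'s sleeping robots are awakened. Since by invariant (b) every robot of $\P$ eventually lies in the interior of some terminating square, all robots are awake once the finite recursion is over. \textbf{The main obstacle} is the inductive coverage step: proving that no sleeping robot falls through the cracks of the partition, which requires the geometric covering claim for the four inner squares and four separators, the precise use of Lemma~\ref{lem:separator-path} to show \dfsampling started from the separator-seeds reaches every still-asleep robot of a sub-square, and the correct dispatch of empty-separator sub-squares via Corollary~\ref{cor:separator}; keeping the ``known-positions'' bookkeeping consistent through the Exploration/Recruitment/Reorganization phases --- so that a team entering a terminating round provably knows its square's sleeping robots --- is the secondary delicate point.
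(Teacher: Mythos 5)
Your proposal is correct and follows essentially the same route as the paper's proof: both rest on the finite round count (Lemma~\ref{lem:nbRounds}), the covering guarantee of \dfsampling when it returns fewer than $4\ell$ positions (Lemma~\ref{lem:dfsampling}), Corollary~\ref{cor:separator} to rule out robots hiding behind an empty separator, and Lemma~\ref{lem:centralized-awakening} for the terminating rounds. The only difference is presentational — you run a forward induction with a coverage invariant where the paper picks, for each robot, the maximal round whose input square contains it and argues that round must be terminating — and your explicit verification of the no-conflict hypothesis via interior-disjointness of simultaneously active squares is a welcome detail the paper leaves implicit.
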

\begin{proof}
    Consider an execution of \Aseparator on $\P$. 
    From Lemma~\ref{lem:nbRounds} an execution terminates after a finite number of Rounds.
    We show that when the execution terminates, any robot $\rob \in \R$ initially located at $p \in \P$ is awakened. 
    Let $k$ be the largest $k$ such that at Round $k$ there is a input square $\squarereg^k$ containing $p$.
    
    Either $\rob$ is awakened and then $\rob \in \team^k$, or it is asleep.
    In this case, we show that \rob must be awakened at the end of Round $k$.
    
    Firstly, Round $k$ on $\squarereg^k$ and $\team^k$ must be a terminating Round.
    If it is not the case, \ie it is a partitioning Round, then for $\{i \in [1, 4] \mid \team'_i \neq \emptyset \}$ sub-square $\squarereg[i]^k$ is an input of Round $k+1$. 
    Since robots are assigned to teams $\team'_i$ based on their initial position, if $\team'_i$ is empty then $\P \cap \sep(\squarereg[i]^k) = \emptyset$, by Corollary~\ref{cor:separator}, $\squarereg[i]^k$ is empty and does not contain $p$. 
    Hence if Round $k$ is a partitioning Round then at Round $k+1$ there must be an input $\squarereg^{k+1}$ containing $p$, which breaks the assumption of $k$ being maximal.
    
    If Round $k$ is terminating, by definition of \Aseparator, the input team $\team^k$ contains less than $4 \ell$ robots. 
    It implies that during the recruitment phase of Round $k-1$, \dfsampling in $\squarereg^k$ has stopped before constituting a team of $4\ell$ robots. 
    Lemma~\ref{lem:dfsampling} guarantee that $\squarereg^k$ is covered by $\team^k$ \ie each position of $\P \cap \squarereg^k$ is known by $\team^k$ at the beginning of Round $k$.
    Therefore $\rob$ must be awakened when $\team^k$ operates a centralized awakening of sleeping robots of $\squarereg^k$.
\end{proof}

\paragraph{Duration of Rounds}
\begin{restatable}[Duration of Rounds]{lemma}{durationRound}
    \label{lem:durationRound}
    For $k \geq 1$, the duration of Round $k$ is in $O(R^{(k)} + \ell^2)$, $R^{(k)}$ being the width of sub-squares.
\end{restatable}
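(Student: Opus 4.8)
The plan is to decompose Round $k$ into its phases and bound each one's contribution, writing $R := R^{(k)}$ for the relevant length scale (the input square $\squarereg^{(k)}$ has side $\Theta(R)$ and each of the four sub-squares $\squarereg[i]$ created in the round has side $R/2$). If Round $k$ is \emph{terminating}, i.e. $|\team^{(k)}| < 4\ell$, then by Lemma~\ref{lem:dfsampling} the preceding \dfsampling call already discovered every sleeping robot of $\squarereg^{(k)}$, so the round only moves one robot to the center ($O(R)$) and performs a centralized awakening, which by Lemma~\ref{lem:centralized-awakening} costs $O(R)$; hence $t_k \in O(R) \subseteq O(R+\ell^2)$ in this case. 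So assume from now on that Round $k$ is \emph{partitioning}. First observe that every movement performed in the round --- splitting $\team^{(k)}$ and dispatching each sub-team toward its sub-square, going from the center of $\squarereg^{(k)}$ to the corner of a separator before an \explore call, hopping between the rectangles that tile a separator, returning each sub-team to the center of $\squarereg^{(k)}$ after \dfsampling, and finally moving each regrouped team $\team'_i$ to the center of $\squarereg[i]$ --- has length $O(R)$, and there is a constant number of them, so the total displacement overhead of the round is $O(R)$.

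It remains to bound the two substantive phases. For \textsf{Exploration}, the separator $\sep(\squarereg[i])$ is a frame of thickness $2\ell$ around a square of side $R/2$, hence can be covered by four axis-parallel rectangles each of dimensions at most $(R/2)\times 2\ell$; in a partitioning round the separators are well-defined, so $R/2 > 2\ell$, and by Lemma~\ref{lem:collaborative-exploration} one \explore call by the $\ell$-robot team costs $O\!\big((R/2)\cdot 2\ell/\ell + R/2 + 2\ell\big) = O(R)$, so the four calls plus the inter-rectangle moves are still $O(R)$. For \textsf{Recruitment}, each sub-team runs $\dfsampling(\squarereg[i],\X_i)$ with $|\team_i| = \ell$ and $\X_i = \P \cap \sep(\squarereg[i])$, which is exactly the second case of Lemma~\ref{lem:dfsampling} and costs $O(R/2 + \ell\,|\P'|)$; since the search halts at $4\ell$ samples we have $|\P'| \le 4\ell$, so this is $O(R + \ell^2)$, followed by the $O(R)$ return move already accounted for.

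Finally, the \textsf{Reorganization} phase is a wait, a co-located merge and variable exchange, and a purely computational re-teaming of the robots by home sub-square, followed by the $O(R)$ moves counted above. Since every bound established so far depends only on $R$ and $\ell$ --- both known to all robots --- each sub-team can compute a common explicit upper bound $C(R+\ell^2)$ on the duration of \textsf{Exploration}$+$\textsf{Recruitment} and wait until that instant, so the four sub-teams rendez-vous simultaneously at the center of $\squarereg^{(k)}$ and the merge, exchange and re-teaming take negligible time. Adding the contributions gives $t_k \in O(R) + O(R) + O(R+\ell^2) + O(R) = O(R^{(k)} + \ell^2)$. The only genuinely delicate point is the \textsf{Reorganization} synchronization --- ensuring the four teams can absorb their variable-length \dfsampling runs and meet without any team being stranded --- and it is handled precisely by the uniform, publicly computable time bound just described; the rest is bookkeeping over a constant number of sub-phases.
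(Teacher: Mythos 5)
Your proof is correct and follows essentially the same route as the paper's: split Round $k$ into terminating rounds (the square is already covered by the previous \dfsampling, so Lemma~\ref{lem:centralized-awakening} gives $O(R^{(k)})$) and partitioning rounds (four $O(R^{(k)})$ \explore calls on the rectangles tiling each separator, plus the second case of Lemma~\ref{lem:dfsampling} with $|\P'|\le 4\ell$ giving $O(R^{(k)}+\ell^2)$), with all constant-many displacements absorbed into $O(R^{(k)})$. One cosmetic slip: by the paper's definition $\sep(\squarereg[i])$ is the region between the squares of widths $R/2$ and $R/2-2\ell$, so the frame has thickness $\ell$ rather than $2\ell$ and decomposes into rectangles of dimensions $\ell \times R^{(k+1)}$; this changes nothing asymptotically.
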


\begin{proof}
    We show that:
    \begin{itemize}
        \item a terminating Round has a duration $O(R^{(k)})$
        \item a partitioning Round has a duration $O(R^{(k)} + \ell^2)$
    \end{itemize}
    
    Let us begin with the termination case, that is an input $\squarereg^{(k)}$ and $|\team^{(k)}| < 4\ell$. The Round $k$ consists of using a centralized algorithm to wake up sleeping robots in $\squarereg^{(k)}$. 
    Recall that  $\team^{(k)}$ is recruited during the \recruit phase of Round $k-1$ by sampling $\squarereg^{(k)}$.
    Since $|\team^{(k)}| < 4\ell$ Lemma~\ref{lem:dfsampling} guarantees that $\squarereg^{(k)}$ is covered by the subset of $\team^{(k-1)}$ sampling $\squarereg^{(k)}$. 
    At the end of Round $k-1$ the knowledge of each subset of $\team^{(k-1)}$ is shared, thus sleeping robots of $\P \cap \squarereg^{(k-1)}$ are known. 
    The smallest disk containing $\squarereg^{(k)}$ is of radius $(1/\sqrt{2})R^{(k)}$. By Lemma~\ref{lem:centralized-awakening},  one robot of $\team^{(k)}$ can wake up any set of robots sleeping in this disk with a makespan at most $O(R^{(k)})$.
    
    Let us now consider a partitioning Round $k$ with a non-terminating input $\squarereg^{(k)}$ and $|\team^{(k)}| \geq 4\ell$. It consists of having four teams of $\ell$ robots exploring and recruiting within sub-squares of $\squarereg^{(k)}$ of width $R^{(k+1)}$. 
    The separator of a sub-square can be decomposed into $4$ rectangles of dimension $\ell \times R^{(k+1)}$. By Corollary~\ref{lem:collaborative-exploration} a team of $\ell$ robots can explore each rectangle in $O(\frac{\ell \times R^{(k+1)}}{\ell} + \ell + R^{(k+1)}) = O(R^{(k+1)} + \ell)$.
    The recruitment is done by \dfsampling from a set of positions $\X$ in $\sep(\squarereg^{(k)})$. 
    From Lemma~\ref{lem:dfsampling} the $\ell$-sampling of $\squarereg^{(k+1)}$ with a team of $\ell$ robot is done in time $O(R^{(k+1)}) + \ell^2)$. Recall that $R^{(k+1)} =\frac{1}{2}R^{(k)}$, we get that the duration of a partitioning Round $k$ is $O(R^{(k)} + \ell^2)$ 
\end{proof}

\paragraph{Number of Rounds}
\begin{restatable}[Number of Rounds]{lemma}{nbRounds}
    \label{lem:nbRounds}
    The number of rounds of \Aseparator is $1$ for $\rhostar \leq \frac{\ell^{3/2}}{8}$ and $\log (1+\sqrt{8/\pi} \rho / \ell^{3/2}) \in O(\log \frac{\rho}{\ell^{3/2}})$ otherwise.
    Both values are in $O(\clog \frac{\rho}{\ell^{3/2}})$.
\end{restatable}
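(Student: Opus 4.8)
The plan is to follow the recursion of \Aseparator and, at every level, compare the side length of the squares to the largest team they can possibly produce: once a square becomes too small to host an $\ell$-sampling of size $4\ell$, the round assigned to it is forced to be a \emph{terminating} round, and since the squares shrink geometrically this caps the depth of the recursion.

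Concretely, I would first record the geometry: Round~$0$ recruits a team inside the square $\squarereg^{(1)}$ of side $R^{(1)}=2\rho$ centered at $s$, and for $k\ge 1$ a partitioning Round~$k$ cuts its square of side $R^{(k)}$ into four sub-squares of side $R^{(k+1)}=R^{(k)}/2$, so $R^{(k)}=2\rho/2^{k-1}$. By construction of \dfsampling, the team $\team^{(k)}$ entering Round~$k$ is the set of positions of an $\ell$-sampling of $\P$ confined to a square of side $R^{(k)}$ (for $k=1$ it is confined to $\squarereg^{(1)}$, and since $\P\subseteq B_s(\rhostar)$ also to a square of side $2\rhostar$). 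Lemma~\ref{lem:disk-covering} then gives $|\team^{(k)}|\le 16(R^{(k)})^2/(\pi\ell^2)$ for $k\ge 2$ and $|\team^{(1)}|\le 16(2\rhostar)^2/(\pi\ell^2)$. The decisive observation is that $|\team^{(k)}|<4\ell$ forces Round~$k$ to be terminating: by Lemma~\ref{lem:dfsampling}, a \dfsampling call returning fewer than $4\ell$ points leaves its region \emph{covered}, hence the centralized awakening of the \textsf{Termination} step is well defined and correct, and the branch stops there. Combining the two facts, as soon as $16(R^{(k)})^2/(\pi\ell^2)<4\ell$, i.e. $R^{(k)}<\tfrac{\sqrt\pi}{2}\ell^{3/2}$, every surviving branch dies at Round~$k$; since $R^{(k)}$ halves each round this is the first $k$ with $2^{k}>8\rho/(\sqrt\pi\,\ell^{3/2})$, which is $O(1+\clog{(\rho/\ell^{3/2})})=O(\clog{(\rho/\ell^{3/2})})$, and the exact closed form $\log(1+\sqrt{8/\pi}\,\rho/\ell^{3/2})$ follows by carrying the constants through, using the $\rhostar$-based bound on $|\team^{(1)}|$ and the fact that the non-root sub-squares are inscribed in disks (which sharpens the packing constant). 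In the regime $\rhostar\le\ell^{3/2}/8$ we get $|\team^{(1)}|\le 16(\ell^{3/2}/4)^2/(\pi\ell^2)=\ell/\pi<4\ell$, so Round~$1$ is already terminating and the algorithm, which always performs Round~$0$ then Round~$1$, makes exactly one round; both stated bounds lie in $O(\clog{(\rho/\ell^{3/2})})$ since in the other regime $\rhostar>\ell^{3/2}/8$ forces $\rho/\ell^{3/2}>1/8$, so $\clog{(\rho/\ell^{3/2})}$ is bounded below by a positive constant and absorbs the additive~$1$.

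The combinatorial skeleton is light — "the square shrinks by a factor $2$ each round and a sampling of size $<4\ell$ kills the branch". The real work, and the main obstacle, is the bookkeeping around it: (i) measuring the Round~$1$ team against $\rhostar$ rather than $\rho$ so that the two regimes meet cleanly at the threshold $\ell^{3/2}/8$; (ii) distinguishing a partitioning round from a terminating one through the \emph{actual} team size, not merely the largest size the square could in principle support, so that branches dying early do not inflate the count; and (iii) pinning the constant $\sqrt{8/\pi}$ in the closed form while counting the "number of rounds" consistently with the way $k_{max}$ is used in Lemma~\ref{lem:Aseparate-makespan}.
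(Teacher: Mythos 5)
Your proof is correct and follows essentially the same route as the paper's: bound $|\team^{(k)}|$ via Lemma~\ref{lem:disk-covering} applied to the square of side $R^{(k)}=2\rho/2^{k-1}$, observe that a team smaller than $4\ell$ forces a terminating round through the covering guarantee of Lemma~\ref{lem:dfsampling}, and solve $16(R^{(k)})^2/(\pi\ell^2)<4\ell$ for $k$, with the small-$\rhostar$ regime handled identically. The only (immaterial) difference is the constant in the closed form --- your $8/\sqrt{\pi}$ versus the paper's $\sqrt{8/\pi}$ --- which does not affect the $O(\clog(\rho/\ell^{3/2}))$ conclusion.
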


\begin{proof}
We have two cases depending on $\rhostar$, being large or small with respect to $\ell$. Set $R=2 \rhostar$.

If $R \leq \frac{\ell^{3/2}}{4}$, from Lemma \ref{lem:disk-covering}, any $\ell$-sampling has size at most $\frac{16 R^2}{4\pi \ell^2} \leq 4 \ell /\pi < 4l$. Algorithm \Aseparator stops in Round $1$.

From now, we assume that $R > \frac{\ell^{3/2}}{4}$. Let us compute the smallest $k$ such that every execution of \dfsampling to $\squarereg^{(k)}$ of width $R^{(k)}$ outputs a $\ell$-sampling $\team'^{(k)}$ strictly smaller to $4 \ell$. In this case, \Aseparator ends at the beginning of Round $k+1$ during the termination phase.  

From Lemma~\ref{lem:disk-covering}, we have
\begin{align}
    |\team^{(k)}| & \leq \frac{16 (R^{(k)})^2}{\pi \ell^2} \\
    \text{Recall that } R^{(k)} &= \frac{2\rho}{2^{k-1}} \\
    \text{We get } R^{(k)} \leq \frac{\ell^{3/2}}{4} & = \frac{256 \rho^2}{2^{2k}} \pi \ell^2 <  4\ell\\
    \text{if } 2^{2k}  &> \frac{8 \rho^2}{\pi \ell^3} \\
    k &> \log (\sqrt{8/\pi} \rho / \ell^{3/2}).
\end{align}
\end{proof}

\section{Proofs of \texorpdfstring{\Agrid}{Agrid} and \texorpdfstring{\Awave}{Awave}}
\label{sec:A-Grid-A-Wave}
\subsection{Algorithm \texorpdfstring{\Agrid}{Agrid}}
\label{sec:A-Grid}
We now describe in detail Algorithm \Agrid:

Let $t(\ell)$ be an upper bound on the time required for the exploration and centralized awakening of a square region of width $2\ell$ by one robot.
By Corollary~\ref{cor:simple-square-wup}, $t(\ell) \in O(\ell^2)$.

\begin{itemize}
   \item \emph{Round $0$ - Initialization}: \\
     $\squarereg \leftarrow $ square of width $R = 2\ell$ and centered at the source robot $s$ \\
    Explore and wake-up \squarereg 
    \item \emph{Round $k \geq 1$ starting at time $t_k=t(\ell)+8(k-1)(t(\ell)+ \sqrt{2} R)$}: \\
    For every robot $\rob$ awakened in Round $k-1$ do in parallel: \\
    $\squarereg \gets $ the square containing $\rob$\\
    For $i \in [1..8]$ do
    \begin{enumerate}
        \item move to the lower-left corner of the $i$-th adjacent squares $\squarereg[i]$ of $\squarereg$
        \item Wait until time $t_k+ (t(\ell)+\sqrt{2} R)i$
        \item Explore and Wake-up \squarereg[i] 
    \end{enumerate} 
\end{itemize}

Let us remind the result on the makespan of \Agrid:
\Agridmakespan*

\begin{proof}
The duration of Round $0$ is $t(\ell) \in O(\ell^2)$. 
Then, in Round $k \geq 1$, we have to wake-up at most $8$ squares in time $8 t(\ell)$.
Since every square are adjacent, it takes at most $\sqrt{2}R$ to go the next adjacent square to wake-up.
In total, every round takes $O(\ell^2)$. Every robot  awakened in Round $k$ participates to Round $k+1$ but stops do at the end of Round $k+1$. Thus it only has to use $O(\ell^2)$ amount of energy.

Let us now upper bound the number of rounds. 
Given any robot $\rob$ located at position $p$, let us consider a shortest path from $s$ to $\rob$ in the $\ell$-disk graph of $(\P,s)$ minimizing the number of hops and the sequence of squares $S=s_1,\ldots,s_k$ crossed by this path. 
Since the path can cross several times a same square and every robot of a given square are awakened in the same round, the number of rounds required to wake up the sequence of squares is smaller or equal to the number of hops of the shortest path. 
From Lemma~\ref{lem:eccentricity}, we know that there exists a path of at most $2 \ecc_\ell/\ell$ hops implying the same upper bound for the number of rounds. 
Since every round takes a time $O(\ell^2)$, the makespan is $O(\ell \cdot \ecc_\ell)$. 
Furthermore since every awakened robot in Round $k$ only move in Round $k$ and $k+1$ and stop, robots only need $O(\ell^2)$ energy.  
\end{proof}

\subsection{Algorithm \texorpdfstring{\Awave}{Awave}}
\label{sec:awave}

We now describe in detail Algorithm \Awave:

Let $t(R)$ be an upper bound on the time required for \Aseparator to wake up all robots of a square of width $R$ starting from a team of size $4\ell$.
By Theorem~\ref{th:upper}, $t(R) \in \Theta(R + \ell^2\log\ell)$.
\begin{itemize}
   \item \emph{Round $0$ - Initialization}: \\
     $\ell \gets \max\{\ell, 4\}$\\
     $\squarereg \leftarrow $ square of width $R = 8\ell^2 \log_2 \ell$ and centered at the source robot $s$ \\
    Wake-up \squarereg using \Aseparator \\
    If there is no robot in $\sep(\squarereg)$, \Awave stops
    
    \item \emph{Round $k \geq 1$ starting at time $t_k=t(R)+8(k-1)(t(R)+ \sqrt{2} R)$}, for every $\rob$ awakened in Round $k-1$: \\
    $\squarereg(\rob) \gets$ square containing $\rob$ \\
    Move $\rob$ to the lower-left corner of $\squarereg(\rob)$ to build a team $\team_\rob$\\
    For every team $\team_\rob$ such that $|\team_\rob| \geq 4\ell$ do in parallel:\\
    For $i \in [1..8]$ do 
    \begin{enumerate}
        \item move to the lower-left corner of the $i$-th adjacent squares $\squarereg[i]$ of $\squarereg(\rob)$
        \item Wait until time $t_k+ (t(R)+ \sqrt{2} R)i$
        \item Wake-up \squarereg[i] using $\Aseparator$ within $\squarereg[i]$ with $\team_r$ in Round $k$
    \end{enumerate} 
\end{itemize}

Let us prove Theorem~\ref{th:Awavemakespan}
\Awavemakespan*
\begin{proof}
    We first prove that $\Aseparator$ solves the dFTP with energy budget $O(\ell^2 \log \ell)$ and a makespan of $O(\ecc_\ell + \ell^2\log\ell)$, and explain in a second time how we obtain the announced makespan.
    Firstly, note that $t(R) \in \Theta(\ell^2\log\ell)$.

    To begin, every round takes a time at most $O(R)=O(\ell^2 \log \ell)$. The only robots that participates in Round $k$ has been awakened only in Round $k+1$, it means that the energy budget required per robot is $O(\ell^2 \log \ell)$. 

    At Round $0$, $\Aseparator$ wakes up every robot of the initial square $\squarereg$. If there is a robot within $\sep(\squarereg)$, it means that $\squarereg$ contains at least $\floor{\frac{R/2-\ell}{\ell}} \geq \frac{R}{2\ell}-2 = 4 \ell \log_2 \ell -2 \geq 4 \ell$ robots since $\ell \geq 4$. 
    Thus, $\squarereg$ has enough robots to apply \Aseparator in every adjacent squares at Round $1$, and every robot within the $8$ adjacent squares are awakened during Round $1$.

    Let us now prove that all robots are awakened by the algorithm by providing an upper bound on the number of rounds to wake up a robot $\rob$ located outside the $9$ central squares. 
    Let $G$ be the $\ell$-disk graph of $\P$ and let $P=s,\rob_1,\rob_2\ldots,\rob$ be a shortest path in $G$ from $s$ to $\rob$, and let $s,\rob_1,\rob_2,\ldots,\rob_{j'}$ be the maximal subpath of $P$ of awake robots at the end of Round $k$.
    We now consider a robot $\rob_j$ with $j<j'$ such that $R-3\ell < d_\ell(\rob_j,\rob_{j'}) \leq R-2\ell$.
    Such a robot exists since by definition of $s$, $d_\ell(s,r) > R$.

    First, let us show that the subpath $P'=\rob_j,\rob_{j+1},\ldots,\rob_{j'}$ is contained within $\squarereg(\rob_{j'})$ and at most $2$ squares simultaneously adjacent to $\squarereg(\rob_{j'})$ and $\squarereg(\rob_{j'+1})$.
    Let $\squarereg$ be a square, adjacent to $\squarereg(\rob_{j'})$ but not adjacent to $\squarereg(\rob_{j'+1})$. 
    The Euclidean distance between $\rob_{j'}$ and $\squarereg$ is greater than $R-\ell$ since $\rob_{j'}$ is at distance at most $\ell$ from $\squarereg(\rob_{j'+1})$. 
    Thus $P'$ cannot cross $\squarereg$. 
    Moreover, it is impossible that $P'$ crosses two adjacent squares  of $\squarereg(\rob_{j'})$, $\squarereg$ and $\squarereg'$, such that $\squarereg$ and $\squarereg'$ are not adjacent because $P'$ has a length strictly smaller then $R$.
    This guarantees the announced property on $P'$.

    Secondly, let us show that $\squarereg(\rob_{j'+1})$ is awakened at Round $k+1$. 
    To have this guarantee, we need to have at least $4\ell$ awakened at Round $k$ in an adjacent square. 
    Since $P'$ is contained within at most $3$ adjacent squares of $\squarereg(\rob_{j'+1})$, the most populated adjacent square contains at least $\floor{\frac{R-2\ell}{\ell}}/3=\floor{\frac{R}{\ell}}/3-2/3 \geq (R/\ell-1)/3-2/3=R/(3 \ell)-1$ awake robots. 
    If $R=8 \ell^2 \log \ell$ and $\ell \geq 4$, $R/(3 \ell)-1 \geq 16 \ell/3 -1 \geq 4 \ell$. 

    Now, take the maximal subpath $P''$ of $P$ of length smaller than $R-2\ell$ but starting from $\rob_{j'+1}$. 
    We can show similarly as before that this path is either within $3$ adjacent squares awakened in the worst case in Rounds $k+1$, $k+2$ and $k+3$ or that this path ends in already awakened square. 
    In any case, within $3$ rounds, the length of the maximal subpath of $P$ of awake robots at the end of Round $k+3$ has increased of at least $R-3\ell > 5 \ell^2 \log \ell$ units. 
    Thus the number of rounds to wake up the robots of highest eccentricity takes $O(\ecc_\ell/\ell^2 \log \ell)$ rounds.
    The makespan is then bounded by $O(\ecc_\ell)$.
    
    Finally, let us be more precise on the bound on the makespan.
    We first consider the case where $\ecc_\ell \leq \ellstar ^{3/2}/16 \leq  \ell^{3/2}/16$.
    We immediately have $\ecc_\ell < R/2$ and therefore \Awave terminates at Round~0.
    But since $\ecc_\ell \leq \ell^{3/2}/16$, we have by Proposition~\ref{prop:admissible}, $\rhostar \leq \ell^{3/2}/16$ and so, by Lemma~\ref{lem:nbRounds}, $\Aseparator$ terminates at Round 0.
    In that case, the makespan of \Awave is therefore $t_0 \in O(\ell^2 \clog (\min\{\ell,\rhostar / \ell\}))$ as stated in Equation~\ref{eq:Asep-rd0}.
    Since by Proposition~\ref{prop:admissible} and Lemma~\ref{lem:eccentricity}, $\rhostar/\ell \leq \ecc_\ell/\ell \leq \rhostar^2/(12\ell^2)$, we obtain an overall complexity of $\Awave$ is in $O(\ell^2 \clog (\min\{\ell,\ecc_\ell / \ell\}))$.
    
    Otherwise, we have $\ecc_\ell \geq \ell^{3/2}/16$, which means that $\ecc_\ell/\ell \geq \sqrt\ell / 16$ and thus that $\min \{\sqrt \ell/16, \ecc_\ell/\ell\} = \sqrt\ell/16$. 
    Therefore, $O(\ell^2 \clog\ell) = O(\ell^2 \clog(\min \{\ell, \ecc_\ell/\ell\}))$.
    This guarantee an overall complexity in $O(\ecc_\ell + \ell^2\clog \min\{\ell, \ecc_\ell/\ell\})$ for \Awave.
    
    To summarize, we have an overall complexity in $O(\ecc_\ell + \ell^2 \clog (\min\{\ell,\ecc_\ell / \ell\})) \subseteq O(\ecc_\ell + \ell^2 \clog (\min\{\ell,\ecc_\ell / \ell\}) + \ecc_\ell/\ell^{3/2})$.
    By Lemma~\ref{lem:BigO}, we conclude that \Awave has a makespan in $O(\ecc_\ell + \ell^2 \clog \ecc_\ell/\ell)$.
\end{proof}

\section{Lower bounds - details}
\subsection{Without energy constraint}
\label{sec:lwBound}
Let us recall and prove Theorem~\ref{th:lower}:
\lowerBound*

Let us consider an algorithm $\A$ solving the dFTP, and an admissible tuple $(\ell,\rho,n)$.
To prove our lower bound, we first define some disjoint regions $D_{c}$ of $B_{0,0}(\rho)$ as pictured in Figure~\ref{fig:LB-construct}.
Each region has area $\Theta(\ell^2)$, and we prove in Lemma~\ref{lem:grid-included} that there are $\Theta(\rho^2/\ell^2)$ such regions.
The general idea is to place one sleeping robot in each region, depending on the behavior of $\A$, in a way that guarantees that awake robots have to visit the entire region before they find the sleeping robot in it.
We prove in Lemma~\ref{lem:spheres-connectivity} that the set of points we propose is $\ell$-connected, which makes the construction valid.

\paragraph{Centers and connectivity.}
We define $\centers = \{ (x,y) \in (\frac \ell2 \mathbb Z)^2 \mid \sqrt {x^2 + y^2}\leq \rho-\frac \ell4 \}$ the vertices of the $\frac \ell2$-grid,  restricted to the disk of center $(0,0)$ and radius $\rho-\frac \ell4$.
We say that two elements $(x,y)$ and $(x',y')$ in $\centers$ are \emph{adjacent} if $x'=x \wedge y'=y \pm \frac \ell2$ or $x'=x \pm \frac \ell2 \wedge y'=y$.
A subset $C\subseteq \centers$ is \emph{connected} if for any $c,c'\in C$, there exists a path of adjacent elements of $C$ with extremities $c$ and $c'$.

We also define $\centers^* = \centers\setminus\{(0,0)\}$ and $m = \min (n,  |\centers^*|)$.
Note that, since $(\ell,\rho,n)$ is admissible, and by Lemma~\ref{lem:grid-included}, we have $m \geq \rho/\ell$.
We are now going to prove a lower bound of $\Omega(\ell^2 \clog m)\subseteq \Omega(\ell^2 \clog \rho/\ell)$.

We denote by $\centers_m$ some subset of $\centers^*$ with $m$ elements such that $\centers_m \cup \{(0,0)\}$ is connected, and contains at least $\{ (0,\frac\ell2), (0,2\frac\ell2), \dots, (0,\floor{\rho/\ell}\frac\ell2) \}$.
Note that this is always feasible because, on the one hand, $m \geq \floor{\rho/\ell}$, and on the other hand, $\floor{\rho/\ell}\frac\ell2 \leq \rho/2 \leq \rho - \frac \ell4$, so all these points are actually in $\centers^*$.

\paragraph{Disks, $\ell$-connectivity.}
For any $c = (x,y) \in \centers_m$, let us define the disk of center $c$ by $D_{c} = B_{c}(\ell/4)$, of area $\frac{\pi \ell^2}{16}$.
Let us also define $\spheres_m = \cup_{c \in\centers_m} D_{c}$ and $\spheres^* = \cup_{c \in\centers^*} D_{c}$ .
Note that different disks in $\spheres_m$ are pairwise disjoints (except one single point), that $\spheres_m \subset B_{(0,0)}(\rho)$, and that the area of $\spheres_m$ is $|\centers_m|\frac {\pi \ell^2}{16}$.

\begin{figure}[h!]
    \centering
    \def\ray{100pt}
    \begin{subfigure}{0.45\textwidth}
        \begin{tikzpicture}[scale=1]
            \clip[draw] (0,0) circle (\ray) ;
            \fill[red!20] (0pt, 0pt) circle (13pt);
            \fill[red!20] (26pt, 0pt) circle (13pt);
            \fill[red!20] (52pt, 0pt) circle (13pt);
            \fill[red!20] (0pt, -26pt) circle (13pt);
            \fill[red!20] (-26pt, -26pt) circle (13pt);
            \fill[red!20] (0pt, 26pt) circle (13pt);
            \foreach \xcoord in {-52, -26,...,52} {
                \foreach \ycoord in {-52,-26,...,52} {
                    \fill (\xcoord pt, \ycoord pt) circle[radius=1pt];
                    \draw (\xcoord pt, \ycoord pt) circle (13pt);
                }
            }
            \foreach \xcoord in {-78, 78} {
                \foreach \ycoord in {-26,0,26} {
                    \fill (\xcoord pt, \ycoord pt) circle[radius=1pt];
                    \fill (\ycoord pt, \xcoord pt) circle[radius=1pt];
                    \draw (\xcoord pt, \ycoord pt) circle (13pt);
                    \draw (\ycoord pt, \xcoord pt) circle (13pt);
                }
            }
            \draw[->,>=latex, red, xshift=52pt, yshift=52pt] (0,0) -- (30:13pt) node[above left=-6pt, midway] {\small $\ell/4$};
        \end{tikzpicture}
        \caption{Scheme of the general construction of Theorem~\ref{th:lower}, with set $\spheres^*$ colored in red.\label{fig:LB-construct}}
    \end{subfigure}
    \hfill
    \begin{subfigure}{0.45\textwidth}
        \begin{tikzpicture}[scale=1]
            \clip[draw] (0,0) circle (\ray) ;
            \foreach \xcoord in {-52, -26,...,52} {
                \foreach \ycoord in {-52,-26,...,52} {
                    \fill (\xcoord pt, \ycoord pt) circle[radius=1pt];
                }
            }
            \foreach \xcoord in {-78, 78} {
                \foreach \ycoord in {-26,0,26} {
                    \fill (\xcoord pt, \ycoord pt) circle[radius=1pt];
                    \fill (\ycoord pt, \xcoord pt) circle[radius=1pt];
                }
            }
            \draw[color=blue] (0,0) circle (\ray*3/4) ;
            \draw[color=blue, very thick] (-53.04pt, -53.04pt) rectangle (53.04pt, 53.04pt) ;
            
            \draw[->,>=latex, blue, thick] (0,0) -- (30:75pt) node[above, midway] {$\frac{3\rho}4$};
            \draw[->,>=latex, blue, thick] (0,0) -- (53.04pt,0) node[below, midway] {$\frac{3\rho}{4\sqrt2} \geq \frac\rho2$};
            \draw[->, >=latex, blue] (-26pt,0pt) -- (-26pt,26pt) node[right, midway] {$\ell / 2$};
        \end{tikzpicture}
        \caption{Scheme of the proof of Lemma~\ref{lem:grid-included} with $\squarereg$ in blue.\label{fig:LB-nb-spheres}}
    \end{subfigure}
    \caption{Proof of the Lower Bound\label{fig:lower-bound}}
\end{figure}
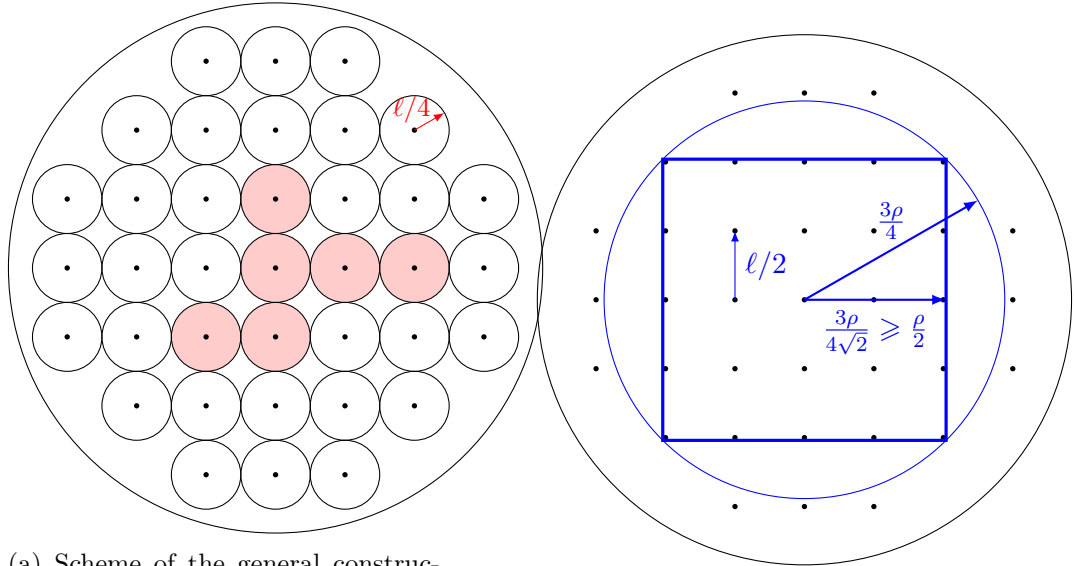

\paragraph{Construction of the set $\P(\A)$.}
Let us first suppose that $n \leq |\centers^*|$.
The construction of the set of initial positions $\P(\A)$ depends on the considered algorithm $\A$.
The process consists in placing exactly one robot per disk $D_c\in\spheres_m$, at position $p_c$.
This construction guarantees that the instance $(\P,s)$ is consistent with the tuple $(\ell,\rho,n)$.
In particular, the set is actually $\ell$-connected according to Lemma~\ref{lem:spheres-connectivity} and because $\centers_m$ is connected.
Given one disk $D_c\in \spheres_m$, the exact localisation $p_c$ is defined as the last position of $S_c$ to be explored by the previously awakened robots, under the execution of the algorithm.
In other words, the algorithm must integrally explore $S_c$ before it discovers the new robot in it.
If $n > |\centers^*|$, then the construction is similar for the first $|\centers^*|-1$ positions.
The initial localization of the remaining robots can be in any arbitrary small disk of radius $\varepsilon$ included in some area of $\spheres_m$ that has not been discovered yet.

\paragraph{Proof of the lower bound $\Omega(\ell^2 \log m)$.}

We denote by $\Awake(t)$ the set of awake robots at time $t$, and given the set of initial positions $\P=\P(\A)$, we denote by $\Ddiscovered{t} \subseteq \spheres_m$ the set of points of $\spheres_m$ that have been discovered at time $t$ or before.
More formally, $\Ddiscovered{t}$ is the set of points $p\in\spheres_m$ such that $\exists t' \leq t, \exists r\in \Awake(t'): |p_r(t')p| \leq 1$.

We denote by $\Adiscovered{t}$ the area of $\Ddiscovered{t}$. 
Finally, let us define, $\forall i \in [0,m], t_i = \inf\{t \geq 0 \mid \Adiscovered{t} \geq i\pi \ell^2/16\}$.
By construction of $\P(\A)$ we have $\forall i, \forall t<t_i, |\Awake(t_i)| \leq i$.
Since robots have a field of view of radius $1$, they discover an area of amplitude $2$ along an unitary move, which mean that an awake robot exploring during $t'$ units of time can discover an area of at most $2t'$.
Therefore we have $\forall i, \forall t\geq t_i, \Adiscovered{t} - \Adiscovered{t_i} \leq 2(t - t_i) |\Awake(t)|$. 
Furthermore, $\Adiscovered{t_{i+1}} - \Adiscovered{t_i} = \frac {\pi \ell^2}{16}$.
By having $t$ tend to $t_{i+1}$ by inferior values, we obtain:
$\forall i\geq 0, \Adiscovered{t_{i+1}} - \Adiscovered{t_i} \leq 2(t_{i+1} - t_i) \times (i+1)$.

Therefore by adding this telescopic sum, we obtain:
\[
    t_m \geq \frac{\pi \ell^2}{32} \sum_{i=1}^m i \geq \frac{\pi \ell^2}{32} \ln (m+1) \in \Omega(\ell^2 \clog m)
\]

\paragraph{Conclusion.}
Since $n\geq \frac \rho \ell$ and $|\centers^*|\geq \rho^2/\ell^2$, and $m = \min (n, |\centers^*|)$, we have obtained a lower bound of $\Omega(\ell^2 \clog \frac \rho \ell)$.
Furthermore, by construction of $\centers^m$, and by definition of $\P$, robots have explore all points of $D_{(0,\floor{\rho/\ell}\frac\ell2)}$, which means that there exists a path from $(0,0)$ to a point at distance $1$ of $(0,\floor{\rho/\ell}\frac\ell2)$.
Such a path has a length at least $\floor{\rho/\ell}\frac\ell2 -1\geq \frac\rho{2\ell}\frac\ell2 -1 \geq \rho/4 -1 \in \Omega(\rho)$.
We therefore have shown that the makespan of $\A$ on that set of point is $\Omega(\rho + \ell^2 \clog \frac \rho \ell)$.

\paragraph{Intermediate Lemmas.}

\begin{lemma}
    \label{lem:grid-included}
    $|\centers| \geq 1 + \rho^2/\ell^2$.
\end{lemma}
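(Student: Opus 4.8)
The plan is to exhibit a large axis-parallel square sitting comfortably inside the disk $B_{(0,0)}(\rho-\ell/4)$ and simply count how many vertices of the $\ell/2$-grid it contains: by definition of $\centers$, every such vertex belongs to $\centers$, so that count is a lower bound on $|\centers|$. This is essentially the picture of Figure~\ref{fig:LB-nb-spheres}, but using the square of half-side $\rho/2$ centered at the origin.

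\textbf{Step 1: the square fits.} Let $\squarereg$ be the axis-parallel square centered at the origin with half-side $\rho/2$. Every point of $\squarereg$ has Euclidean norm at most $\rho/\sqrt{2}$ (attained at the corners). Since $(\ell,\rho,n)$ is admissible we have $\ell\le\rho$, hence $\rho/\sqrt{2}+\ell/4\le\rho(1/\sqrt{2}+1/4)<\rho$, i.e. $\rho/\sqrt{2}<\rho-\ell/4$. Therefore $\squarereg\subseteq B_{(0,0)}(\rho-\ell/4)$, and every grid point of $(\frac{\ell}{2}\mathbb{Z})^2$ lying in $\squarereg$ is an element of $\centers$.

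\textbf{Step 2: counting and conclusion.} A grid point $(k\frac{\ell}{2},k'\frac{\ell}{2})$ with $k,k'\in\mathbb{Z}$ lies in $\squarereg$ exactly when $|k|,|k'|\le\rho/\ell$, which leaves $2\floor{\rho/\ell}+1$ admissible values for each of $k$ and $k'$; hence $|\centers|\ge(2\floor{\rho/\ell}+1)^2$. To finish, write $x=\rho/\ell\ge 1$ and use the elementary inequality $\floor{x}\ge x/2$ (if $x\le 2$ it reads $x/2\le 1\le\floor{x}$; if $x>2$ it follows from $\floor{x}>x-1\ge x/2$). This gives $2\floor{\rho/\ell}+1\ge\rho/\ell+1$, so $|\centers|\ge(\rho/\ell+1)^2=\rho^2/\ell^2+2\rho/\ell+1\ge 1+\rho^2/\ell^2$.

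\textbf{Where the difficulty (if any) lies.} There is no genuine obstacle here; the only step requiring a little care is Step~1, namely making sure the chosen square fits inside $B_{(0,0)}(\rho-\ell/4)$ rather than merely inside $B_{(0,0)}(\rho)$. That is exactly the place where the admissibility hypothesis $\ell\le\rho$ is used, together with the purely numerical fact $1/\sqrt{2}+1/4<1$. The remainder is routine lattice-point counting and the monotone floor estimate, and in fact the bound obtained is far from tight (the covering could be pushed to give $|\centers|=\Theta(\rho^2/\ell^2)$), but the stated $1+\rho^2/\ell^2$ is all that is needed downstream.
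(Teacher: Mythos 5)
Your proof is correct and follows essentially the same route as the paper's: both inscribe an axis-parallel square of half-side (at least) $\rho/2$ inside the disk of radius $\rho-\ell/4$, count the $(\frac{\ell}{2}\mathbb{Z})^2$ lattice points per line as $1+2\floor{\rho/\ell}$, and conclude via $\floor{\rho/\ell}\geq\frac{1}{2}\frac{\rho}{\ell}$. The only cosmetic difference is that the paper routes the containment argument through the intermediate disk of radius $\frac{3\rho}{4}$, whereas you verify directly that the square of half-side $\rho/2$ fits; both are fine.
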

\begin{proof}
    Let us consider $\squarereg$ the square included in the disk with center $(0,0)$ and radius $\frac {3\rho}4$, itself included in the disk with same center and with radius $\rho-\frac \ell4$, as pictured in Figure~\ref{fig:LB-nb-spheres}.
    Let us count the number of points of $\centers$ included in $\squarereg$.
    The semi-width of this square is $\frac{3\rho}{4\sqrt 2} \geq \frac \rho2$.
    Therefore, the number of points of $\centers$ on one vertical line of $\squarereg$ is $1 + 2\floor{\frac{\rho/2}{\ell/2}} = 1+2\floor{\frac \rho \ell}$.
    Since $\rho\geq \ell$, we have $\floor{\frac \rho \ell} \geq \frac 12 \frac \rho \ell$, and thus the number of points of $\centers$ in $\squarereg$ is at least $(1 + \frac \rho \ell)^2 \geq 1 + \rho^2/\ell^2$.
\end{proof}

\begin{lemma}
    \label{lem:spheres-connectivity}
    For any adjacent points $c,c' \in \centers_m$, for any two points $(p,p') \in D_c \times D_{c'}$, we have $|pp'| \leq \ell$.
\end{lemma}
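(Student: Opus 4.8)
The plan is to apply the triangle inequality directly, after reading off the distance between adjacent centers from the definition of the grid. First I would recall that $\centers_m\subseteq\centers\subseteq(\frac{\ell}{2}\mathbb{Z})^2$, and that, by definition, two grid points $c,c'$ are adjacent exactly when they differ by $\frac{\ell}{2}$ in one coordinate and agree on the other; hence $|cc'| = \frac{\ell}{2}$. Next, since $p\in D_c = B_c(\ell/4)$ and $p'\in D_{c'} = B_{c'}(\ell/4)$, we have $|pc|\le \ell/4$ and $|c'p'|\le \ell/4$. The triangle inequality then yields
\[
    |pp'| \;\le\; |pc| + |cc'| + |c'p'| \;\le\; \frac{\ell}{4} + \frac{\ell}{2} + \frac{\ell}{4} \;=\; \ell,
\]
which is precisely the claim.

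I do not expect any genuine obstacle here: the statement follows from the triangle inequality together with the spacing of the $\frac{\ell}{2}$-grid, with no case analysis or geometric subtlety required. Its only purpose is to certify, in combination with the connectivity of $\centers_m$ (and the fact that $\spheres_m\subset B_{(0,0)}(\rho)$), that the point set $\P(\A)$ constructed in the proof of Theorem~\ref{th:lower} has connectivity threshold at most $\ell$, so that the constructed instance is consistent with the tuple $(\ell,\rho,n)$; thus the lemma is best stated and dispatched in the one line above.
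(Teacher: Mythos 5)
Your proof is correct and essentially identical to the paper's: the paper's one-line argument (both $D_c$ and $D_{c'}$ lie in the disk of radius $\ell/2$ centered at $\frac{c+c'}{2}$, whose diameter is $\ell$) is just a repackaging of your triangle-inequality chain $|pp'|\le \ell/4+\ell/2+\ell/4$ using $|cc'|=\ell/2$ for adjacent grid points. No gap; nothing further needed.
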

\begin{proof}
    Both $D_c$ and $D_{c'}$ are included in the disk of center $\frac{c+c'}2$ and radius $\frac \ell2$.
\end{proof}

\subsection{About energy constraint}
\label{sec:impossibility}
Let us recall and prove Theorem~\ref{th:impossibility}:
\impossibility*
\begin{proof}
    We construct the $n$-point set in a similar way as in the proof of Theorem~\ref{th:lower}.
    Similarly as before, let us first present the situation where $n=1$.
    The position of robot $r_1$ in v depends of the behavior of algorithm \A. More precisely, it is the last position of $B_{(0,0)}(\ell)$ to be discovered by $s$.
    Therefore, $s$ can wake up robot $r_1$ only after it has discovered the entire disk $B_{(0,0)}(\ell)$, with area $\pi \ell^2$.
    At $t=0$, the discovered area is $\pi$, and corresponds to the disk of radius $1$ around $s$.
    During a move of amplitude $\delta$, robot $s$ discover a new area of at most $2\delta$, which mean that to discover the entire disk, $s$ has to move with a total amplitude of at least $\frac{\pi\ell^2 - \pi}2$, which concludes the proof.
    If $n>1$, then we can either place all the sleeping robots at the exact same position, either consider a small enough disk with radius $\varepsilon$ which is not discovered by $s$ before it has consumed its entire energy.
\end{proof}

\subsection{With energy constraint}\label{sec:lower-nrj}
Let us recall Theorem~\ref{th:lower-nrj}:
\lowernrj*

We first show how to define a $n$-point set $\P$ and a source $s$ with a prescribed eccentricity \ecc such that $\ellstar \leq \ell$, $\rhostar \leq \rho$ and $\ecc_\ell = \ecc$. We prove in Lemma~\ref{lem:path-constructivity} that the construction is valid.
We then prove Theorem~\ref{th:lower-nrj} using this construction and Theorem~\ref{th:lower}.

Let us briefly explain how we construct $(\P,s)$.
The source $s$ has position $\pos[s] = (0,0)$.
The main idea is to define the positions of points \P along a rectilinear path $\Pi$, that is a path that consists only of horizontal and vertical segments, defined such that any point on a horizontal segment is space out of at least $B + 1$ from any other horizontal segment.
Positions of $\P$ are spread over $\Pi$ such that the $\ell$-disk graph of $\P \cap \pos[s]$ is indeed connected.

Our construction of $\Pi$ is such its length determines the $\ell$-eccentricity of $(\P,s)$.
The length of $\Pi$ should match the prescribed $\ecc$, which yields some constraints over the values of $\ecc$, depending on the values of $\rho, n$ and $B$. 

\paragraph{Path definition} The rectilinear path $\Pi$ is defined by vertical segments of length $V = B+1$ and horizontal segments of length $H =  \rho/\sqrt{2}$.
Let $J = \floor{\ecc/(H + V)}$ be the number of vertical segments. 
We define $\Pi$ by the points $u_0, v_0, v_1, u_1, u_2, v_2,..., u_{J-1}, v_{J-1}, v_{J}, u_{J}$.
There is a vertical segment between $[v_jv_{j+1}]$ or $[u_ju_{j+1}]$ and a horizontal segment between $[u_jv_j]$. 
The position of points are more formally defined by the following coordinates.
\begin{equation*}
    \forall j \in [0, J]: \left|
    \begin{aligned}
      u_j &= (0, j(B+1))\\
      v_j &= (\frac{\rho}{\sqrt{2}}, j (B+1))
    \end{aligned}
    \right.
\end{equation*}
For $j \in [0, J[$ the $j$-th \emph{section} of $\Pi$ is a sequence of a horizontal segment $[u_jv_j]$ and a vertical segment, either $[v_jv_{j+1}]$ for even $j$ or $[u_ju{j+1}]$ if $j$ is odd.

If no point of $\Pi$ is at distance $\rho$ from $\pos[s]$ then we need to define an additional segment line between $v_0 = (\rho/\sqrt{2}, 0)$ and a point $w_0 = (\rho, 0)$ to spread out some points of $\P$ on $[v_0w_0]$ until $\rhostar = \rho$.
Conversely, note that $\Pi$ should fit in a square $S_{\rho}$ of width $\rho/\sqrt{2}$ whose bottom left corner is $\pos[s]$. 
Thus the sum of the length of vertical segments $JV$ can not exceed $\rho/\sqrt{2}$.
\begin{align}
    \frac{\ecc}{H + V}  &\leq \frac{H}{V} \\
    \ecc &\leq \frac{H^2 + V}{V}\\
    \ecc &\leq \frac{\rho^2}{2(B+1)} + 1
\end{align}

\paragraph{Placements of $\P$.} 
Let us set the first position $\pos[t]$ such that the sub-path of $\Pi$ between $\pos[s]$ and $\pos[t]$ has length exactly $\ecc$.
This point lies on the the $J$-th section of $\Pi$.
Other points of $\P$ are positioned on the subpath from $\pos[s]$ to $\pos[t]$ and on $[\pos[s]w_0]$ if needed.
The $\ell$-eccentricity of the resulting $(\P,s)$ must be $\ecc$.
This requires to carefully assign positions around a corner $\widehat{u_{j}v_jv_{j+1}}$ or $\widehat{v_ju_ju_{j+1}}$ in order to avoid any shortcut in the $\ell$-disk graph between horizontal and vertical segment.

We start by placing a subset $\P_1 \subset \P/\{\pos[t]\}$ of size $2(J-1)$ on the extremities of each segment except for $u_0$ where $\pos[s]$ lies and $v_J$ (\textit{resp.} $u_J$) if $J$ is even (\textit{resp.} odd).
This ensure that no points are positioned beyond $\pos[t]$.
Remaining positions are placed such that (1) there is no point at distance strictly less than $\ell$ from a point of $\P_1$ and (2) the resulting $(\P, s)$ has connectivity threshold $\ell$.

Note that this requirement about the connectivity threshold constrains the eccentricity of $(\P, s)$ as the number $n$ may not be always large enough to cover $\Pi$.
There must be at least $\ecc/\ell$ positions of $\P$ on $\Pi$ and $\rho(1 - 1/\sqrt{2}) \leqslant \rho/3\ell$ positions on $[v_0w_0]$.
Therefore the number of positions $n$ is such that $n \geqslant \ecc/\ell + \rho/3\ell$.
This is true whenever $\ecc \leqslant n\ell - \rho/3$.

\begin{lemma}
    \label{lem:path-constructivity}
    The construction of $(\P,s)$ such that $\rhostar = \rho$, $\ellstar = \ell$ and $\ecc_\ell = \ecc$.
\end{lemma}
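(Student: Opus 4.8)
The plan is to establish the three equalities $\rhostar=\rho$, $\ellstar=\ell$ and $\ecc_\ell=\ecc$ separately, relying throughout on one structural fact about $\Pi$: since consecutive horizontal segments of $\Pi$ are at vertical distance $V=B+1>\ell$ (recall $B>\ell$) and its vertical segments alternate between the two vertical sides of $S_\rho$, no edge of the $\ell$-disk graph of $\P\cup\set{s}$ can join points lying on two non-consecutive segments of $\Pi$. Combined with the careful placement of points around each corner, this forces the $\ell$-disk graph to behave, as far as reachability and weighted distances from $\pos[s]$ are concerned, like the path $\Pi$ itself with the single pendant branch $[v_0w_0]$ attached at $v_0$. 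I would first observe that the construction is well defined precisely in the range of $\ecc$ assumed in Theorem~\ref{th:lower-nrj}: its upper bound on $\ecc$ is what lets $\Pi$ fit inside $S_\rho$ (i.e.\ keeps the total length of the vertical segments at most $\rho/\sqrt2$), while $\ecc\le n\ell-\rho/3$ leaves enough points of $\P$ to place consecutive ones at distance at most $\ell$ all the way along $\Pi$ and along $[v_0w_0]$.

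\textbf{Radius and connectivity threshold.} Every point of $\P$ placed on $\Pi$ lies in $S_\rho$, hence at distance at most $\rho$ (the diagonal of $S_\rho$) from $\pos[s]=(0,0)$, and every point placed on $[v_0w_0]$ lies on the $x$-axis between $x=\rho/\sqrt2$ and $x=\rho$, again at distance at most $\rho$; since a point is placed at distance exactly $\rho$ (the far endpoint of $\Pi$ if it already attains $\rho$, otherwise $w_0=(\rho,0)$), we get $\rhostar=\rho$. For $\ellstar$: at scale $\delta=\ell$ the graph is connected, since consecutive points on each segment are at distance at most $\ell$, the corner points splice the two arms of each section, and $[v_0w_0]$ hangs off $v_0$; at any scale $\delta<\ell$ it is disconnected, since by the placement rule at least one consecutive pair along this one-dimensional structure is at distance exactly $\ell$, and by the structural fact together with the corner placement the only edges of length at most $\ell$ run between consecutive points of the structure, so deleting the length-$\ell$ edges cuts it. Hence $\ellstar=\ell$.

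\textbf{$\ell$-eccentricity.} Recall that $\ecc_\ell=\max_{p\in\P}d_\ell(\pos[s],p)$, the maximum weighted shortest-path distance in the $\ell$-disk graph, because the shortest-path (Dijkstra) tree attains the minimum weighted depth among spanning trees rooted at $\pos[s]$. For the upper bound: a point $p\in\P$ on $\Pi$ is reached from $\pos[s]$ by walking along $\Pi$ (whose consecutive points form $\ell$-disk-graph edges), so $d_\ell(\pos[s],p)$ is at most the arc-length of $p$ along $\Pi$, which is at most $\ecc$ because $\pos[t]$ sits at arc-length exactly $\ecc$ and no point of $\P$ is placed farther along; a point $p$ on $[v_0w_0]$ is reached via $\pos[s]\to v_0\to p$ of weighted length at most $\rho/\sqrt2+\rho(1-1/\sqrt2)=\rho\le\ecc$. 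Thus $\ecc_\ell\le\ecc$. For the lower bound it suffices to show $d_\ell(\pos[s],\pos[t])\ge\ecc$: by the structural fact no edge jumps between non-consecutive segments, and the corner placement — the corner point together with the next point on each arm placed at distance exactly $\ell$ from it, so that those two arm-points lie at distance $\ell\sqrt2>\ell$ from each other — guarantees that no edge joins the horizontal arm to the vertical arm of a section except the two edges incident to the corner point. Consequently every $\ell$-disk-graph path from $\pos[s]$ to $\pos[t]$ must traverse $\Pi$ monotonically, so its weighted length is at least the arc-length from $\pos[s]$ to $\pos[t]$, namely $\ecc$; combined with the upper bound this gives $\ecc_\ell=\ecc$.

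\textbf{Expected main obstacle.} The delicate step is the lower bound $\ecc_\ell\ge\ecc$, which hinges on ruling out every geometric shortcut. This requires writing down the exact placement of points near each corner $\widehat{u_jv_jv_{j+1}}$ and $\widehat{v_ju_ju_{j+1}}$ and carrying out a short case analysis of all pairwise distances between points lying on the horizontal arms, the vertical arms and the pendant branch — using $B+1>\ell$ for the arm-to-arm separations and $\ell\sqrt2>\ell$ at the corners, and checking that this placement can still be realised while keeping all consecutive distances at most $\ell$ (which is where the hypothesis $\ecc\le n\ell-\rho/3$ is needed). The radius and connectivity-threshold equalities are then routine consequences of the same picture.
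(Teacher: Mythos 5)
Your proposal is correct and follows essentially the same route as the paper: it verifies the three equalities separately, gets $\rhostar=\rho$ from the containment of $\Pi$ in $S_\rho$ plus one point at distance exactly $\rho$, and gets $\ecc_\ell=\ecc$ by arguing that the separation $V=B+1>\ell$ between horizontal segments (together with the corner placement) prevents any shortcut, so the $\ell$-disk graph is metrically the path $\Pi$ with the pendant $[v_0w_0]$. If anything you are more explicit than the paper, which compresses the no-shortcut argument and the $\ellstar=\ell$ claim into ``by construction.''
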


\begin{proof}
    Firstly we show $\rhostar = \rho$. 
    Since the path $\Pi$ is entirely contained in the square $S_{\rho}$, the farthest point from $\pos[s]$ to a point on $\Pi$ is the up right corner of $S_{\rho}$. 
    Its distance from $\pos[s]$ is equal to the diagonal of $S_{\rho}$ which is exactly $\rho$.
    If no points of $\Pi$ does not reach that corner, then we added the segment $[v_0w_0]$ which ensures that there is at least one point at distance $\rho$ from $\pos[s]$.
    
    We show that $\ecc_\ell = \ecc$.
    By Theorem~\ref{th:impossibility}, $B > \ell$, so the eccentricity of the $\ell$-disk graph of $\P$ is the maximum between the length of $\Pi$ and the length of $[\pos[s], w_0]$ which is $\rho$.
    Since by assumption $\ecc \geqslant \rho$, we get that the eccentricity of the $\ell$-disk graph of $\P$ is the length of $\Pi$.
    Let us compute the length of $\Pi$. By definition it is 

    \begin{align}
        J(H + V)= \floor{\frac{\ecc}{H + V}}(H+V)= \ecc
    \end{align}
     
    By construction we directly have $\ellstar = \ell$.
\end{proof}

Let us prove Theorem~\ref{th:lower-nrj}.
\begin{proof}
    Let us start by showing that the makespan of the instance $(\P, s)$ constructed above is $\Omega(\ecc)$.
    Any wake-up strategy can not be quicker than the eccentricity of $s$ in the $B$-disk graph of $(\P,s)$.
    In the $B$-disk graph any path from $\pos[s]$ to $\pos[t]$ goes entirely through section.
    This is because for $j \in [0, J[$ any point of $[u_jv_j]$ is at distance $B+1$ from any point of $[u_{j+1},v_{j+1}]$, so the path goes trough the points of the vertical segment $[u_j, u_{j+1}]$ or $[v_j, v_{j+1}]$ to reach $[u_{j+1},v_{j+1}]$.

    Such a path has a length longer than the sum of $|u_jv_j|$, that is $JH = \floor{\frac{\ecc}{H + V}} H$. 
    
    We have several cases depending the value of $J$:
    \begin{itemize}
        \item If $J=0$, the $\ell$-eccentricity of $(\P,s)$ is $[\pos[s], w_0] = \rho$ and we have $\rho \leq \ecc \leq H + V \leq 2\rho/\sqrt{2}$ therefore $\ecc_\ell = \rho = \Theta(\ecc)$
        \item If $J=1$, then $\rho \leq \ecc \leq 2(H + V) \leq 2\sqrt{2}\rho$ and $\ecc_\ell = \Theta(\ecc)$
        \item If $J \geq 2$, then $\ecc \geq 2(H+V)$
        \begin{align*}
            \floor{\frac{\ecc}{H + V}} H &\geq \frac{\ecc H}{H + V} - H\\
            &\geq \frac{H(\ecc - H - V)}{H+V} \geq \frac{\ecc}{4}
        \end{align*}
    \end{itemize}
    
    We conclude the proof by recalling that the lower bound with unconstrained energy naturally apply to dFTP with constrained energy
    (See Theorem~\ref{th:lower}), and by noticing that $\rho/\ell \leq \ecc_\ell/\ell \leq 12\rho^2/\ell^2$.
    Therefore the makespan of \A is $\Theta(\ecc_\ell + \ell^2\clog(\ecc_\ell/\ell))$.
\end{proof}

\bibliographystyle{alpha}
\bibliography{a.bib}

\end{document}